\newcolumntype{P}[1]{>{\footnotesize\centering\arraybackslash}p{#1}}
\newcommand{\PreserveBackslash}[1]{\let\temp=\\#1\let\\=\temp}
\newcolumntype{C}[1]{>{\footnotesize\PreserveBackslash\centering}p{#1}}
\DeclareMathOperator{\ones}{ones}
\newcommand{\DD}[0]{\ensuremath{D}}
\newcommand{\LL}[0]{\ensuremath{L}}
\newcommand{\wc}[0]{wc}
\newcommand{\BWT}[0]{XBWT}
\newcommand{\middlealg}[0]{mid}
\newcommand{\valnumber}[0]{val}
\newcommand{\countop}[0]{count}
\newcommand{\parentop}[0]{parent}
\newcommand{\select}[0]{select}
\newcommand{\prank}[0]{p\_rank}
\newcommand{\rank}[0]{rank}
\let\epsilon\varepsilon
\newlength{\commentWidth}
\let\oldnl\nl
\newcommand{\nonl}{\renewcommand{\nl}{\let\nl\oldnl}}
\definecolor{orange}{RGB}{235,90,0}
\definecolor{darkorange}{RGB}{175,30,0}
\definecolor{turkis}{RGB}{131,182,182}
\definecolor{darkturkis}{RGB}{31,82,82}
\definecolor{green}{RGB}{102,180,0}
\definecolor{darkgreen}{RGB}{51,90,0}
\definecolor{myblue}{RGB}{0,0,213}
\definecolor{mydarkblue}{RGB}{0,0,100}
\definecolor{mybrightblue}{HTML}{74B0E4}
\definecolor{mybrighterblue}{HTML}{B3EAFA}
\definecolor{lila}{RGB}{102,0,102}
\definecolor{darkred}{RGB}{139,0,0}
\definecolor{darkyellow}{RGB}{188,135,2}
\definecolor{brightgray}{RGB}{200,200,200}
\definecolor{darkgray}{RGB}{50,50,50}
\definecolor{amaranth}{rgb}{0.9, 0.17, 0.31}
\definecolor{alizarin}{rgb}{0.82, 0.1, 0.26}
\definecolor{amber}{rgb}{1.0, 0.75, 0.0}
\definecolor{green(ryb)}{rgb}{0.4, 0.69, 0.2}
\definecolor{hanblue}{rgb}{0.27, 0.42, 0.81}
\definecolor{grannysmithapple}{rgb}{0.66, 0.89, 0.63}
\newtheorem{theorem}{Theorem}[section]
\newtheorem{lemma}[theorem]{Lemma}
\newtheorem{definition}[theorem]{Definition}
\newtheorem{corollary}[theorem]{Corollary}
\newtheorem{remark}[theorem]{Remark}
\newtheorem{proposition}[theorem]{Proposition}
\newenvironment{acknowledgements}{%
  % Rename Abstract to Acknowledgements
  \begin{abstract}
}{%
  \end{abstract}
}
\title{New Entropy Measures for Tries with Applications to the XBWT}
\author{
 	Lorenzo Carfagna\thanks{Department of Computer Science, University of Pisa, Italy} \\ \texttt{lorenzo.carfagna@phd.unipi.it} \and
    Carlo Tosoni\thanks{DAIS, Ca' Foscari University of Venice, Italy} \\
    \texttt{carlo.tosoni@unive.it}
}
\date{}
\begin{document}

\maketitle              
\begin{abstract}
Entropy quantifies the number of bits required to store objects under certain given assumptions.
While this is a well established concept for strings, in the context of tries the state-of-the-art regarding entropies is less developed.
The standard trie worst-case entropy considers the set of tries with a fixed number of nodes and alphabet size.
However, this approach does not consider the frequencies of the symbols in the trie, thus failing to capture the compressibility of tries with skewed character distributions. 
On the other hand, the label entropy [FOCS '05], proposed for node-labeled trees, does not take into account the tree topology, which has to be stored separately.
In this paper, we introduce two new entropy measures for tries - worst-case and empirical - which overcome the two aforementioned limitations.
Notably, our entropies satisfy similar properties of their string counterparts, thereby becoming very natural generalizations of the (simpler) string case. 
Indeed, our empirical entropy is closely related to the worst-case entropy and is reachable through a natural extension of arithmetic coding from strings to tries.
Moreover we show that, similarly to the FM-index for strings [JACM '05], the XBWT of a trie can be compressed and efficiently indexed within our $k$-th order empirical entropy plus $o(n)$ bits, with $n$ being the number of nodes.
Interestingly, the space usage of this encoding includes the trie topology and the upper-bound holds for every $k$ sufficiently small, simultaneously.
This XBWT encoding is always strictly smaller than the original one [JACM '09] and we show that in certain cases it is \emph{asymptotically} smaller.
\end{abstract}

\begin{acknowledgements}
We would like to thank Sung-Hwan Kim for remarkable suggestions concerning the combinatorial problem addressed in the article.
Part of this research work was conducted while the authors were visiting the Department of Computer Science of the University of Helsinki.

\emph{Lorenzo Carfagna}: partially funded by the INdAM-GNCS Project CUP E53C24001950001, 
and by the PNRR ECS00000017 Tuscany Health Ecosystem, Spoke 6, CUP I53C22000780001, funded by the NextGeneration EU programme.

\emph{Carlo Tosoni}: funded by the European Union (ERC, REGINDEX, 101039208). Views and opinions expressed are however those of the authors only and do not necessarily reflect those of the European Union or the European Research Council Executive Agency. Neither the European Union nor the granting authority can be held responsible for them.
\end{acknowledgements}

{\let\thefootnote\relax\footnotetext{A preliminary and partial version of this article appeared in the \emph{Proceedings of the 32nd International Symposium on String Processing and Information Retrieval (SPIRE 2025)}, pp. 18–27, 2025.}}

\section{Introduction}

A \emph{cardinal tree} is an ordered tree where every node is conceptually represented by an array of $\sigma$ positions, and every position in that array is either null or stores a pointer to a child node.
Cardinal trees are equivalent to tries, that is, edge-labeled ordered trees over an alphabet of size $\sigma$, in which (i)~the labels of the edges outgoing from the same node are distinct, and (ii)~sibling nodes are ordered by their incoming label.
Tries are widely studied in the literature and are regarded as a central object in computer science, since they can be applied in a wide range of applications.
Indeed, tries can be used to represent a dictionary of strings while supporting important operations such as dictionary and prefix matching~\cite{AC75, SuffixTrees, originalArticleTries}.
For these reasons, several compact implementations for tries have been proposed in the literature.
The space complexity of many of these representations has been expressed in terms of the worst-case entropy $\mathcal{C}(n,\sigma)$ defined as $\log_2\frac{1}{n}\binom{n\sigma}{n-1}$, where $\frac{1}{n}\binom{n\sigma}{n-1}$ is the number of tries with $n$ nodes over an alphabet of size $\sigma$~\cite{cycleLemma, concreteMathematics}. 
For instance, Benoit et al.\ proposed a trie representation taking $\mathcal{C}(n,\sigma) + \Omega(n)$ bits~\cite{RepresentingTrees}.
This space occupation was later improved to $\mathcal{C}(n, \sigma) + o(n) + O(\log \log \sigma)$ bits by Raman et al.~\cite{RRR} and Farzan et al.~\cite{UniversalSuccinctTree?}.
All these static data structures were devised to support queries on the trie topology plus the cardinal query of retrieving the child labeled with the $i$-th character of the alphabet in constant time.
Moreover, researchers also proposed dynamic representations for tries~\cite{ArroyueloDynamicTries, Bonsai, DavoodiDynamicTries} with $\mathcal{C}(n,\sigma) + \Omega(n)$ space complexity, in bits.
The function $\mathcal{C}(n,\sigma)$ is a lower-bound for the number of bits required in the worst-case to represent a trie with $n$ nodes and alphabet size $\sigma$, however, $\mathcal{C}(n,\sigma)$ only exploits information about these parameters.  
As a consequence, this worst-case entropy does not take advantage of any information regarding the distribution of the edge labels in the trie, and therefore it fails to capture the compressibility of tries with skewed character distributions.
On the contrary, the situation is different when it comes to strings.
Indeed, the (string) empirical entropy is a widely used and established compressibility measure accounting also for the frequencies of the symbols in the input string~\cite{compactDataStructures}.
Thus, if the distribution of the characters in a string $S$ is skewed, then a statistical compressor, such as Huffman coding~\cite{HuffmanCoding} and arithmetic coding~\cite{magicAlgorithms}, can represent $S$ in a number of bits significantly smaller than $n\log \sigma$ bits, which is the string counterpart of the worst-case formula $\mathcal C(n,\sigma)$. 
Moreover, there exist several compressed indices for strings whose space complexity is expressed in terms of the $k$-th order empirical entropy of the input string~\cite{FullTextIndexesSurvey}. 
In particular, the famous FM-index~\cite{FMindex,FMI-Journal} compresses the Burrows-Wheeler transform~\cite{BWT} of a string in a space close to its high order empirical entropy, while at the same time supporting efficient indexing operations directly on the compressed string.
Later, a similar scenario emerged in the context of trees.
Indeed, Ferragina et al.~\cite{xBWTjournal} extended the original Burrows-Wheeler transform to node-labeled ordered trees, where this extension is known in the literature as XBWT.
The XBWT sorts the tree nodes based on the string reaching them from the root, and it represents the input tree with two coordinated arrays, accounting respectively for the labels and the topology of the tree.
Similarly as happens for strings, this first array can be compressed to a $k$-th order empirical (tree) entropy known as the label entropy $\mathcal{H}^{label}_k$~\cite{xBWTconf,xBWTjournal}.
In this setting, the $k$-length context of a label is the $k$-length string incoming to its corresponding tree node.
However, this label entropy considers only the information about the distribution of the symbols, without accounting for the tree topology.
The label entropy has also been applied to the more specific context of tries.
Indeed, other works~\cite{HonTrieLabelEntropy,Kosolobov2019} proposed compressed representations of the XBWT, specifically adapted for tries, and analysed their space usage in terms of the label entropy.
However, since the label entropy has originally been proposed for general trees, this measure does not capture the following intrinsic characteristic property of tries: the label distribution imposes constraints on the underlying trie topology.
Indeed, on tries, sibling nodes cannot have the same incoming label.
As a consequence of this, the more skewed the symbol distribution is, the more constrained the trie structure becomes.
For example, tries over a unary alphabet have a fixed shape.
For these reasons, we aim to introduce new trie entropy measures, which 1)~contrary to the label entropy, are specifically designed for tries and consider also the trie structure, and 2)~refine the previous worst-case entropy $\mathcal C(n,\sigma)$ by exploiting also the distribution of the edge symbols.

\paragraph{Our contribution.} Motivated by the above reasons, in this paper we study the combinatorial problem of finding the number $\lvert \mathcal{U} \rvert$ of tries having a given \emph{symbol distribution} $\{n_c\mid c\in \Sigma\}$, that is the tries having $n_c$ edges labeled by the symbol $c$ drawn for the input alphabet $\Sigma$.
During a bibliographic search, we found a recent technical report~\cite{TRProdinger} showing $\lvert \mathcal{U}\rvert = \frac{1}{n}\prod_{c \in \Sigma}\binom{n}{n_c}$, with a proof based on generating functions~\cite{analyticCombinatorics}.
In this paper we prove this closed formula by showing a simple bijection between these tries and a particular class of binary matrices.
As an immediate consequence, we obtain a worst-case entropy $\mathcal{H}^{\wc}=\log_2 \lvert \mathcal{U}\rvert$ which takes into account the distribution of the trie labels.
Since $\mathcal{H}^{\wc}$ considers a particular subset of tries, it is easy to observe that $\mathcal{H}^{\wc}$ refines the previous entropy $\mathcal C(n,\sigma)$. 
In addition, $\mathcal{H}^{\wc}$ can be significantly smaller if the symbol distribution is skewed.
We also define a notion of $k$-th order empirical entropy $\mathcal H_k$, specific for tries.
Similarly to the label entropy $\mathcal{H}^{label}_k(\mathcal{T})$, our empirical measure $\mathcal{H}_k$ divides the nodes based on their $k$-length context, but it also encodes the topology of the trie.
Moreover, we emphasise that these new entropies $\mathcal{H}^{\wc}$ and $\mathcal{H}_{k}$ share several properties with their string counterparts, and that they can be consequently regarded as their natural extensions in the domain of tries.
For instance, we show that similarly to strings, it holds that $n\mathcal{H}_0=\mathcal{H}^{\wc}+O(\sigma \log n)$, and for every $k\geq 0$ we have $\mathcal{H}_{k+1} \leq \mathcal{H}_{k}$.
Notably, we show that our empirical entropy is reachable, i.e., given any fixed $k$ we can encode any trie $\mathcal{T}$ with at most $n\mathcal{H}_k(\mathcal{T}) + 2$ bits plus the space $(\sigma + 1)\sigma^k\lceil \log n\rceil$ in bits required to store the (conditional) probabilities of its labels.
We prove this by extending the notorious arithmetic coding from strings to tries.
Regarding the comparison with $\mathcal{H}^{label}_k$, we show that in the context of tries, our $k$-th order empirical entropy is always upper-bounded by this $k$-order label entropy plus an $1.443n$ term.
Interestingly, the $1.443n$ term is smaller than the worst-case entropy $2n - \Theta(\log n)$ of unlabeled ordered trees~\cite{compactDataStructures}, expressing the bits required to store the trie topology separately regarded as an ordered tree of $n$ nodes. 
We also show that there exists an infinite family of tries for which our measure is asymptotically smaller than $\mathcal{H}^{label}_k$ alone.
These results have direct impact on the analysis of known compressed representations for the XBWT of a trie and the Aho-Corasick automaton~\cite{HonTrieLabelEntropy, Kosolobov2019, AC75, BelazzouguiAhoCorasick}.
Indeed, we refine the space analysis of an existing XBWT representation for tries~\cite[Lemma 6]{Kosolobov2019} to $n\mathcal{H}_k(\mathcal{T}) + o(n)$ bits.
This bound assumes $\sigma \leq n^\varepsilon$ and holds simultaneously for every $k \leq \max\{0, \alpha \log _\sigma n - 2 \}$, where $\alpha,\varepsilon$ are any two constants satisfying $0 \leq \alpha,\varepsilon < 1$.
Notably, for tries, this space complexity is strictly smaller than the one used by the original XBWT for trees~\cite[Section 3]{xBWTjournal}.
We also show how this representation can count the number of nodes reached by a string pattern $p$ in $O(\lvert p\rvert (\log\sigma + \log\log n))$ time, where the path can be anchored anywhere in the trie.
For poly-logarithmic alphabets, we propose another XBWT trie representation which improves this time complexity to $O(m)$.
Moreover, we show that if every symbol $c$ occurs at most $n/2$ times, then these representations are \emph{succinct}, that is they take at most $\mathcal{H}^{\wc}+o(\mathcal{H}^{\wc})$ bits of space.
These results prove a strong connection between our trie empirical entropy and the XBWT of a trie.
This connection is analogous to the one existing between the original BWT and the empirical entropy of a string, as these data structures can be regarded as natural extensions of the FM-index for strings to tries~\cite{FMindex, FMI-Journal}.
Finally, we compare our empirical entropy with the trie repetitiveness measure $r$ proposed by Prezza~\cite{rindexTrie}, which counts the number runs in the trie XBWT.
Due to~\cite[Theorem 3.1]{rindexTrie}, we show that for every trie and integer $k \geq 0$ it holds that $r \leq n\mathcal{H}_k + \sigma^{k+1}$.
We also provide an infinite family of tries for which $r = \Theta(n\mathcal{H}_0) = \Theta(n)$ holds.
Also these results are consistent with the ones known for the corresponding string measures~\cite{NavarroRipetitivenessMeasures}.

The rest of the article is organised as follows.
In Section~\ref{sec: Notation} we introduce the notation used in the paper.
In Section~\ref{section counting tries} we address the problem of counting the number of tries with a given symbol distribution.
In Section~\ref{sec:entropy}, we introduce our entropy measures and we analyse their relation. 
We present arithmetic coding for tries and we show the reachability of our higher-order empirical entropy.
Then we compare this measure with the label entropy.
In Section~\ref{section: XBWT}, we recall the XBWT of a trie and we analyse the XBWT representation in~\cite{Kosolobov2019} in terms of our empirical measure.
In Section~\ref{sec: fmindex}, we show how these XBWT representations can efficiently support count queries directly in the compressed format and we prove their succinctness.
In Section~\ref{sec: r index} we compare our $k$-th order entropy with the trie repetitiveness measure $r$.
Finally, in Section~\ref{sec: conclusions}, we provide some concluding remarks summarising the contents of the article.
 
\section{Notation}\label{sec: Notation}

\paragraph{Strings and matrices.}
In the following we refer with $\Sigma$ a finite alphabet of size $\sigma$, totally ordered according to a given relation $\preceq$.
For every integer $k\geq 0$ we define $\Sigma^k$ as the set of all length-$k$ strings with symbols in $\Sigma$ where $\Sigma^0=\{\epsilon\}$, i.e., the singleton set containing the empty string $\epsilon$.
Moreover, we denote by $\Sigma^*$ the set of all finite strings over the alphabet $\Sigma$, i.e., the union of all $\Sigma^k$ for every $k\geq 0$.
We extend $\preceq$ \emph{co-lexicographically} to the set $\Sigma^*$, i.e., by comparing the  characters from right to left and considering $\epsilon$ as the smallest string in $\Sigma^*$.
With $[n]$ we denote the set $\{1,..,n\}\subseteq \mathbb{N}$ and given a set $X$, $\lvert X \rvert$ is the cardinality of $X$; similarly, given a string $S\in \Sigma^k$, then $\lvert S \rvert=k$ is its length.
Unless otherwise specified, all logarithms are base $2$ and denoted by $\log x$, furthermore we assume that $0\log(x/0)=0$ for every $x \geq 0$.
The \emph{zero-th order empirical entropy} $\mathcal{H}_0(S)$ of a string $S\in \Sigma^n$ is defined as $\sum_{c\in \Sigma} (n_c/n) \log (n/n_c)$ where $n_c$ is the number of times the symbol $c$ occurs in $S$~\cite{compactDataStructures}.
This entropy is generalised to higher orders as follows~\cite{compactDataStructures}.
We say that $S[i]$ has context $w\in \Sigma^k$ if $S[i]$ immediately follows an occurrence of $w$ in $S$.
Let $S_w$ be the string obtained by concatenating (in any order) all the characters $S[i]$ having context $w$, then for $k\geq 0$ the $k$-\emph{th order empirical entropy} of $S$ is $\mathcal{H}_k(S)=\sum_{w\in \Sigma^k} (\lvert S_w\rvert/n)\mathcal{H}_0(S_w)$.
Given a matrix $M$ we denote the $i$-th row/column of $M$ as $M[i][-]$ and $M[-][i]$, respectively.
In the following, given a binary matrix $M$, we denote with $\ones(M')$ the number of entries equal to $1$ in a submatrix $M'$ of $M$.
We work in the RAM model, where we assume that the word size is $\Theta(\log n)$, where $n$ is the input dimension.

\paragraph{Tries.}\label{subsection definitions on tries} In this paper, we work with \emph{cardinal trees}, also termed tries, i.e., edge-labeled ordered trees in which (i)~the labels of the edges outgoing from the same node are distinct, and (ii)~sibling nodes are ordered by their incoming label.
We denote by $\mathcal{T} = (V,E)$ a trie over an alphabet $\Sigma$, where $V$ is set of nodes with $\lvert V\rvert=n$ and $E$ is the set of edges with labels drawn from $\Sigma$.
We also denote with $n_c$ the number of edges labeled by the character $c \in \Sigma$.
Note that since $\mathcal{T}$ is a tree we have that $\sum_{c \in \Sigma} n_c = \lvert E \rvert= n-1$.
We denote by $\lambda(u)$ the label of the incoming edge of a node $u \in V$. 
If $u$ is the root, we define $\lambda(u) = \#$, where $\#$ is a new special symbol not labeling any edge, which according to $\preceq$, is smaller than all the other characters of $\Sigma$.
Given a node $u \in V$, the function $\pi(u)$ returns the parent node of $u$ in $\mathcal{T}$, where $\pi(u) = u$ if $u$ is the root.
Furthermore, the function $out(u)$ returns the set of labels of the edges outgoing from node $u$, i.e., $out(u) = \{c \in \Sigma: (u,v,c) \in E\; \text{for some}\; v \in V \}$.
Note that a node $u$ is a leaf if $out(u)=\emptyset$ and is internal otherwise.
We define the \emph{depth} of a node $u$ as the number of edges we traverse to reach $u$ from the root and the height $h$ of a trie $\mathcal{T}$ as the maximum depth among its nodes where by definition the root of $\mathcal{T}$ has depth $0$.

\paragraph{Bitvectors.} Let $B$ be a bitvector of size $m$ with $n$ entries equal to $1$.
We now recall some fundamental operations on bitvectors, namely \emph{rank}, \emph{partial rank}, and \emph{select} queries.
A (full) \emph{rank} query on $B$, denoted by $\rank(i,B)$, returns the number of $1$s in $B$ up to position $i$ (included), with $i \in [m]$.
For every $i \in [m]$ a \emph{partial rank} query, denoted by $\prank(i, B)$, outputs $\rank(i,B)$ if $B[i] = 1$, and $-1$ otherwise.
On the other hand, a \emph{select query}, denoted by $\select(i,B)$, returns the position of the $i$-th occurrence of $1$ in $B$, where $i \in [n]$.
An \emph{indexable dictionary} (ID) representation for $B$ is a data structure supporting \emph{select} and \emph{partial rank} queries on $B$ in constant time. If an ID representation also supports (full) rank queries in $O(1)$ time, then we call it a \emph{fully} indexable dictionary (FID) representation\footnote{In some articles, FIDs are required to be able to perform rank and select in $O(1)$ time also on the complementary bitvector (see for instance~\cite{RRR}).
However, in this paper we are not interested in these operations.}.
Note that both rank and partial rank queries can be used to determine the value of $B[i]$ for every $i\in [m]$.
We also recall that an ID representation can easily support full rank queries in logarithmic time by means of a binary search on $B$ as follows.
Suppose we want to answer the query $\rank(i,B)$, for some $i \in [m]$, and let $l = 1$ and $h = n$.
Consider the variable $r$, initially set to $0$, and which at the end of the algorithm execution will store the value $\rank(i,B)$. 
We can compute $\select(j,B)$, where $j = l + \lfloor (h - l)/ 2 \rfloor$, in constant time.
If $\select(j,B) = i$, then $\rank(i,B) = j$, and consequently we stop the algorithm execution and we return $j$.
On the other hand, if $\select(j,B) < i$, then we know that $\rank(i,B) \geq j$, in this case we set $r = j$ and $l = j + 1$ and we repeat the operation.
Finally, if $\select(j,B) > i$, then $\rank(i,B) < j$ and we can set $h = j - 1$ for the next iteration of the binary search.
The algorithm execution eventually stops when $l > h$ holds, and in this case we return the value $r$.
Since at every iteration of this binary search we either return $\rank(i,B)$ or halve the number of ones taken into account, it follows that the next remark holds true.
\begin{remark}\label{remark:IDrank}
An ID representation of a bitvector $B$ of size $m$ and $n$ entries set to $1$ can support $\rank(i, B)$, for every $i \in [m]$, in $O(\log n)$ time.
\end{remark}
In this paper, we extensively use the ID representation proposed by Raman et al.~\cite[Theorem 4.6]{RRR}, which uses a number of bits upper-bounded by the formula $\log \binom{m}{n} + o(n) + O(\log \log m)$.
In addition, to represent $B$ and support full rank and select queries on it we use the data structure proposed by Pătrașcu~\cite[Theorem 2]{succincter}.
This solution takes at most $\log \binom{m}{n} + m /(\frac{\log m}{t})^t+ \tilde{O}(m^{3/4})$ bits of memory and supports the two operations in $O(t)$ time.
Consequently, if $t$ is a constant with $t > 0$, then this data structure becomes a FID representation for the input bitvector.

\begin{lemma}\cite{RRR, succincter}\label{thm:RRR}
Given an arbitrary bitvector $B$ of size $m$ with $n$ ones, there exist:
\begin{enumerate}
    \item\label{thm:RRR property 2} an ID representation of $B$ taking at most $\log \binom{m}{n} + o(n) + O(\log \log m)$ bits.
    \item\label{thm:RRR property 1} a FID representation of $B$ taking at most $\log \binom{m}{n} + m / (\frac{\log m}{t})^t + \tilde{O}(m^{3/4})$ bits, where $t$ is any constant satisfying $t > 0$.
\end{enumerate}
\end{lemma}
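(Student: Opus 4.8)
The plan is to read off both items directly from the two structures recalled in the paragraph preceding the statement, matching each against the definitions of ID and FID representations given in Section~\ref{sec: Notation}; no new construction is needed.

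First I would dispatch the ID representation. Raman et al.~\cite[Theorem 4.6]{RRR} build, for any bitvector $B$ of length $m$ with $n$ ones, a structure occupying at most $\log\binom{m}{n} + o(n) + O(\log\log m)$ bits that answers $\select$ and $\prank$ on $B$ in $O(1)$ time. Since an ID representation is by definition exactly a structure supporting $\select$ and $\prank$ in constant time, the first item follows immediately. Next I would dispatch the FID representation. Pătrașcu~\cite[Theorem 2]{succincter} gives, for any parameter $t$, a structure occupying at most $\log\binom{m}{n} + m/(\frac{\log m}{t})^{t} + \tilde{O}(m^{3/4})$ bits and supporting (full) $\rank$ and $\select$ on $B$ in $O(t)$ time. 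Fixing any constant $t>0$ makes the query time $O(1)$; a constant-time full rank query trivially yields a constant-time partial rank query (and lets us recover any bit via $B[i]=\rank(i,B)-\rank(i-1,B)$), so the resulting structure is a FID representation with exactly the claimed space bound, establishing the second item.

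The only point requiring genuine care — and the closest this argument comes to an obstacle — is the bookkeeping needed to line up the cited statements with ours: one should verify that~\cite{succincter} indeed provides $\select$ and not merely $\rank$, that the asymptotic terms imported from~\cite{RRR} and~\cite{succincter} are expressed with respect to the same parameters ($n$ for the $o(\cdot)$ term, $m$ for the $O(\log\log m)$ and $\tilde{O}(m^{3/4})$ terms) as in our statement, and that both works permit $n$ and $m$ in the full ranges used later in the paper. Once this routine matching is carried out, the lemma is immediate.
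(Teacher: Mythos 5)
Your proposal is correct and matches the paper's treatment: the lemma is a packaging of Raman et al.~\cite[Theorem 4.6]{RRR} and P\u{a}tra\c{s}cu~\cite[Theorem 2]{succincter}, and the paper justifies it exactly as you do, in the paragraph immediately preceding the statement (the Raman et al.\ structure is an ID by definition, and fixing a constant $t>0$ turns P\u{a}tra\c{s}cu's structure into a FID). Your added remarks on checking that the asymptotic terms are measured in the right parameters are sensible bookkeeping but introduce nothing beyond what the paper already asserts.
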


\section{On the number of tries with a given symbol distribution}\label{section counting tries}

This section is dedicated to prove the following result.

\begin{theorem}\label{theorem number of tries}
Let $\mathcal{U}$ be the set of tries with $n$ nodes and labels drawn from an alphabet $\Sigma$, where each symbol $c \in \Sigma$ labels $n_c$ edges, then $|\mathcal{U}| = \frac{1}{n}\prod_{c \in \Sigma} \binom{n}{n_c}$
\end{theorem}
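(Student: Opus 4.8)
The plan is to set up a bijection between the tries in $\mathcal{U}$ and a distinguished subclass of $n\times\sigma$ binary matrices, and then to show that this subclass consists of exactly one representative per orbit of the row-rotation action on the set of all such matrices with the prescribed column sums; the factor $\tfrac1n$ will come out of the cycle lemma, exactly as in the classical formula $\tfrac1n\binom{n\sigma}{n-1}$.

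First I would associate to every trie $\mathcal{T}\in\mathcal{U}$ a binary matrix. Fix the convention that the nodes $u_1,\dots,u_n$ are listed in DFS preorder, visiting the children of each node in sibling order (equivalently, in increasing $\preceq$-order of their incoming labels); a BFS order would work equally well. Define $M_\mathcal{T}\in\{0,1\}^{n\times\sigma}$ by setting $M_\mathcal{T}[i][c]=1$ iff $c\in out(u_i)$. Then column $c$ of $M_\mathcal{T}$ has exactly $n_c$ ones by definition, and the sequence of row sums $(\lvert out(u_1)\rvert,\dots,\lvert out(u_n)\rvert)$ is precisely the \L{}ukasiewicz word of the ordered tree obtained from $\mathcal{T}$ by forgetting the edge labels, i.e.\ $\sum_{i\le j}\lvert out(u_i)\rvert\ge j$ for every $j<n$ while $\sum_{i\le n}\lvert out(u_i)\rvert=n-1$. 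Call a matrix $M\in\{0,1\}^{n\times\sigma}$ with column sums $n_c$ \emph{good} if its row-sum sequence satisfies these inequalities. The inverse map sends a good $M$ to the trie obtained by first reconstructing, from the \L{}ukasiewicz word of row sums, the unique ordered tree on $n$ nodes having that DFS out-degree sequence, and then labelling, for each $u_i$, its children in sibling order by the elements of $\{c:M[i][c]=1\}$ listed in increasing $\preceq$-order; condition (i) holds because a node's children receive distinct labels, and (ii) holds by construction. One checks, using the classical bijection between ordered trees and \L{}ukasiewicz words, that these two maps are mutually inverse, so $\lvert\mathcal{U}\rvert$ equals the number of good matrices.

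Next I would count good matrices. Let $\mathcal{M}$ be the set of all binary $n\times\sigma$ matrices whose $c$-th column has exactly $n_c$ ones; choosing the positions of the ones column by column gives $\lvert\mathcal{M}\rvert=\prod_{c\in\Sigma}\binom{n}{n_c}$. The cyclic group $C_n$ acts on $\mathcal{M}$ by rotating the $n$ rows, and this action preserves all column sums, hence maps $\mathcal{M}$ to itself. A short argument shows every orbit has size exactly $n$: if a nontrivial rotation fixes some $M\in\mathcal{M}$, then the rows of $M$ are periodic with some proper period $p\mid n$, so every column sum, and therefore the total $n-1$, is a multiple of $n/p\ge 2$; since $n/p\mid n$ as well, this contradicts $\gcd(n,n-1)=1$. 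By the cycle lemma \cite{cycleLemma} (in the reverse-and-complement form: a nonnegative integer sequence $d_1,\dots,d_n$ with $\sum_i d_i=n-1$ has exactly one cyclic rotation for which $\sum_{i\le j}d_i\ge j$ holds for all $j<n$), each $C_n$-orbit in $\mathcal{M}$ contains exactly one good matrix. Hence the number of good matrices is $\lvert\mathcal{M}\rvert/n=\tfrac1n\prod_{c\in\Sigma}\binom{n}{n_c}$, which by the previous paragraph equals $\lvert\mathcal{U}\rvert$.

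The counting of $\lvert\mathcal{M}\rvert$ and the orbit-size computation are routine; the point that needs the most care is verifying that the trie-to-matrix and matrix-to-trie maps are well defined and mutually inverse, which hinges on fixing the traversal and sibling-order conventions consistently and on the classical ordered-tree/\L{}ukasiewicz-word correspondence. The other delicate step is invoking the cycle lemma in exactly the right form, getting the direction of the partial-sum inequalities correct; everything after that is bookkeeping. As sanity checks, the formula yields $1$ for a unary alphabet (the unique path), it yields $n^{n-2}$ when the $n-1$ edge labels are pairwise distinct, and summing $\tfrac1n\prod_c\binom{n}{n_c}$ over all symbol distributions with $\sum_c n_c=n-1$ recovers $\tfrac1n\binom{n\sigma}{n-1}$ via $[x^{n-1}](1+x)^{n\sigma}$.
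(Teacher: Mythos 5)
Your proposal is correct and follows essentially the same route as the paper: a bijection between tries and binary matrices with prescribed margins whose out-degree sequence forms a \L{}ukasiewicz path, followed by the cycle-lemma/rotation argument showing each orbit of size $n$ contains exactly one valid matrix. The only cosmetic differences are the transposed matrix orientation and your self-contained $\gcd(n,n-1)=1$ periodicity argument for the orbit sizes, where the paper instead cites a known lemma stating that all $n$ cyclic shifts of a sequence summing to $-1$ are distinct.
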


The same problem has already been addressed by Prezza (see Section~3.1 of~\cite{rindexTrie}) and Prodinger~\cite{TRProdinger}.
Prezza claimed that $\lvert \mathcal{U} \rvert$ is equal to $\prod_{c \in \Sigma}\binom{n}{n_c}$, however, this formula overestimates the correct number of tries by a multiplicative factor $n$.
On the other hand, Prodinger obtained the correct number in a technical report using the Lagrange Inversion Theorem~\cite{analyticCombinatorics}.
In this paper, we give an alternative proof using a simple bijection between these tries and a class of binary matrices. 
For the more general problem of counting tries with $n$ nodes and $\sigma$ characters, the corresponding formula $\mathcal{S}(n,\sigma)$ is well-known and in particular it is $\mathcal{S}(n,\sigma)=\frac{1}{n}\binom{n\sigma}{n-1}$~\cite{concreteMathematics, cycleLemma}. 
We now introduce the definition of set $\mathcal{M}$, which depends on $\mathcal{U}$.

\begin{definition}[Set $\mathcal{M}$]\label{Set M}
    We define $\mathcal{M}$ as the set of all $\sigma \times n$ binary matrices $M$, such that, for every $i \in [\sigma]$, it holds that $\ones(M[i][-]) = n_c$, with $c$ being the $i$-th character of $\Sigma$ according to its total order $\preceq$.
\end{definition}

Note that $\ones(M) = n - 1$, trivially follows from $\sum_{c \in \Sigma} n_c = n-1$.
In order to count the elements in the set $\mathcal U$, we now define a function $f$ mapping each trie $\mathcal{T}$ into an element of $M \in \mathcal{M}$.
According to our function $f$, the number of ones in the columns of $M$ encodes the topology of the trie, while the fact that $M$ is binary encodes the standard trie labeling constraint. 

\begin{definition}[Function $f$]
    We define the function $f : \mathcal{U} \rightarrow \mathcal{M}$ as follows: given a trie $\mathcal{T} \in \mathcal U$ consider its nodes $u_{1}, \ldots, u_{n}$ in pre-order visit.
    Then $M = f(\mathcal{T})$ is the $\sigma \times n$ binary matrix such that $M[i][j] = 1$ if and only if $c \in out(u_{j})$, where $c$ is the $i$-th character of $\Sigma$ according to $\preceq$.
\end{definition}

Observe that, given a trie $\mathcal{T}$, the $i$-th column $M[-][i]$ of the corresponding matrix $M=f(\mathcal{T})$ is the characteristic bitvector of the outgoing labels of node $u_i$ (under the specific ordering on $\Sigma$) and in particular it holds that $\ones(M[-][i])$ is the out-degree of $u_i$.
We observe that the function $f$ is injective (see Remark~\ref{remark: injective}), however, $f$ is not surjective in general: some matrices $M \in \mathcal{M}$ may not belong to the image of $f$ because during the inversion process, connectivity constraints could be violated (see Figure~\ref{fig1}).
To characterize the image of $f$ and to prove its injectivity, we define the following two sequences.

\begin{definition}[Sequences $\DD$ and $\LL$]\label{arrays D and L}
    For every $M \in \mathcal{M}$ we define the integer sequences $\DD$ and $\LL$ of length $n$, such that $\DD[i] = \ones(M[-][i]) - 1$ and $\LL[i] = \sum_{j = 1}^{i} \DD[j]$ for every $i \in [n]$.
\end{definition}

We can observe that if $M \in f(\mathcal{U})$, then if we consider the unique trie $\mathcal{T}=f^{-1}(M)$, $\DD[i]$ is the out-degree of the $i$-th node in pre-order visit minus one, namely $\DD[i] = \rvert out(u_{i})\lvert - 1$.
Consequently, we have that $\LL[i] = \sum_{j = 1}^{i}\lvert out(u_j) \rvert - i$, and therefore $\LL[i] + 1$ denotes the total number of ``pending'' edges immediately after the pre-order visit of the node $u_i$, where the $+1$ term stems from the fact that the root has no incoming edge.
In other words, $\LL[i] + 1$ denotes the number of edges whose source has already been visited, while their destination has not.
We note that for every $M \in \mathcal{M}$, it holds that $\LL[n] = -1$, since $\ones(M) = n - 1$.
In the following, we show that the sequence $\LL$ can be used to determine if $M \in f(\mathcal{U})$ holds for a given matrix $M \in \mathcal{M}$.
In fact, we will see that if $M \in f(\mathcal{U})$, the array $\LL$ is a so-called \emph{Łukasiewicz path} as defined in the book of Flajolet and Sedgewick~\cite{analyticCombinatorics} (see Section~I. 5. ``Tree Structures'').
Specifically, a Łukasiewicz path $\mathcal{L}$ is a sequence of $n$ integers satisfying the following conditions.
(i)~$\mathcal{L}[n] = -1$, and for every $i \in [n-1]$, it holds that (ii)~$\mathcal{L}[i]\geq 0$, and (iii)~$\mathcal{L}[i+1] - \mathcal{L}[i] \geq -1$.
Łukasiewicz paths can be used to encode the topology of a trie since it is known that their are in bijection with unlabeled ordered trees of $n$ nodes~\cite{analyticCombinatorics}.
Specifically, given an ordered unlabeled tree $\mathcal{T}'$ the $i$-th point in its corresponding Łukasiewicz path $\mathcal{L}$ is $\mathcal{L}[i]= \sum_{j = 1}^{i} (\lvert out(u_j)\rvert -1)$, i.e., $\mathcal{L}$ is obtained by prefix-summing the sequence formed by the nodes out-degree minus $1$ in pre-order.
The reverse process consists in scanning $\mathcal{L}$ left-to-right and appending the current node $u_i$ of out-degree $\mathcal{L}[i]-\mathcal{L}[i-1]+1$ to the deepest pending edge on the left; obviously $u_1$ is the root of $\mathcal{T}'$ where we assume $\mathcal{L}[0]=0$.

By using the notion of Łukasiewicz paths is easy to see that $f$ is injective. Consider two distinct tries $\mathcal{T}$ and $\mathcal{T'}$ and their matrices $M = f(\mathcal T)$ and $M' = f(\mathcal{T}')$. If the trie topologies differ, the sequences $\LL$ and $\LL'$ corresponding to $M$ and $M'$ respectively are different Łukasiewicz paths. This implies that there exists $i$ such that $\DD[i]\neq \DD'[i]$ and therefore $M[-][i]\neq M'[-][i]$ since $\ones(M[-][i]) \neq \ones (M'[-][i])$.
Otherwise, $\mathcal{T}$ and $\mathcal{T'}$ differ by an outgoing label of their respective $i$-th node in pre-order, for some $i\in [n]$, and consequently again $M[-][i]\neq M'[-][i]$ necessarily holds due to the different edge labels.

\begin{remark}\label{remark: injective}
The function $f$ is injective.
\end{remark}

The next result states that the tries with a fixed number of occurrences of symbols are in bijection with the matrices of $\mathcal{M}$ whose corresponding array $\LL$ is a \emph{Łukasiewicz path}.
Similar results have been proved in other articles~\cite{aProblemOfArrangements, cycleLemma, roteCountingDegrees}.

\begin{restatable}[]{lemma}{lemmai}\label{characterizing f}
A binary matrix $M \in \mathcal{M}$ belongs to $f(\mathcal{U})$ if and only if, the corresponding array $\LL$ is a Łukasiewicz path.
\end{restatable}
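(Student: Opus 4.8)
The plan is to prove the two directions of the biconditional, with the forward direction being essentially a restatement of the observations already made and the reverse direction requiring the real work.

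For the forward direction ($M \in f(\mathcal U) \implies \LL$ is a Łukasiewicz path), I would let $\mathcal T = f^{-1}(M)$ and use the identity noted in the text: $\LL[i] = \sum_{j=1}^{i}(\lvert out(u_j)\rvert - 1)$, where $u_1,\ldots,u_n$ is the pre-order visit of $\mathcal T$. This is precisely the prefix-sum encoding of the ordered tree underlying $\mathcal T$, so it is a Łukasiewicz path by the known bijection between ordered trees and Łukasiewicz paths; concretely, condition (i) $\LL[n] = -1$ follows from $\ones(M) = n-1$, condition (iii) $\LL[i+1]-\LL[i] = \DD[i+1] = \lvert out(u_{i+1})\rvert - 1 \geq -1$ is immediate, and condition (ii) $\LL[i] \geq 0$ for $i < n$ follows because after visiting the first $i$ nodes in pre-order there is always at least one pending edge (the subtree is not yet complete), i.e.\ $\LL[i]+1 \geq 1$.

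For the reverse direction ($\LL$ a Łukasiewicz path $\implies M \in f(\mathcal U)$), I would explicitly construct a trie $\mathcal T$ with $f(\mathcal T) = M$. Since $\LL$ is a Łukasiewicz path, it corresponds via the standard inverse map to an ordered unlabeled tree $\mathcal T'$ on $n$ nodes: scan $\LL$ left to right, and attach node $u_i$ (of out-degree $\DD[i]+1 = \ones(M[-][i])$) to the deepest currently-pending edge, with $u_1$ the root. I then label the edges of $\mathcal T'$: for each node $u_j$, its out-degree is exactly $\ones(M[-][j])$, so I assign to its $\ones(M[-][j])$ outgoing edges the characters $\{c \in \Sigma : M[i][j]=1,\ c \text{ the } i\text{-th character}\}$, in increasing order of $\preceq$ from left to right. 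This yields a well-defined edge-labeled ordered tree in which (i) edges out of the same node carry distinct labels (they are distinct characters by construction) and (ii) siblings are ordered by incoming label (we placed them in $\preceq$-order), so $\mathcal T$ is a genuine trie; moreover $n_c$ is respected since row $i$ of $M$ has $n_c$ ones distributed over the columns. Finally, by construction the pre-order visit of $\mathcal T$ is $u_1,\ldots,u_n$ and $out(u_j)$ is exactly the support of column $M[-][j]$, so $f(\mathcal T) = M$.

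The main obstacle is the reverse direction, and specifically verifying that the ordered-tree reconstruction from a Łukasiewicz path is consistent with the \emph{pre-order} labeling of columns — i.e.\ that the node we call $u_i$ in the Łukasiewicz reconstruction really is the $i$-th node in the pre-order traversal of the reconstructed tree. This is a standard fact about the Łukasiewicz encoding (the encoding is defined \emph{via} pre-order, so its inverse reconstructs nodes in pre-order), so I would either cite it from Flajolet–Sedgewick~\cite{analyticCombinatorics} or include a short inductive argument: attaching each new node to the deepest pending edge is exactly the rule that makes the children of a node be processed immediately and recursively, which is the defining property of pre-order. Once that correspondence is pinned down, matching $out(u_j)$ with the support of $M[-][j]$ and checking the trie axioms is routine, and injectivity has already been established in Remark~\ref{remark: injective}, so the lemma gives the claimed bijection between $f(\mathcal U)$ and Łukasiewicz-path matrices.
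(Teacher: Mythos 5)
Your proposal is correct and follows essentially the same route as the paper: the forward direction identifies $\LL$ with the Łukasiewicz path of the unlabeled tree underlying $f^{-1}(M)$, and the reverse direction inverts $\LL$ into an ordered tree and labels the edges of each node $u_j$ with the characters indexed by the ones of column $M[-][j]$ in $\preceq$-order. Your version is somewhat more explicit about verifying the three Łukasiewicz conditions and the pre-order consistency of the reconstruction, but these are the same steps the paper takes or cites.
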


\begin{proof}
$(\Rightarrow):$ Consider the trie $\mathcal{T} = (V,E) = f^{-1}(M)$ and its underlying unlabeled ordered tree $\mathcal{T}'$, i.e., $\mathcal{T}'$ is the ordered tree obtained by deleting the labels from $\mathcal{T}$.
Obviously, the corresponding nodes of $\mathcal{T}$ and $\mathcal{T}'$ in pre-order have the same out-degree. Therefore, by Definition~\ref{arrays D and L}, the array $\LL$ of $M$ is equal to the Łukasiewicz path of $\mathcal{T}'$.
 
$(\Leftarrow):$ 
Given $M\in \mathcal{M}$ and its corresponding array $\LL$, we know by hypothesis that $\LL$ is a Łukasiewicz path. Therefore, we consider the unlabeled ordered tree $\mathcal{T}'$ obtained by inverting $\LL$.
Let $u_i$ be the $i$-th node of $\mathcal{T}'$ in pre-order, we label the $j$-th left-to-right edge outgoing from $u_i$ with the symbol corresponding to the $j$-th top-down $1$ in $M[-][i]$. The resulting order labeled tree is a trie $\mathcal{T}''$ and by construction it follows that $f(\mathcal{T}'')=M$.
\end{proof}

To count the number of tries in $\mathcal{U}$, we now introduce the concept of \emph{rotations}.

\begin{definition}[Rotations]\label{def rotations}
    For every matrix $M \in \mathcal{M}$ and integer $r \geq 0$, we define the $r$-th rotation of $M$, denoted by $M^{r}$, as $M^{r}[-][j] = M[-][(j+r-1)\bmod{n} +1]$ for every $j \in [n]$.
\end{definition}

Informally, a rotation $M^{r}$ of a matrix $M \in \mathcal{M}$ is obtained by moving the last $r$ columns of $M$ at its beginning, and therefore $M^0=M$. Moreover, since $M^r = M^{r\bmod {n}}$, in the following we consider only rotations $M^r$ with $r\in [0,n-1]$.
Note also that, for every $r \geq 0$, we know that $M^r \in \mathcal{M}$ as rotations do not change the number of entries equal to $1$ in a row. 
Moreover, we can observe that the array $\DD$ of a rotation $M^r$ corresponds to a cyclic permutation of the array $\DD'$ of $M$.
Next we show that all $n$ rotations of a matrix $M\in \mathcal{M}$ are distinct and exactly one rotation has an array $\LL$ which is Łukasiewicz.
To do this, we recall a known result from~\cite{roteCountingDegrees} and~\cite[Section~I. 5. ``Tree Structures'', Note I.47]{analyticCombinatorics}.

\begin{lemma}\cite[Lemma 2]{roteCountingDegrees}\label{rotation principle}
    Consider a sequence of integers $A = a_1, \ldots, a_n$ with $\sum_{i=1}^{n}a_i = -1$. Then (i)~all $n$ cyclic permutations of $A$ are distinct, and (ii)~there exists a unique cyclic permutation $A' = a_1', \ldots , a_n'$ of $A$ such that $\sum_{i=1}^{j}a_i' \geq 0$, for every $j \in [n-1]$
\end{lemma}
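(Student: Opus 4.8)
\textbf{Proof plan for Lemma~\ref{rotation principle} (the cycle lemma).}

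The plan is to prove the two parts separately, handling (ii) first since part (i) follows from essentially the same construction. Let $S_j = \sum_{i=1}^{j} a_i$ denote the prefix sums of $A$, with the convention $S_0 = 0$, so that $S_n = -1$. For part (ii), I would consider the behavior of these prefix sums and single out the cyclic shift that starts just after the position where the prefix sums attain their \emph{last} (rightmost) minimum. Concretely, let $m = \min_{0 \le j \le n} S_j$ and let $p$ be the largest index in $\{0, 1, \ldots, n\}$ with $S_p = m$; note $p \le n-1$ because $S_n = -1 < 0 \le S_0$ forces the minimum to be negative only if... — more carefully, since $S_n = -1$ we have $m \le -1$, and if $p = n$ then $S_n = m$, but then $S_0 = 0 > m$ contradicts nothing; however one checks $p \ne n$ is not automatic, so instead I take $p$ to be the largest index in $\{0,\ldots,n-1\}$ achieving the minimum over $\{0,\ldots,n-1\}$, and argue $S_n=-1$ does not create a new obstruction. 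The cyclic permutation $A' = a_{p+1}, a_{p+2}, \ldots, a_n, a_1, \ldots, a_p$ (indices mod $n$) then has prefix sums $S'_j = S_{p+j} - S_p$ for $j \le n-p$ and $S'_j = (S_n - S_p) + S_{j-(n-p)} = -1 - S_p + S_{j - n + p}$ afterwards; by the choice of $p$ as the location of the last minimum, every such $S'_j$ for $j \in [n-1]$ is $\ge 0$ (strictly $>0$ on the second segment because $S_p$ was the \emph{last} minimum, so $S_i > S_p$ for $i > p$, and $\ge 0$ on the first because $S_p$ was a minimum). This gives existence.

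For uniqueness in part (ii): suppose two distinct cyclic shifts, starting at positions $p$ and $q$ with $0 \le p < q \le n-1$, both have all proper prefix sums nonnegative. Write out the prefix-sum condition for the shift starting at $q$ in terms of the original prefix sums $S$: it requires $S_j \ge S_q$ for all $j$ in one range and $S_j \ge S_q - 1$ (equivalently $S_j \ge S_q$ since these are integers... careful, $S_j > S_q - 1$ means $S_j \ge S_q$) in the complementary range. Combining the condition from shift $p$ (which forces $S_q \ge S_p$, reading position $q$ as a prefix endpoint of that shift) with the condition from shift $q$ (which forces $S_p \ge S_q$ via the wrap-around, using $S_n = -1$) yields $S_p = S_q$, and then a careful look at the wrap-around inequality shows the "$-1$" is actually consumed, producing $0 \ge 0$ tightness that contradicts $p \ne q$. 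I expect this bookkeeping — keeping straight which prefix-sum inequalities are strict and which allow equality, and correctly threading the global constraint $S_n = -1$ through the wrap-around — to be the main obstacle; it is the classical subtlety of the cycle lemma and must be done with care rather than hand-waved.

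Part (i), that all $n$ cyclic shifts are distinct, is then immediate: if the shift by $p$ equals the shift by $q$ with $p \ne q$, then $A$ is periodic with period $d = \gcd(p-q, n) < n$, so $n/d$ copies of the length-$d$ block sum to $-1$, i.e. $\frac{n}{d} \cdot (\text{block sum}) = -1$; since $n/d \ge 2$ is an integer dividing $-1$, this is impossible. Alternatively, and more in the spirit of the lemma, distinctness follows from uniqueness in (ii): if two shifts coincided, the unique Łukasiewicz shift would be "reachable" from two different starting offsets, contradicting that exactly one offset works — but the cleaner divisibility argument above is self-contained and I would present that one.

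Finally I would remark that this is exactly the statement we need: applied with $a_i = \DD[i]$ (whose sum is $\LL[n] = -1$ by the observation that $\ones(M) = n-1$), it tells us that among the $n$ distinct rotations $M^0, \ldots, M^{n-1}$ of any $M \in \mathcal{M}$, exactly one has prefix-sum sequence $\LL$ satisfying $\LL[j] \ge 0$ for all $j \in [n-1]$, and combined with $\LL[n] = -1$ and condition (iii) (which holds automatically since consecutive differences are $\DD[i] \ge -1$, as each column of a binary matrix has at least... no — $\DD[i] = \ones(M^r[-][i]) - 1 \ge -1$ trivially), this single rotation is the unique one whose $\LL$ is a Łukasiewicz path.
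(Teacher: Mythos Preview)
The paper does not prove this lemma: it is quoted verbatim as a known result from Rote~\cite{roteCountingDegrees} (with a pointer also to Flajolet--Sedgewick~\cite{analyticCombinatorics}), and the paper immediately proceeds to derive Corollary~\ref{unique matrix in image} from it. So there is no ``paper's own proof'' to compare against; your plan is a self-contained proof of a result the authors simply import.

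That said, your plan is the standard cycle-lemma argument and is correct in outline, with one slip worth flagging. For existence in~(ii) you should take the \emph{first} (leftmost) position $p$ where the prefix sums attain their minimum, not the last. With the rotation $a_{p+1},\ldots,a_n,a_1,\ldots,a_p$ you need $S_i \ge S_p$ for $i>p$ (a ``minimum'' condition) but $S_i \ge S_p+1$, i.e.\ $S_i>S_p$, for $1\le i<p$ (a \emph{strict} condition coming from the wrap-around term $S_n-S_p=-1-S_p$). The strict inequality on indices \emph{below} $p$ is exactly what ``leftmost minimum'' buys; your write-up assigns the strict inequality to the wrong segment (``$S_i>S_p$ for $i>p$'' is the last-minimum property, but the second segment involves $i<p$). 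The paper itself, in the paragraph following Corollary~\ref{unique matrix in image}, explicitly says the correct rotation corresponds to the \emph{leftmost} minimum of $\LL$, confirming this. You already anticipated that the strict-vs-nonstrict bookkeeping is the delicate point, so this is easy to repair.

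Your argument for~(i) via periodicity and the divisibility obstruction $\tfrac{n}{d}\cdot(\text{block sum})=-1$ with $n/d\ge 2$ is clean and correct, and your final paragraph correctly reconstructs how the lemma feeds into Corollary~\ref{unique matrix in image}.
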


The following corollary is a consequence of the above lemma and an example is shown in Figure~\ref{fig1}.

\begin{restatable}[]{corollary}{corollaryi}\label{unique matrix in image}
    For every matrix $M \in \mathcal{M}$ all its rotations are distinct and only one of them belongs to $f(\mathcal{U)}$.
\end{restatable}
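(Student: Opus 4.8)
The statement to prove is Corollary~\ref{unique matrix in image}: for every $M \in \mathcal{M}$, all $n$ rotations $M^0, \ldots, M^{n-1}$ are distinct, and exactly one of them lies in $f(\mathcal{U})$. The natural approach is to reduce everything to the sequence $\DD$ of out-degrees-minus-one and apply the cyclic-permutation principle (Lemma~\ref{rotation principle}) to it. First I would set $A = \DD[1], \ldots, \DD[n]$, the sequence associated with $M$ as in Definition~\ref{arrays D and L}; since $\ones(M) = n-1$, we have $\sum_{i=1}^n \DD[i] = (n-1) - n = -1$, so $A$ satisfies the hypothesis of Lemma~\ref{rotation principle}.

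**Distinctness.** As already observed in the text just before the corollary, the array $\DD$ of the rotation $M^r$ is the cyclic permutation of $A$ that starts at position $r+1$ (indices mod $n$). By part~(i) of Lemma~\ref{rotation principle}, these $n$ cyclic permutations of $A$ are pairwise distinct. Hence the arrays $\DD$ of $M^0, \ldots, M^{n-1}$ are pairwise distinct, and since $\DD[j]$ is determined by $\ones(M^r[-][j])$, two rotations with distinct $\DD$-arrays must themselves be distinct matrices. This gives distinctness of the $n$ rotations.

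**Exactly one in the image.** By Lemma~\ref{characterizing f}, $M^r \in f(\mathcal{U})$ if and only if its array $\LL^{(r)}$ is a Łukasiewicz path, i.e.\ (writing $\LL^{(r)}[j] = \sum_{i=1}^j \DD^{(r)}[i]$ where $\DD^{(r)}$ is the $\DD$-array of $M^r$) we need $\LL^{(r)}[j] \geq 0$ for all $j \in [n-1]$ and $\LL^{(r)}[n] = -1$. The last condition holds automatically for every rotation, since the total sum of any cyclic permutation of $A$ is $-1$. So $M^r \in f(\mathcal{U})$ iff the prefix sums of the cyclic permutation $\DD^{(r)}$ are nonnegative up to index $n-1$. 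By part~(ii) of Lemma~\ref{rotation principle}, there is exactly one cyclic permutation $A'$ of $A$ with all proper prefix sums nonnegative; this $A'$ equals $\DD^{(r^*)}$ for a unique $r^* \in [0,n-1]$ (uniqueness of $r^*$ follows from part~(i), the distinctness of the cyclic permutations). Therefore exactly one rotation, namely $M^{r^*}$, has a Łukasiewicz $\LL$-array, and so exactly one rotation lies in $f(\mathcal{U})$.

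**Main obstacle.** There is no real obstacle here — the work is already done by Lemmas~\ref{characterizing f} and~\ref{rotation principle}; the only points requiring a little care are (a) verifying that the $\DD$-array of $M^r$ is indeed the cyclic shift of $A$ starting at $r+1$ (a direct unwinding of Definition~\ref{def rotations}), and (b) matching the indexing convention in Lemma~\ref{rotation principle} (prefix sums nonnegative for $j \in [n-1]$) with the Łukasiewicz conditions from Lemma~\ref{characterizing f}, which is immediate once one notes $\LL^{(r)}[n] = -1$ is free. So the proof is essentially a bookkeeping assembly of the two cited lemmas.
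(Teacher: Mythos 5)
Your proposal is correct and follows essentially the same route as the paper's proof: reduce to the sequence $\DD$, use part~(i) of Lemma~\ref{rotation principle} for distinctness and part~(ii) together with Lemma~\ref{characterizing f} for the existence and uniqueness of the rotation in $f(\mathcal{U})$. The only nuance is that your ``i.e.'' restatement of the Łukasiewicz condition silently drops requirement~(iii), namely $\LL[i+1]-\LL[i]\geq -1$; this holds automatically here because the increments equal $\ones(M^r[-][i+1])-1\geq -1$, a point the paper verifies explicitly.
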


\begin{proof}
    Consider the sequence $\DD$ of matrix $M$. 
    We know that the sequence of integers $\DD[1], \ldots , \DD[n]$ sums to $-1$.
    Consequently, by Lemma~\ref{rotation principle}, it follows that all the cyclic permutations of $\DD$ are distinct and thus all the corresponding rotations of $M$ are distinct too.
    By Lemma~\ref{characterizing f}, to conclude the proof it remains to be shown that there exists a unique rotation $M^r$ whose array $\LL$ is a Łukasiewicz path.
    By Lemma~\ref{rotation principle}, we know there exists a unique rotation $M^r$ for which its corresponding array $\LL$ is nonnegative excluding the last position, which is a necessary condition for $\LL$ to be a Łukasiewicz path.
    Therefore $M^r$ is the only candidate that can belong to $f(\mathcal{U})$.
    Moreover, we know that $\LL[n] =-1$ and by Definition~\ref{arrays D and L} it holds that $\LL[i+1]-\LL[i]=\ones(M^r[-][i])-1 \geq -1$ for every $i \in [n-1]$. 
    Since these observations prove that $\LL$ is Łukasiewicz, it follows that $M^r \in f(\mathcal{U})$.
\end{proof}

\makeatletter
\newcommand\notsotiny{\@setfontsize\notsotiny\@vipt\@viipt}
\makeatother

\begin{figure}[h]
	\centering
	\renewcommand{\arraystretch}{0.85}
	\begin{subfigure}{0.5\textwidth}
    \vspace{0pt}
	\begin{tabular}{ P{1.5em} | P{1.5em} P{1.5em} P{1.5em} P{1.5em} P{1.5em} P{1.5em} P{1.5em} |}
		%\hline
		%\multicolumn{5}{|c|}{\textbf{Encoding}} \\
		   & \scriptsize{1} & \scriptsize{2} & \scriptsize{3} & \scriptsize{4} & \scriptsize{5} & \scriptsize{6} & \scriptsize{7} \\
		 \hline
		 \footnotesize{a} & 0 & 1 & 0 & 0 & 0 & 0 & 0 \\
		 \footnotesize{b} & 1 & 0 & 0 & 0 & 1 & 1 & 0 \\
		 \footnotesize{c} & 0 & 1 & 0 & 0 & 0 & 1 & 0 \\
		 \hline

	\end{tabular}
		
	\begin{tabular}{ P{1.5em} P{1.5em} P{1.5em} P{1.5em} P{1.5em} P{1.5em} P{1.5em} P{1.5em} }

		  $\DD$ & 0 & 1 & -1 & -1 & 0 & 1 & -1 \\
		$\LL$ & 0 & 1 & 0 & -1 & -1 & 0 & -1 \\

	\end{tabular}

	\vspace{1.4em}
	
	\begin{tabular}{ P{1.5em} | P{1.5em} P{1.5em} P{1.5em} P{1.5em} P{1.5em} P{1.5em} P{1.5em} |}
	%\hline
	%\multicolumn{5}{|c|}{\textbf{Encoding}} \\
	& \scriptsize{1} & \scriptsize{2} & \scriptsize{3} & \scriptsize{4} & \scriptsize{5} & \scriptsize{6} & \scriptsize{7} \\
	\hline
	\footnotesize{a} & 0 & 0 & 0 & 0 & 1 & 0 & 0 \\
	\footnotesize{b} & 1 & 1 & 0 & 1 & 0 & 0 & 0 \\
	\footnotesize{c} & 0 & 1 & 0 & 0 & 1 & 0 & 0 \\
	\hline
	
	\end{tabular}
		
	\begin{tabular}{ P{1.5em} P{1.5em} P{1.5em} P{1.5em} P{1.5em} P{1.5em} P{1.5em} P{1.5em} }
		
		$\DD$ & 0 & 1 & -1 & 0 & 1 & -1 & -1 \\
		$\LL$ & 0 & 1 & 0 & 0 & 1 & 0 & -1 \\
		
	\end{tabular}

	\end{subfigure}
	\hspace{0.025\textwidth}
	\begin{subfigure}{0.4\textwidth}
    %\vspace{0pt}
	\centering
    \raisebox{-0.8em}{
	\begin{tikzpicture}[
		dim/.style={minimum size=1em, font={\notsotiny}}, 
		scale = 0.7,
		ghost/.style={draw=none},
        writes/.style={font={\small}},
		dots/.style={text centered, font={\small}},
        >={Stealth[length=1.4mm, width=1.2mm]}]

		\node[state, dim] (1) at (0,0) {1};
		\node[state, dim] (2) at (0,-1.2) {2};
		\node[state, dim] (3) at (-0.7,-2.4) {3};
		\node[state, dim] (4) at (0.7,-2.4) {4};
		\node[state, dim] (5) at (0,-3.6) {5};
		\node[state, dim] (6) at (0,-4.8) {6};
		\node[state, dim] (7) at (-0.7,-6) {7};
		\node[state, dim, ghost] (8) at (0.7,-6) {};

        \node[dots] (9) at (0, -7.25) {(a)};
		
		\draw[->] (1) to node [right, writes] {$b$} (2);
		\draw[->] (2) to node [right, writes] {$a$} (3);
		\draw[->] (2) to node [right, writes] {$c$} (4);
		\draw[->] (5) to node [right, writes] {$b$} (6);
		\draw[->] (6) to node [right, writes] {$b$} (7);
		\draw[->] (6) to node [right, writes] {$c$} (8);

        \begin{scope}[xshift=3.5cm]

		\node[state, dim] (1) at (0,0) {1};
		\node[state, dim] (2) at (0,-1.2) {2};
		\node[state, dim] (3) at (-0.7,-2.4) {3};
		\node[state, dim] (4) at (0.7,-2.4) {4};
		\node[state, dim] (5) at (0.7,-3.6) {5};
		\node[state, dim] (6) at (0,-4.8) {6};
		\node[state, dim] (7) at (1.4,-4.8) {7};
			
        \node[dots] (9) at (0, -7.25) {(b)};
            
		\draw[->] (1) to node [right, writes] {$b$} (2);
		\draw[->] (2) to node [right, writes] {$b$} (3);
		\draw[->] (2) to node [right, writes] {$c$} (4);
		\draw[->] (4) to node [right, writes] {$b$} (5);
		\draw[->] (5) to node [right, writes] {$a$} (6);
		\draw[->] (5) to node [right, writes] {$c$} (7);
        \end{scope}
		\end{tikzpicture}
        }
	\end{subfigure}

    \caption{In the top-left corner, the figure shows a $3\times7$ binary matrix $M$ with $n_a = 1$, $n_b = 3$, and $n_c = 2$, and its sequences $\DD$ and $\LL$ (here we omit the special character $\#$). 
    Since $\LL$ is not Łukasiewicz, by inverting $M$, we obtain the object~(a) which is not a trie.
    This happens because there are no pending edges to which the pre-order node $5$ can be attached since $\LL[4] = -1$.
    However, the rotated matrix $M^3$ showed in the bottom-left corner produce the valid trie shown in~(b).
    In this case $\LL$ is Łukasiewicz.
    }
    \label{fig1}
\end{figure}
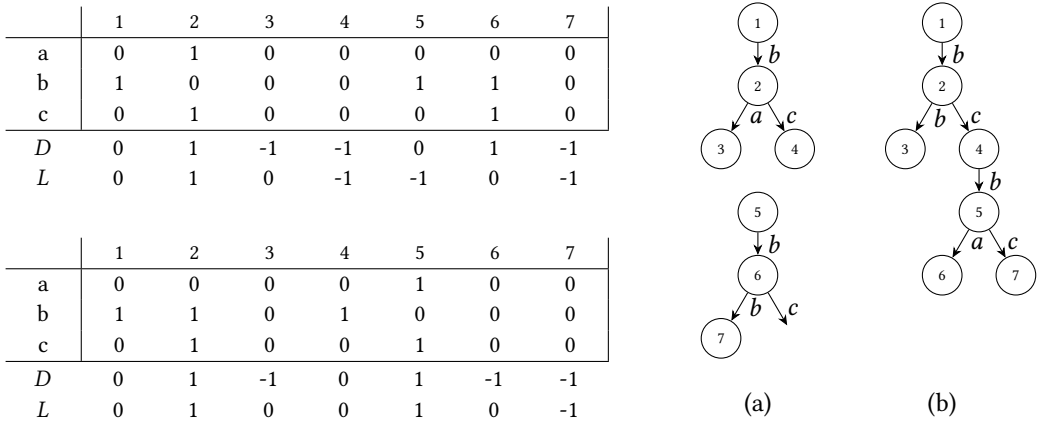

Moreover, it is known~\cite[Section~I. 5. ``Tree Structures'', Note I.47]{analyticCombinatorics} that given a Łukasiewicz path $\LL$ ending in $\LL[n]=-1$ and assuming $\LL[0] = 0$, allowed to assume negative values in the intermediate positions, we can make it a proper Łukasiewicz path by rotating its corresponding sequence $\DD$ by $n-i$ columns, where $i$ is the position of the leftmost minimum in $\LL$. Therefore the unique rotation $M^r$ belonging to $f(\mathcal{U)}$, mentioned in the above proof, is the one corresponding to $r=n-i$.
In the following, we prove the main result of this section.

\begin{proof}[Proof of Theorem~\ref{theorem number of tries}]
    Let $\sim$ be the relation over $\mathcal{M}$ defined as follows.
    For every $M, \bar M \in \mathcal{M}$ we have that $M \sim \bar M$ holds if and only if, there exists an integer $i$ such that $M^i = \bar M$.
    It is easy to show that $\sim$ is an equivalence relation over $\mathcal{M}$.
    Indeed, $\sim$ satisfies; (i)~reflexivity ($M \sim M$ since $M^0 = M$), (ii)~symmetry (if $M^i = \bar M$, then $\bar M^{(n-i)} = M$), and (iii)~transitivity (if $M^i = \bar M$ and $\bar M^j = \hat M$, then $M^{i+j} = \hat M$).
    By Corollary~\ref{unique matrix in image}, we know that for every $M \in \mathcal{M}$ the equivalence class $[M]_{\sim}$ contains $n$ elements with a unique matrix $\bar{M} \in [M]_{\sim}$ such that $\bar M \in f(\mathcal{U})$.
    Therefore, since by Remark~\ref{remark: injective} $f$ is injective, we deduce that $\lvert \mathcal{U} \rvert = \lvert f(\mathcal{U}) \rvert = \lvert \mathcal{M} \rvert / n$.
    Finally, since it is easy to observe that $\lvert \mathcal{M} \rvert = \prod_{c \in \Sigma}\binom{n}{n_c}$ it follows that $\lvert \mathcal{U} \rvert = \frac{1}{n}\prod_{c \in \Sigma}\binom{n}{n_c}$.
\end{proof}

Finally, we note that by summing the quantities $\frac{1}{n}\prod_{c \in \Sigma} \binom{n}{n_c}$ over all the possible symbol distributions $\{n_c \mid c \in \Sigma \}$, satisfying the tree constraint $\sum_{c\in \Sigma} n_c= n - 1$, due to the generalized Vandermonde's convolution~\cite[Exercise 5.62]{concreteMathematics}, we obtain the well-known formula $\mathcal{S}(n,\sigma) = \frac{1}{n}\binom{n\sigma}{n-1}$~\cite{concreteMathematics, cycleLemma} representing the total number of tries with $n$ nodes over an alphabet of size $\sigma$.

\section{Entropy of a trie}\label{sec:entropy}

In this section we define a worst-case entropy and an empirical entropy for tries taking into account the distribution of the edge labels.
Later, we show the reachability of our empirical measure and we compare it with the label entropy of Ferragina et al.~\cite{xBWTconf}.
Our worst-case formula directly derives from the number of tries $\lvert \mathcal U \rvert $ considered in the previous section.

\begin{definition}[Worst-case entropy]\label{def: worst-case}
    The worst-case entropy $\mathcal{H}^{\wc}$ of a trie $\mathcal{T}$ with symbol distribution $\{n_c \mid c \in \Sigma \}$ is defined as;
    
    \[\mathcal{H}^{\wc}(\mathcal{T})  = \log_2 \lvert \mathcal{U} \rvert =  \sum_{c \in \Sigma} \log \binom{n}{n_c}  - \log n\]
\end{definition}

Clearly the above worst-case entropy is a lower-bound for the minimum number of bits (in the worst-case) required to encode a trie with a given number of symbol occurrences.
Note that $\mathcal{H}^{\wc}(\mathcal{T})$ is always smaller than or equal to the worst-case entropy $\mathcal{C}(n,\sigma) = \log \frac{1}{n} \binom{n\sigma}{n - 1}$~\cite{concreteMathematics, RRR} of tries having $n$ nodes over an alphabet of size $\sigma$ and it is much smaller when the symbol distribution is skewed.
Next we introduce the notion of \emph{$0$-th order empirical entropy} of a trie $\mathcal{T}$. 

\begin{definition}[$0$-th order empirical entropy]\label{def: 0-entropy}
    The $0$-th order empirical entropy $\mathcal{H}_0$ of a trie $\mathcal{T}$ with symbol distribution $\{n_c \mid c \in \Sigma \}$ is defined as;

    \[
    \mathcal{H}_{0}(\mathcal{T}) = \sum_{c \in \Sigma}\frac{n_c}{n}\log\left(\frac{n}{n_c}\right) + \frac{n - n_c}{n}\log\left( \frac{n}{n-n_c} \right)
    \]
\end{definition}

Given a bitvector $B_{n,m}$ of size $n$ with $m$ ones, the worst-case entropy is defined as $\mathcal{H}^{\wc}(B_{n,m}) = \log \binom{n}{m}$ while the corresponding $0$-th order empirical entropy is $\mathcal{H}_0(B_{n,m}) = \frac{m}{n}\log(\frac{n}{m}) + \frac{n-m}{n}\log(\frac{n}{n-m})$~\cite{compactDataStructures, elementsOfInformationTheory}.
Furthermore, the following known inequalities relate the above entropies of a binary sequence: it holds that $n\mathcal{H}_0(B_{n,m})-\log(n+1) \leq \mathcal{H}^{\wc}(B_{n,m})\leq n\mathcal{H}_0(B_{n,m})$~\cite[Equation 11.40]{elementsOfInformationTheory}.
As a consequence of these inequalities, by summing the quantities $\log\binom{n}{n_c}$ of Definition~\ref{def: worst-case}, for every possible $c \in \Sigma$, we directly obtain the following result.

\begin{lemma}\label{lemma:emp_wc}
    For every trie $\mathcal{T}$ with $n$ nodes, the following relations hold $n\mathcal{H}_0(\mathcal{T})-\sigma\log(n+1)-\log n \leq \mathcal{H}^{\wc}(\mathcal{T})\leq n\mathcal{H}_0(\mathcal{T})- \log n$
\end{lemma}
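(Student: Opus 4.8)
The plan is to derive the claimed two-sided bound on $\mathcal{H}^{\wc}(\mathcal{T})$ directly from Definition~\ref{def: worst-case}, which gives the exact identity
\[
\mathcal{H}^{\wc}(\mathcal{T}) = \sum_{c \in \Sigma} \log\binom{n}{n_c} - \log n,
\]
by applying to each term $\log\binom{n}{n_c}$ the known per-symbol sandwich relation between the binomial and the binary $0$-th order entropy already recalled in the text, namely
\[
n\mathcal{H}_0(B_{n,n_c}) - \log(n+1) \;\le\; \log\binom{n}{n_c} \;\le\; n\mathcal{H}_0(B_{n,n_c}),
\]
where $\mathcal{H}_0(B_{n,n_c}) = \tfrac{n_c}{n}\log\tfrac{n}{n_c} + \tfrac{n-n_c}{n}\log\tfrac{n}{n-n_c}$.

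First I would sum these inequalities over all $c \in \Sigma$. The key observation is that, by Definition~\ref{def: 0-entropy}, $\sum_{c\in\Sigma} n\mathcal{H}_0(B_{n,n_c})$ is exactly $n\mathcal{H}_0(\mathcal{T})$, since the definition of $\mathcal{H}_0(\mathcal{T})$ is precisely the per-symbol sum of the binary-entropy terms $\tfrac{n_c}{n}\log\tfrac{n}{n_c} + \tfrac{n-n_c}{n}\log\tfrac{n}{n-n_c}$. Summing the lower bounds gives $\sum_c \log\binom{n}{n_c} \ge n\mathcal{H}_0(\mathcal{T}) - \sigma\log(n+1)$ because there are $\sigma = |\Sigma|$ terms, each losing at most $\log(n+1)$; summing the upper bounds gives $\sum_c \log\binom{n}{n_c} \le n\mathcal{H}_0(\mathcal{T})$.

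Then I would subtract $\log n$ from all three parts to pass from $\sum_c \log\binom{n}{n_c}$ to $\mathcal{H}^{\wc}(\mathcal{T})$, obtaining
\[
n\mathcal{H}_0(\mathcal{T}) - \sigma\log(n+1) - \log n \;\le\; \mathcal{H}^{\wc}(\mathcal{T}) \;\le\; n\mathcal{H}_0(\mathcal{T}) - \log n,
\]
which is exactly the statement. There is essentially no obstacle here: the proof is just "sum a known inequality $\sigma$ times and shift by $\log n$." The only point requiring a moment's care is confirming that the definition of $\mathcal{H}_0(\mathcal{T})$ is set up so that it coincides termwise with $\sum_c \mathcal{H}_0(B_{n,n_c})/1$ (i.e., that the per-symbol entropy used in Definition~\ref{def: 0-entropy} is literally the binary entropy of a length-$n$ bitvector with $n_c$ ones), which it is by inspection; once that identification is made the bound is immediate.
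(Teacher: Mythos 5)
Your proposal is correct and is exactly the argument the paper intends: apply the standard sandwich $n\mathcal{H}_0(B_{n,n_c})-\log(n+1)\leq\log\binom{n}{n_c}\leq n\mathcal{H}_0(B_{n,n_c})$ to each symbol, sum over $c\in\Sigma$ (noting that Definition~\ref{def: 0-entropy} is precisely the sum of these binary entropies), and subtract $\log n$. No differences from the paper's derivation.
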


Consequently, we can observe that the following corollary holds.

\begin{corollary}\label{corollary: relation worst-case empirical}
For every trie $\mathcal{T}$ it holds that
$n\mathcal{H}_0(\mathcal{T}) = \mathcal{H}^{\wc}(\mathcal T) + O(\sigma \log n)$
\end{corollary}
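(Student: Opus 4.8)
This is an immediate consequence of Lemma~\ref{lemma:emp_wc}, so the whole argument is a one-line rearrangement of the two inequalities stated there. First I would take the right-hand inequality $\mathcal{H}^{\wc}(\mathcal{T}) \leq n\mathcal{H}_0(\mathcal{T}) - \log n$ and rewrite it as $n\mathcal{H}_0(\mathcal{T}) \geq \mathcal{H}^{\wc}(\mathcal{T}) + \log n \geq \mathcal{H}^{\wc}(\mathcal{T})$, using that $\log n \geq 0$. Then I would take the left-hand inequality $n\mathcal{H}_0(\mathcal{T}) - \sigma\log(n+1) - \log n \leq \mathcal{H}^{\wc}(\mathcal{T})$ and rewrite it as $n\mathcal{H}_0(\mathcal{T}) \leq \mathcal{H}^{\wc}(\mathcal{T}) + \sigma\log(n+1) + \log n$.

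\textbf{Conclusion.} Combining the two gives $\mathcal{H}^{\wc}(\mathcal{T}) \leq n\mathcal{H}_0(\mathcal{T}) \leq \mathcal{H}^{\wc}(\mathcal{T}) + \sigma\log(n+1) + \log n$. It then remains to observe that the additive slack $\sigma\log(n+1) + \log n$ is $O(\sigma\log n)$: since $\sigma \geq 1$ we have $\log n \leq \sigma\log n$, and $\log(n+1) = \log n + O(1/n) = O(\log n)$, so $\sigma\log(n+1) + \log n = O(\sigma\log n)$. Hence $n\mathcal{H}_0(\mathcal{T}) = \mathcal{H}^{\wc}(\mathcal{T}) + O(\sigma\log n)$, as claimed.

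\textbf{Main obstacle.} There is essentially no obstacle here; the only points requiring a moment's care are getting the directions of the two inequalities right when solving for $n\mathcal{H}_0(\mathcal{T})$, and noting explicitly that both error terms $\sigma\log(n+1)$ and $\log n$ are absorbed into a single $O(\sigma\log n)$ term (in particular that the $-\log n$ appearing in the upper bound only helps and can be dropped).
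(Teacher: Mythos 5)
Your proposal is correct and matches the paper exactly: the paper also derives Corollary~\ref{corollary: relation worst-case empirical} as an immediate rearrangement of the two inequalities in Lemma~\ref{lemma:emp_wc}, absorbing the slack $\sigma\log(n+1)+\log n$ into a single $O(\sigma\log n)$ term. Your explicit handling of the inequality directions and of the sign of the $-\log n$ term is exactly the routine bookkeeping the paper leaves implicit.
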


Note that the results of Lemma~\ref{lemma:emp_wc} and Corollary~\ref{corollary: relation worst-case empirical} are valid also if we consider the effective alphabet formed by all symbols labeling at least one edge in $\mathcal{T}$, as the other characters do not affect $\mathcal{H}^{\wc}$ or $n\mathcal{H}_0$.
The result of Corollary~\ref{corollary: relation worst-case empirical} is consistent with known results for strings: given a string $S$ of length $n$ over the alphabet $\Sigma$ it is known that $n\mathcal{H}_0(S)=\mathcal{H}^{\wc}(S)+O(\sigma\log n)$ (see~\cite[Section 2.3.2]{compactDataStructures}) where $\mathcal{H}^{\wc}(S)=\log\binom{n}{n_1,\ldots,n_\sigma}$ is the worst-case entropy for the set of strings with character occurrences $n_i$ and ${H}_0(S)$ is the zero-th order empirical entropy of $S$ as defined in Section~\ref{sec: Notation}.
We now aim to extend the $0$-th order trie entropy to higher orders.
To this purpose, similarly as in~\cite{xBWTconf, xBWTjournal, rindexTrie}, for every integer $k \geq 0$, we define the \emph{$k$-th order context} of  a node $u$, denoted by $\lambda_k(u)$, as the string formed by the last $k$ symbols of the path from the root to $u$.
Formally, we have $\lambda_0(u) = \epsilon$ and $\lambda_k(u) = \lambda_{k-1}(\pi(u)) \cdot \lambda(u)$.
By definition of $\pi$ (see Section~\ref{sec: Notation}), if a node $u$ has a depth $d$ with $d < k$, then such string is left-padded by the string $\#^{k-d}$.
Moreover, we introduce the integers $n_w$ and $n_{w,c}$. Informally, given $w \in \Sigma^k$, for some $k \geq 0$, then $n_w$ is the number of nodes in $\mathcal T$ having $k$-length context $w$, while $n_{w,c}$ is the number of nodes having $w$ as their $k$-th order context and an outgoing edge labeled by $c$.

\begin{definition}[integers $n_w$ and $n_{w,c}$]\label{def: integers nw and nwc}
    For every string $w \in \Sigma^*$ and character $c \in \Sigma$, we define the integers $n_w$ and $n_{w,c}$ as follows.
    \begin{itemize}
        \item $n_w = \lvert \{ u \in V : w = \lambda_k(u) \} \rvert $
        \item $n_{w,c} = \lvert \{ u \in V : w = \lambda_k(u) \text{ and } c \in out(u)\}\rvert$
    \end{itemize}
\end{definition}

With the next definition we generalize our notion of empirical entropy of a trie to higher orders in a similar way of what has been done in~\cite{xBWTjournal, xBWTconf} for labeled trees.

\begin{definition}[$k$-th order empirical entropy]\label{def: k-entropy}
    For every integer $k \geq 0$, the $k$-th order empirical entropy $\mathcal H_k(\mathcal T)$ of a trie $\mathcal{T}$ is defined as follows\footnote{We observe that a related $k$-th order \emph{worst-case} entropy formula, defined as $\sum_{w \in \Sigma^{k}} \sum_{c \in \Sigma} \log \binom{n_{w}}{n_{w,c}}$, appears in the article of Prezza~\cite[Section 3.1]{rindexTrie}. However, as shown in Section~\ref{section counting tries}, already for $k=0$ this formula does not correspond to the actual number of tries with a given symbol distribution. Hence, we propose a modified version of this entropy notion.};

    \[
    \mathcal{H}_{k}(\mathcal{T}) = \sum_{w \in  \Sigma^k}\sum_{c \in \Sigma}\frac{n_{w,c}}{n}\log\left(\frac{n_w}{n_{w,c}}\right) + \frac{n_w - n_{w,c}}{n}\log\left( \frac{n_w}{n_w-n_{w,c}} \right)
    \]
\end{definition}

Note that for $k = 0$ this formula coincides with that of Definition~\ref{def: 0-entropy} since in this case $\Sigma^0= \{\epsilon\}$ and $n_{\epsilon}=n$ and for every character $c \in \Sigma$, we have $n_{\epsilon,c} = n_c$.
Moreover, analogously to strings, by the log-sum inequality (see~\cite[Eq. (2.99)]{elementsOfInformationTheory}), for every trie $\mathcal{T}$ and integer $k\geq 0$ it holds that $H_{k+1}(\mathcal{T}) \leq H_{k}(\mathcal{T})$.

\subsection{Entropy reachability}

In this subsection we show that, given any non-negative integer $k$, we can compress an arbitrary trie $\mathcal T$ within $n\mathcal{H}_k(\mathcal{T})+2$ bits of space, plus an additional term required to encode the integers $n_w$ and $n_{w,c}$.

\begin{theorem}\label{theorem: reachability}
    Let $\mathcal{T}$ be a trie and $k$ an integer with $k \geq 0$.
    Then it is possible to represent $\mathcal{T}$ in a number of bits upper-bounded by $n\mathcal{H}_k(\mathcal{T}) + 2 + (\sigma+1)\sigma^{k} \lceil\log n\rceil$.
\end{theorem}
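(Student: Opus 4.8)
The plan is to extend arithmetic coding from strings to tries, guided by the pre-order visit and the $k$-th order contexts $\lambda_k(u)$. First I would fix the order in which the nodes are processed: the pre-order sequence $u_1,\dots,u_n$, exactly as used for the function $f$ in Section~\ref{section counting tries}. For each node $u_i$, in turn, I would encode the set $out(u_i)$ of its outgoing labels as a sequence of $\sigma$ binary decisions ``is $c \in out(u_i)$?'' for $c \in \Sigma$ in the order $\preceq$. The key point is that when we are about to encode the decision for node $u$ and character $c$, the $k$-th order context $w = \lambda_k(u)$ is already known to the decoder: it is determined by the labels on the root-to-$u$ path, all of which precede $u$ in pre-order (they lie on ancestors of $u$, hence are earlier nodes, and the label $\lambda(u)$ of the incoming edge was emitted when $u$'s parent was processed). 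So the decoder can reconstruct, incrementally, the relevant conditional probability to feed to the arithmetic decoder.

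Next I would specify the probabilities. For the pair $(w,c)$ with $w \in \Sigma^k$, I use the empirical probability $p_{w,c} = n_{w,c}/n_w$ that a node with context $w$ has an outgoing edge labeled $c$ (and $1 - p_{w,c}$ for the complementary event), matching exactly the Bernoulli terms appearing in $\mathcal{H}_k$. Summing $-\log$ of the chosen probability over all $n$ nodes and all $\sigma$ characters, grouped by context $w$, gives total code length $\sum_{w \in \Sigma^k}\sum_{c \in \Sigma} \bigl(n_{w,c}\log\frac{n_w}{n_{w,c}} + (n_w - n_{w,c})\log\frac{n_w}{n_w - n_{w,c}}\bigr) = n\mathcal{H}_k(\mathcal{T})$ by Definition~\ref{def: k-entropy}. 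Arithmetic coding realizes any such product distribution in at most $\lceil \text{(ideal length)}\rceil + 1 \le n\mathcal{H}_k(\mathcal{T}) + 2$ bits; this is the standard overhead of arithmetic coding and I would cite it (as the paper already cites~\cite{magicAlgorithms}). One subtlety to handle cleanly: the decoder must know where the stream ends — equivalently, it must know $n$; I would note that $n$ is recoverable from the stored integers (e.g.\ $n = n_\epsilon$ if $k=0$, or more robustly we just include $n$ among the stored counts, which costs $O(\log n)$ and is absorbed below), so the decoder can stop after reconstructing $n$ nodes.

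Finally, the side information: to run the decoder we must also transmit the numbers $n_w$ and $n_{w,c}$. There are at most $\sigma^k$ contexts $w$ and, for each, one value $n_w$ plus $\sigma$ values $n_{w,c}$, i.e.\ at most $(\sigma+1)\sigma^k$ integers, each in $[0,n]$ and thus encodable in $\lceil \log n\rceil$ bits — total $(\sigma+1)\sigma^k\lceil\log n\rceil$ bits, which is the stated additive term (and it also covers storing $n$ itself). Putting the two pieces together yields the bound $n\mathcal{H}_k(\mathcal{T}) + 2 + (\sigma+1)\sigma^k\lceil\log n\rceil$.

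The main obstacle I anticipate is the bookkeeping around contexts at shallow nodes: for a node $u$ of depth $d < k$, the context $\lambda_k(u)$ is the left-padded string $\#^{k-d}\lambda_d(u)$, so one must be careful that (i) these padded strings are legitimately among the $w$'s counted in Definition~\ref{def: integers nw and nwc}, (ii) the $\#$-padded contexts are still known to the decoder (they are, since depth is determined by the reconstructed topology), and (iii) the count of distinct contexts is still at most $\sigma^k$ — which holds because each padded context is still a length-$k$ string, now over $\Sigma \cup \{\#\}$, but the number of root-to-node label-strings of length $k$ is at most $\sigma^k$ since $\#$ only ever appears as a prefix. Beyond this, the argument is a routine transcription of arithmetic coding; the only genuinely new ingredient is the observation that the pre-order visit makes every needed context available to the decoder before it is used.
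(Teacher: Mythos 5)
Your proposal is correct and follows essentially the same route as the paper: a pre-order arithmetic coder that makes, for each node and each $c\in\Sigma$, a Bernoulli decision with the empirical conditional probability $n_{w,c}/n_w$, so that the final interval has size $s$ with $-\log s = n\mathcal{H}_k(\mathcal{T})$, the truncated midpoint costs $\lceil\log\frac{2}{s}\rceil < n\mathcal{H}_k(\mathcal{T})+2$ bits, and the counts $n_w$, $n_{w,c}$ account for the $(\sigma+1)\sigma^k\lceil\log n\rceil$ term. Your extra remarks on the decoder's knowledge of $n$ and on the $\#$-padded contexts are sensible refinements of points the paper treats implicitly.
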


To prove this theorem, we propose an extended version of the arithmetic coding algorithm for tries.
Arithmetic coding is a lossless compression algorithm introduced by Elias, which compresses a string $S$ of length $n$ to its $0$-th order (string) entropy $n\mathcal{H}_0(S)$~\cite[Section 12.2]{magicAlgorithms}.
Given the probabilities $p_c$ associated with every symbols $c$ in $S$, the original algorithm for strings computes in $n$ iterations a real-valued interval $[l,l+s)\subseteq [0,1)$ corresponding to $S$.
The algorithms works as follows: the initial interval is $[0,1)$, then during the $i$-th iteration it shrinks the current interval $[l_{i-1},l_{i-1}+s_{i-1})$ corresponding to the prefix $S[1..i-1]$ of $S$ depending on the probability $p_{S[i]}$ associated to the current character $S[i]$.
In particular, the new updated interval is $[l_i,l_i+s_i)$, where $s_i = s_{i-1}p_{S[i]}$ and $l_i = l_{i-1} + f_{S[i]} s_{i-1}$, with $f_{S[i]}$ defined as the sum of the probabilities associated to the characters strictly smaller than $S[i]$.
The final interval produced during the last iteration is $[l,l+s)$ with $l=l_n$ and $s=s_n$.
The probability distribution of the symbols, together with any number $x$ contained in such an interval $[l,l+s)$ uniquely identifies the string $S$ and thus can be used to encode $S$. 
Specifically, arithmetic coding chooses as value $x$ the middle point $\middlealg = l + \frac{1}{2}s$ of the final interval and emits the binary representation of $\middlealg$ truncated to its $d = \lceil \log \frac{2}{s}\rceil$ most significant bits.
Formally, let $b_1,b_2,b_3, \ldots$ be the infinite bit sequence representing $\middlealg$ in base $2$, that is $\middlealg = \sum_{i\geq1}b_i2^{-i}$.
The encoding algorithm emits the bit sequence $b_1,b_2,\ldots,b_d$ representing the truncated number $\sum_{i=1}^{d}b_i2^{-i}$.
As stated in the next lemma, this truncated number still belongs to the interval $[l,l+s)$, and thus the representation can be inverted. 

\begin{lemma}\cite[Corollary 12.1]{magicAlgorithms}\label{corollary: truncation}
    For every interval $[l,l+s)$ contained in $[0,1)$, the truncation of $\middlealg=l+\frac{s}{2}$ to its first $\lceil \log \frac{2}{s}\rceil$ bits still falls in the interval $[l,l+s)$.
\end{lemma}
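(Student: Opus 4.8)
The plan is to estimate directly the error introduced by the truncation and compare it with the interval width $s$. Write $\middlealg = l + \frac{s}{2}$ in binary as $\middlealg = \sum_{i \ge 1} b_i 2^{-i}$, set $d = \lceil \log \frac{2}{s} \rceil$, and let $\middlealg_d = \sum_{i=1}^{d} b_i 2^{-i}$ be the number obtained by keeping the first $d$ bits. By the choice of $d$ we have $2^{-d} \le s/2$. Note also that $[l,l+s) \subseteq [0,1)$ forces $s \le 1$, hence $\log \frac{2}{s} \ge 1$ and $d \ge 1$, so truncating to $d$ bits is meaningful.

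First I would handle the right endpoint: truncation only discards non-negative terms, so $\middlealg_d \le \middlealg = l + \frac{s}{2} < l + s$, where the strict inequality uses $s > 0$. Then, for the left endpoint, I would bound the truncation error by the geometric tail: $0 \le \middlealg - \middlealg_d = \sum_{i > d} b_i 2^{-i} \le \sum_{i > d} 2^{-i} = 2^{-d} \le s/2$, the last step being exactly the definition of $d$. Consequently $\middlealg_d \ge \middlealg - s/2 = l$. Putting the two inequalities together gives $l \le \middlealg_d < l + s$, which is the claim.

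Since the whole argument reduces to a single geometric-series estimate, there is no real obstacle; the only points that need care are (i) confirming $d \ge 1$ so that the statement is not vacuous, which follows from $s \le 1$ as noted above, and (ii) checking that the bound $\sum_{i>d} b_i 2^{-i} \le 2^{-d}$ remains valid even when $\middlealg$ is a dyadic rational with two binary expansions — it does, since every bit $b_i$ is at most $1$ and the full tail $\sum_{i > d} 2^{-i}$ equals $2^{-d}$ for whichever expansion is chosen.
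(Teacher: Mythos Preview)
Your proof is correct. The paper does not actually prove this lemma; it simply quotes it as \cite[Corollary 12.1]{magicAlgorithms} and uses it as a black box, so there is no in-paper argument to compare against. Your direct estimate --- truncation can only decrease $\middlealg$, and the discarded tail $\sum_{i>d}b_i2^{-i}$ is at most $2^{-d}\le s/2$ --- is exactly the standard textbook justification and is airtight, including the edge-case checks you added.
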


The claimed space bound for arithmetic coding follows from the fact that the integer $d$ is at most $n\mathcal H_0(S) + 2$ where $\mathcal{H}_0(S)$ is the $0$-th order entropy of the string $S$~\cite[Theorem 12.3]{magicAlgorithms}. 
Given the encoder output $x$ and the probability distribution of the symbols, the decoder can reconstruct the original string $S$, from left to right iteratively as follows. 
The initial interval is $[0,1)$, and if the current interval is $[l_i,l_i+s_i)$ the decoder divides it in sub-intervals according to the cumulative distribution $f_c$ of the symbols and it determines $S[i]$ from the sub-interval including $x$.
The next interval $[l_{i+1},l_{i+1}+s_{i+1})$ is computed as done during the compression phase. 
For a more detailed description and analysis of arithmetic coding see the book of Ferragina~\cite[Section 12.2]{magicAlgorithms}.
Another known important feature of arithmetic coding is that it can achieve higher-order compression if during the encoding phase we consider the conditional probability that a symbol occurs knowing its $k$ preceding characters.
Formally, with arithmetic coding we can store a string $S$ of length $n$ within $n \mathcal{H}_{k}(S) + 2$ bits of space, assuming that all character probabilities are stored for every possible $k$-length context.
In the following, we show that this algorithm can be adapted to tries for the same purpose.

\paragraph{Compression algorithm.}\label{par:encoder} The following adaptation of arithmetic coding for tries works similarly to the original one for strings~\cite[Algorithm 12.2]{magicAlgorithms}. 
The algorithm executes $n$ iterations in total, one for each node in the trie and it processes the nodes according to their order in a pre-order traversal.
At the beginning, the initial interval is set to $[0, 1)$, then in the $i$-th iteration the current interval is reduced based on the following information: (i)~the context of the current node $u_i$, i.e., the string $\lambda_k(u_i)$ and (ii)~the probabilities associated to the set of edges outgoing from $u_i$ given the context $\lambda_k(u_i)$.
At the end of the algorithm execution, the resulting interval together with the symbol probabilities will uniquely encode the input trie $\mathcal{T}$, and to decompress $\mathcal{T}$ it will be sufficient to store a single number in that interval.
To explain the compression algorithm in detail, we now introduce some further notation.
Given an integer $i\in [n]$, a character $c \in \Sigma$, and a string $w \in \Sigma^k$, we define $p_w(c)$ as the fraction of nodes in $\mathcal{T}$, among those with context $w$, having an outgoing edge labeled $c$, formally $p_w(c)=\frac{n_{w,c}}{n_w}$, where $n_w$ and $n_{w,c}$ are the integers introduced in Definition~\ref{def: integers nw and nwc}.
Consider the nodes $u_1, u_2, \ldots, u_n$ of a trie $\mathcal{T}$ sorted according to a pre-order visit.
In the $i$-th iteration, that is during the encoding of $u_i$, we iterate over the $\sigma$ alphabet symbols $c_j$ in the alphabet order $\preceq$.
Let $I_{i,j-1} = [l_{i,j-1}, l_{i,j-1} + s_{i,j-1})$ be the interval we obtained immediately before processing node $u_i$ and character $c_j$, i.e., the $j$-th character according to $\preceq$.
Then in the next iteration we consider this symbol $c_j$ and, depending on whether $u_i$ has an outgoing edge labeled $c_j$, we update our interval.
Intuitively, the encoder partitions the current interval $I_{i,j-1}$ into $2$ sub-intervals: the upper interval $I_{i,j-1}^{c_j}$ and the lower interval $I_{i,j-1}^{\lnot c_j}$ corresponding to the two events $c_j\in out(u_i)$ and $c_j\notin out(u_i)$, respectively.
The size of these sub-intervals (with respect to $s_{i,j-1}$) is assigned proportionally to the probabilities $p_w(c_j)$ or $1-p_w(c_j)$ of the corresponding events.
Next, the encoder updates the current interval to the sub-interval corresponding to the actual outcome, that is depending on whether $c_j\in out(u_i)$ or not in $\mathcal{T}$ it sets $I_{i,j}=I_{i,j-1}^{c_j}$ or $I_{i,j}=I_{i,j-1}^{\lnot c_j}$.
Formally, If $c_j\in out(u_i)$ then we update the interval size as $s_{i,j} = s_{i,j-1}\, p_w(c_j)$ and the starting point of the interval becomes $l_{i,j} = l_{i,j-1} + s_{i,j-1}\, (1-p_w(c_j))$, otherwise the new interval size is $s_{i,j} = s_{i,j-1}\, (1-p_w(c_j))$ and the left extreme of the interval is unchanged, i.e., $l_{i,j}=l_{i,j-1}$.
Thus, after these operations, the interval is updated to $I_{i,j} = [l_{i,j},l_{i,j}+s_{i,j})$.
At the end of the algorithm execution, the final interval corresponding to $\mathcal{T}$ is $[l,l + s)$ where $l = l_{n,\sigma}$ and $s = s_{n,\sigma}$.

In the next paragraph, we will show that any number falling in $[l,l + s)$, together with the conditional probabilities, is sufficient to reconstruct the input trie $\mathcal{T}$, this includes both the labels and the topology of the trie.
As in the classical arithmetic encoding, we output the binary representation of the middle point $\middlealg = l + \frac{1}{2}s$, truncated to its first $d = \lceil \log \frac{2}{s}\rceil$ most significant digits.
By Lemma~\ref{corollary: truncation} this number still lies in the final interval.
Later, we will prove that the number $d$ of bits emitted by our algorithm is upper-bounded by $n\mathcal{H}_k(\mathcal{T})+2$.
Algorithm~\ref{arithmetic compression} shows the pseudocode of this encoding scheme and Figure~\ref{fig:arithmetic_coding} depicts a sample execution.

\begin{algorithm}[!ht]
\LinesNumbered
\DontPrintSemicolon
\caption{Arithmetic coding for tries: compression scheme}\label{arithmetic compression}
    \Input{A trie $\mathcal{T}$ with $n$ nodes, an integer $k$ with $k \geq 0$}
    \Output{A number $output$ of a subinterval $[l, l + s)$ of $[0,1)$}

    \medskip

    $l \gets 0$, $s \gets 1$\tcp*{initial interval [0,1)}

    \smallskip

    \For{$i \gets  1$ \textnormal{\textbf{to}} $n$}{

        $u_i \gets$ $i$-th node of $\mathcal{T}$ in pre-order traversal\;
        $w \gets \lambda_k(u_i)$\tcp*{$k$-length context of node $u_i$}

        \smallskip

        \For{$j \gets 1$ \textnormal{\textbf{to}} $\sigma$}{

            \smallskip

            $c \gets$ $j$-th character of $\Sigma$\;
            
            $p_w(c) \gets \frac{n_{w,c}}{n_w}$\;

            \smallskip
            
            \If{$c \in out(u_i)$}{\label{line:start updating int}
                $l \gets l + s*(1 - p_w(c))$\tcp*{updating the starting point}
                $s \gets s * p_w(c) $\tcp*{updating the interval size}
            }
            \lElse{$s \gets s*(1 - p_w(c))$\tcp*[f]{update only the interval size}}\label{line:end updating int}
        }
    }        

    $\middlealg \gets l + \frac{1}{2}s$\tcp*{middle point of the interval $[l, l + s)$}

    \smallskip

    $d \gets \lceil \log_2 \frac{2}{s} \rceil$\;

    \smallskip

    $b_1,b_2,b_3, \ldots  \gets $ binary representation of $\middlealg$\tcp*{$\middlealg = \sum_{i\geq1}b_i2^{-i}$}
    \Return the sequence $b_1,b_2, \ldots , b_d$\;

\end{algorithm}

\paragraph{Decompression algorithm.} Next we show how to reconstruct the original trie $\mathcal{T}$ from the conditional probabilities $\frac{n_{w,c}}{n_w}$ and $1-\frac{n_{w,c}}{n_w}$ together with the real number $\valnumber= b_1,\ldots,b_d$, returned by Algorithm~\ref{arithmetic compression}. 
The decompression algorithm processes the nodes $u_1,\ldots,u_n$ according to their ordering in a pre-order traversal of $\mathcal{T}$. 
During the decoding of $u_i$, the decoder adds the outgoing edges of $u_i$ to the current trie in the order specified by $\preceq$.
Similarly to the original string algorithm, the decoder determines whether $c\in out(u_i)$ holds or not by checking which of the two non-overlapping sub-intervals, corresponding to the two events, contains the real number $\valnumber$ and updates the interval accordingly as done by the encoder.
See Algorithm~\ref{arithmetic decompression} for a pseudocode of this algorithm.

Formally, the decoding algorithm starts with the interval $[0,1)$, a trie $\mathcal{T} = (V,E)$ with $V=\{u_1\}$ and $E = \emptyset$ and proceeds as follows.
Let $\mathcal{T}_{i,j-1}$ and $[l_{i,j-1},l_{i,j-1}+s_{i,j-1})$ be respectively the (partially) decompressed trie and the interval obtained during the decoding of $u_i$ immediately before processing the $j$-th symbol $c_j$ of $\Sigma$.
By Algorithm~\ref{arithmetic compression} it holds that $c_j \notin out(u_i)$ if and only if $\valnumber \in [l_{i,j-1},l_{i,j-1}+(1 - p_{w}(c))s_{i,j-1})$.
Therefore, the decoder adds a new node $v$ and a new edge $(u_i,v,c_j)$ to $T_{i,j}$ if and only if $c_j \in out(u_i)$.
Before considering the next symbol $c_{j+1}\in \Sigma$ the decoder updates the current interval depending on whether $c_j \in out(u_i)$ or not using the respective probabilities $p_w(c_j)$ or $1-p_w(c_j)$ in the same way as done in Algorithm~\ref{arithmetic compression}.
We note that since $T_{i,j}$ already includes all the nodes preceding $u_i$ in pre-order along with their outgoing edges, it is possible to compute the context $w=\lambda_k(u_i)$ and therefore the associated probabilities $p_w(c)$ and $1-p_w(c)$ for every $c\in \Sigma$.
The algorithm terminates after the decoding of $u_n$.

%{\color{blue}
%let $I_{i,j}$ be the interval obtained by the encoder during the decoding of $u_i$ once it processed the $j$-th lexicographic symbol $c_j\in \Sigma$ for every $1 \leq i \leq n$ and $1 \leq j \leq \sigma$. Obviously we have that $I_{1,0}=[0,1)$ and $I_{i,0}=I_{i-1,\sigma}$ when $i>1$
%We note that $I_{1,0}=[0,1)$ and $I_{i,0}=I_{i-1,\sigma}$ when $i>1$}

\begin{algorithm}[!ht]
\LinesNumbered
\DontPrintSemicolon
\caption{Arithmetic coding for tries: decompression scheme}\label{arithmetic decompression}
    \Input{A real number $\valnumber \in [0,1)$, an integer $k$ with $k \geq 0$, the integers $n_{w}$ and $n_{w,c}$ for every $w \in \Sigma^{k}$ and $c \in \Sigma$}
    \Output{The decompressed trie $\mathcal{T} = (V,E)$}

    \medskip

    $V \gets \{ u_1\}$, $E \gets \emptyset$\tcp*{$u_1$ is the root of $\mathcal T$}

    $l \gets 0$, $s \gets 1$\tcp*{initial interval $[0,1)$}

    \smallskip

    \For{$i \gets  1$ \textnormal{\textbf{to}} $n$}{ \label{line: outloop}
        $u_i \gets i$-th node in a pre-order traversal of $\mathcal{T}$\;
        $w \gets \lambda_k(u_i)$\tcp*{$k$-length context of node $u_i$}

        \smallskip
        
        $p_w(c) \gets \frac{n_{w,c}}{n_w}$\;

        \smallskip

        \For{$j \gets  1$ \textnormal{\textbf{to}} $\sigma$}{ \label{line: inloop}
            $c \gets j$-th character of $\Sigma$\;

            \smallskip
            \If{$\valnumber < l + s*(1 - p_w(c))$}{
                $s = s* (1 - p_w(c)) $\tcp*{$c$ does not belong to $out(u_i)$}
            }
            \Else{
                $s \gets s * p_w(c)$\tcp*{$c$ belongs to $out(u_i)$}
                $l \gets l + s * (1 - p_w(c))$\;

                \smallskip
                
                $v \gets $ new node, $V \gets V \cup
                \{v\}$\tcp*{inserting new node $v$}
                $e \gets (u_i,v,c)$, $E \gets E \cup \{e\}$\tcp*{inserting new edge $e$}
            }
        }
    }

    \smallskip

    \Return $\mathcal{T} \gets (V,E)$\tcp*{final decompressed trie}
\end{algorithm}

\paragraph{Space complexity.} Finally, to prove Theorem~\ref{theorem: reachability}, we just need to show that we can store both the binary sequence returned by Algorithm~\ref{arithmetic compression} and the conditional probabilities of the symbols in at most $n\mathcal H_k (\mathcal T) + 2 + (\sigma + 1)\sigma^{k}\lceil \log n \rceil$ bits of space.
We show this result in the following proof.

\begin{proof}[Proof of Theorem~\ref{theorem: reachability}]
    The number emitted by Algorithm~\ref{arithmetic compression} is a binary sequence $b_1,b_2, \ldots, b_d$ of length $d=\lceil \log_2 \frac{2}{s} \rceil$ bits where $s$ is the final interval size.
    We preliminary prove that that $d < n \mathcal H_k (\mathcal T) + 2$.
    Since $\lceil \log_2 \frac{2}{s} \rceil < \log_2 \frac{2}{s} + 1 = 2 - \log s$, it sufficient to show that $- \log s \leq n \mathcal H_k (\mathcal T)$.
    Let $u_i$ be the $i$-th pre-order node and $w$ its length $k$ context.
    By Lines~\ref{line:start updating int}-\ref{line:end updating int} of Algorithm~\ref{arithmetic compression}, when processing $u_i$ the encoder shrinks the previous interval size $s_{i-1,\sigma}$ by a factor $\prod_{c \in out(u_i)}p_w(c) \prod_{c \notin out(u_i)}(1-p_w(c))$. 
    Accounting for all nodes and since initially is $s_0=1$, the final interval size is $s=\prod_{i=1}^n \prod_{c \in out(u_i)}p_w(c) \prod_{c \notin out(u_i)}(1-p_w(c))$.
    By grouping the nodes according to their context $w$, and since there are exactly $n_{w,c}$ nodes $u_i$ having $w = \lambda_k(u_i)$ and $c \in out(u_i)$, and $n_{w}-n_{w,c}$ nodes with $w = \lambda_k(u_i)$ and $c \notin out(u_i)$, the previous product can be rewritten as $s = \prod_{w\in \Sigma^k}\prod_{c \in \Sigma} p_w(c)^{n_{w,c}}(1-p_w(c))^{n_w - n_{w,c}}$, considering only the contexts $w$, with $n_w > 0$.
    Thus, by using the fact that $p_w(c)=\frac{n_{w,c}}{n_w}$ and $1-p_w(c)=\frac{n_{w}- n_{w,c}}{n_{w}}$ we obtain that:
    
    \begin{equation*}
        \begin{aligned}
            -\log s &= -\log \prod_{w \in \Sigma^k}\prod_{c \in \Sigma}\left(\frac{n_{w,c}}{n_w}\right)^{n_{w,c}}\left(\frac{n_{w}- n_{w,c}}{n_{w}}\right)^{n_w - n_{w,c}} = \\
            & = \sum_{w \in \Sigma^{k}}\sum_{c \in \Sigma} n_{w,c}\log \left(\frac{n_{w}}{n_{w,c}} \right) + (n_{w} - n_{w,c})\log\left( \frac{n_w}{n_w - n_{w,c}}\right)\\
                & = n\mathcal H_k (\mathcal{T})
        \end{aligned}
    \end{equation*}
    On the other hand, since the conditional probabilities have the forms $1 - \frac{n_{w,c}}{n_w}$ and $\frac{n_{w,c}}{n_w}$, to store them it is sufficient to save all integers $n_{w}$ and $n_{w,c}$ for every $w \in \Sigma^{k}$ and $c \in \Sigma$.
    Since there are at most $\sigma^{k}$ possible contexts of length $k$, it follows that we can save all the integers $n_{w}$ using $\sigma^{k} \lceil \log n \rceil$ bits, and all the integers $n_{w,c}$ in $\sigma^{k+1}\lceil \log n\rceil$ bits.
    Consequently, we can store all the conditional probabilities in $(\sigma + 1)\sigma^k \lceil \log n \rceil$ bits.
    It follows that we can represent any arbitrary trie $\mathcal{T}$ in at most $n\mathcal H_k(\mathcal T ) + 2 + (\sigma + 1)\sigma^{k}\lceil \log n \rceil$ bits of space.
\end{proof}

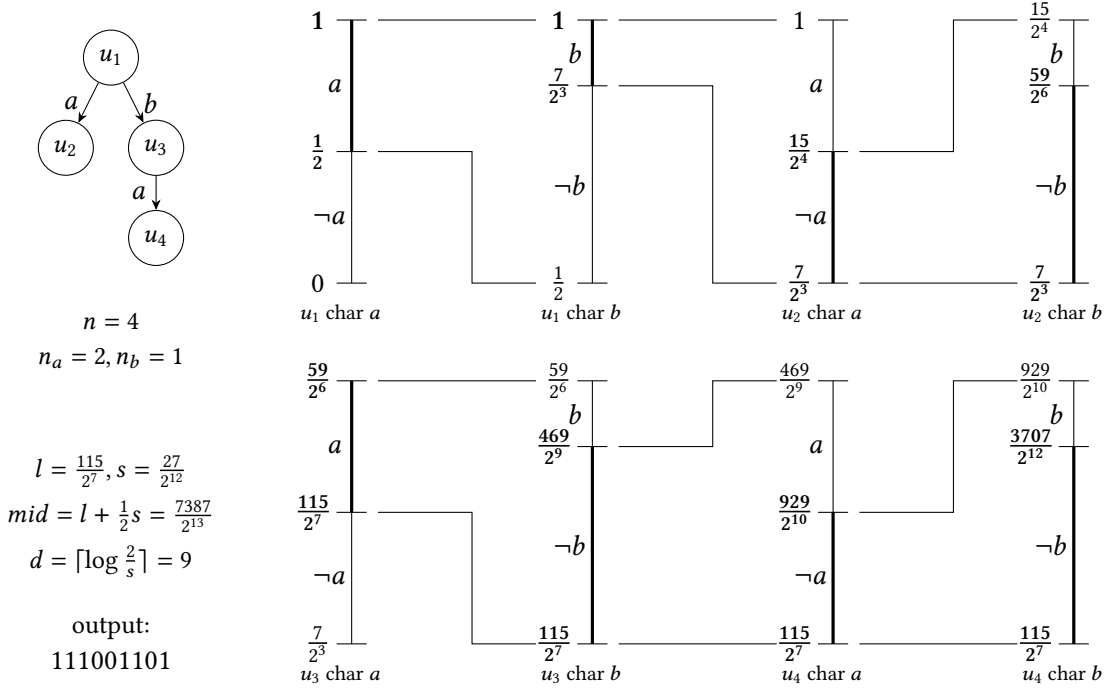
\begin{figure}[ht]\label{fig: encoding}
	\centering

	\begin{tikzpicture}[
        >={Stealth[length=1.4mm, width=1.2mm]},
        dim/.style={minimum size=1.9em, font={\small}},
    ]
        
        \begin{scope}[yshift = 3cm]
		\node[state, dim] (1) at (0,0) {$u_1$};
		\node[state, dim] (2) at (-0.6,-1.2) {$u_2$};
		\node[state, dim] (4) at (0.6,-2.4) {$u_4$};
		\node[state, dim] (3) at (0.6,-1.2) {$u_3$};

		\draw[->] (1) to node [left] {$a$} (2);
		\draw[->] (1) to node [right] {$b$} (3);
		\draw[->] (3) to node [left] {$a$} (4);

        \node (5) at (0,-3.5) {\small $n = 4$};
        \node (6) at (0,-4) {\small $n_a = 2, n_b = 1$};

        \node (8) at (0,-5.5) {\small $l = \frac{115}{2^7}, s= \frac{27}{2^{12}}$};
        \node (9) at (0,-6.1) {\small $\middlealg = l  + \frac{1}{2}s = \frac{7387}{2^{13}}$};
        
        \node (10) at (0,-6.7) {\small $d = \lceil \log \frac{2}{s} \rceil = 9$};

        \node[align = center] (11) at (0,-7.8) {\small output: \\$ 111001101$};

        \end{scope}

        \tikzmath{
            \a = 3.5; % height of an interval
            \yspace = -4.8; % space between rows
            \xspace = 3.2; % space between cols
            \xnumspace = 0.45; % space between intervals and fractional numbers
            \aprob = 0.5;  % prob associated to char a
            \bprob = 0.75; % prob associated to char b
            \xsepintexpstr = 0.35; % initial sep between interval and zommed parts
            \xsepintexpend = 0.75; % final sep between interval and zommed parts
            \barxsize = 0.2; % size of the interval delimiters
            \chardist = 0.3; %distance between char and int
            \subsdisty = 0.4; %vertical distance between interval and below sentence
            \subsdistx = 0.15; % horizzontal (left) shift of the sentence below (to save space)
            \subsbtwidst = 0.3; %distance between the sentences below intervals
        }

        \begin{scope}[xshift = 3.2cm]
        \foreach \x/\y/\triple/\charchosen/\nodeid/\ign [count=\i] in {
            0/0/{0/0,1/1,1/0}/1/1/0, 
            1/0/{1/1,7/3,1/0}/1/1/0,
            2/0/{7/3,15/4,1/0}/0/2/0,
            3/0/{7/3,59/6,15/4}/0/2/0,
            0/1/{7/3,115/7,59/6}/1/3/0,
            1/1/{115/7,469/9,59/6}/0/3/0,
            2/1/{115/7,929/10,469/9}/0/4/0,
            3/1/{115/7,3707/12,929/10}/0/4/0
        } {
            % old list for iteration
            %0/0/{0/0,1/1,1/0},1/0/{1/1,7/3,1/0},
            %2/0/{7/3,15/4,1/0},3/0/{7/3,59/6,15/4},
            %0/1/{7/3,115/7,59/6},1/1/{115/7,469/9,59/6},
            %2/1/{115/7,929/10,469/9},3/1/{115/7,3707/12,929/10},
            
            % PHASE 1: draw the intervals
            % interval bar
            \draw (\xspace*\x, \yspace*\y) to (\xspace*\x, \yspace*\y+ \a);

            % low bar
            \draw (-\barxsize + \xspace*\x, \yspace*\y) to (\barxsize + \xspace*\x, \yspace*\y);
            
            % tall bar
            \draw (-\barxsize + \xspace*\x, \yspace*\y+\a) to (\barxsize + \xspace*\x, \yspace*\y+\a);

            % choose which character represents the interval
            \tikzmath{
                \prob = 0;
                \zoomindest = 0;
            }
            \ifthenelse{\isodd{\x}} {
                \tikzmath{\prob = \bprob;}
            } {
                \tikzmath{\prob = \aprob;}
            }

            % draw the mid bar
            \draw (-\barxsize + \xspace*\x, \yspace*\y+\prob*\a) to (\barxsize + \xspace*\x, \yspace*\y+\prob*\a);

            % draw the characters associated to intervals
            \ifthenelse{\isodd{\x}} {
                \node (\i) at (\xspace*\x - 0.75*\chardist, \yspace*\y + \a - 0.5*\a + 0.5*\prob*\a] ) {$b$};
                \node (-\i) at (\xspace*\x - \chardist, \yspace*\y + 0.5*\prob*\a) {$\neg b$};
            } {
                \node (\i) at (\xspace*\x - 0.75*\chardist, \yspace*\y + \a - 0.5*\a + 0.5*\prob*\a] ) {$a$};
                \node (-\i) at (\xspace*\x - \chardist, \yspace*\y + 0.5*\prob*\a) {$\neg a$};
            }

            % Highlight the chosen interval and zoom in
            \ifthenelse{\equal{\charchosen}{1}}{
                \tikzmath{\zoomindest = \yspace*\y;}
                % CHARACTER CHOSEN highlight the higher part
                \draw[very thick] (\xspace*\x, \yspace*\y+ \prob*\a) to (\xspace*\x, \yspace*\y+\a);

                % draw the upper line to zoom in the interval
                \ifthenelse{\not\equal{\x}{3}}{
                    \draw (\xspace*\x + \xsepintexpstr, \yspace*\y+\a) to (\xspace*\x+\xspace-\xsepintexpend, \yspace*\y+\a);
                }
            } {
                \tikzmath{\zoomindest = \yspace*\y + \a;}
                % CHARACTER NOT CHOSEN highlight the lower part
                \draw[very thick] (\xspace*\x, \yspace*\y) to (\xspace*\x, \yspace*\y+\prob*\a);

                % draw the lower line to zoom in the interval
                \ifthenelse{\not\equal{\x}{3}}{
                    \draw (\xspace*\x + \xsepintexpstr, \yspace*\y) to (\xspace*\x+\xspace-\xsepintexpend, \yspace*\y);
                }
            }

            % draw the remaining line to zoom in 
            \ifthenelse{\not\equal{\x}{3}}{
                \draw (\xspace*\x + \xsepintexpstr, \yspace*\y+\prob*\a) -- (\xspace*\x + 0.5*\xspace, \yspace*\y+\prob*\a) -- (\xspace*\x + 0.5*\xspace, \zoomindest) -- (\xspace*\x + \xspace - \xsepintexpend, \zoomindest);
            }

            % PHASE 2: draw the numbers
            % let us now compute all the interval numbers!
            \foreach \num/\den [count=\j] in \triple{

            % variables for starting y position of the number
                \tikzmath{
                    \start = 0;
                    \prob = 0;
                }
                % understand if it is an 'a' int or 'b'
                \ifthenelse{\isodd{\x}} {
                    \tikzmath{\prob = \bprob;}
                } {
                    \tikzmath{\prob = \aprob;}
                }
                % decide y pos to write the number
                \ifthenelse{\equal{\j}{2}}{
                    \tikzmath{\start = \prob;}
                } {
                    \ifthenelse{\equal{\j}{3}} {
                        \tikzmath{\start = 1;}
                    }
                }
                        
                % % new xoffset to adapt the size of fractional numbers
                \tikzmath{\xoffset = \xnumspace;}
                \ifthenelse{\num > 99} {
                    \tikzmath{\xoffset = \xoffset + 0.06;}
                    \ifthenelse{\num > 999} {
                        \tikzmath{\xoffset = \xoffset + 0.06;}
                    } {}
                } {}

                % Decide whether or now we want to highlight the numbers..
                \ifthenelse{\( \equal{\j}{1} \AND \equal{\charchosen}{1} \) \OR \( \equal{\j}{3} \AND \equal{\charchosen}{0} \)} {
                    % first case, we don't want to highlight the numbers in bold
                    % Let's write the numbers
                    \ifthenelse{\equal{\den}{0}} {
                    % denominator is equal to 1 don't print a fraction
                        \node () at (\xspace*\x - \xoffset, \yspace*\y + \start*\a) {$\num$};
                    } {
                        \ifthenelse{\equal{\den}{1}} {
                        % denominator is equal to 1, don't print exponent
                            \node () at (\xspace*\x - \xoffset, \yspace*\y + \start*\a) {$\frac{\num}{2}$};
                        } {
                        % otherwise print the number as a normal fraction
                        \node () at (\xspace*\x - \xoffset, \yspace*\y + \start*\a) {$\frac{\num}{2^{\den}}$};
                        }
                    }
                } { % otheriwise we write them in bold!
                % Let's write the numbers
                    \ifthenelse{\equal{\den}{0}} {
                    % denominator is equal to 1 don't print a fraction
                        \node () at (\xspace*\x - \xoffset, \yspace*\y + \start*\a) {$\mathbf{\num}$};
                    } {
                        \ifthenelse{\equal{\den}{1}} {
                        % denominator is equal to 1, don't print exponent
                            \node () at (\xspace*\x - \xoffset, \yspace*\y + \start*\a) {$\mathbf{\frac{\num}{2}}$};
                        } {
                            % otherwise print the number as a normal fraction
                        \node () at (\xspace*\x - \xoffset, \yspace*\y + \start*\a) {$\mathbf{\frac{\num}{2^{\den}}}$};
                        }
                    }
                }
                
            } % it closes the parenthesis of the second for loop

            % PHASE 3: add info about intervals
            \ifthenelse{\isodd{\i}}{
                \node () at (\xspace*\x - \subsdistx,\yspace*\y - \subsdisty) {\scriptsize $u_{\nodeid}$ char $a$};
            }{
                \node () at (\xspace*\x - \subsdistx,\yspace*\y - \subsdisty) {\scriptsize $u_{\nodeid}$ char $b$};
            }

        } % it closes the parenthesis of the main for loop

        \end{scope}

	\end{tikzpicture}

\caption{
The figure shows an execution of our arithmetic coding for tries. 
The input trie $\mathcal{T}$ to be compressed is in the top left corner and $\Sigma=\{a,b\}$. 
Since $\mathcal{T}$ has $4$ nodes, $2$ edges labeled by $a$, and $1$ edge labeled by $b$, the (unconditioned) probabilities are $p(a) = \frac{1}{2} =  p(\lnot a)$, $p(b)=\frac{1}{4}$ and $p(\lnot b)=\frac{3}{4}$. 
By applying Algorithm~\ref{arithmetic compression} for $k = 0$, the encoder computes the sequence of intervals shown on the right.
The final interval $[l,l+s)$, where $l = \frac{115}{2^7}$ and $s = \frac{27}{2^{12}}$ is shown in the bottom right corner.
The encoder emits the binary sequence $111001101$ corresponding to the mid point of $[l,l+s)$ in binary, truncated to its first $d=9$ most significant bits.
}\label{fig:arithmetic_coding}
\end{figure}

\subsection{Comparison with the label entropy}
A different notion of empirical entropy for ordered node-labeled trees is the \emph{label entropy} introduced by Ferragina et al.~\cite{xBWTconf}.
This entropy has been proposed as a measure for the compressibility of the characters labeling the nodes in a tree based on their context.
In the following we report a trivial adaptation of their formula to ordered \emph{edge}-labeled trees.

\begin{definition}[Label entropy]\cite[Section 6]{xBWTconf}\label{def: label entropy}
    Let $cover(w)$ be the string obtained by concatenating (in any order) the labels outgoing from the nodes having the same length-$k$ context $w\in\Sigma^k$.
    The (unnormalized) $k$-th order label entropy $\mathcal{H}^{label}_k$ for a tree $\mathcal{T}$ is defined as $\mathcal{H}^{label}_k(\mathcal{T}) = \sum_{w \in \Sigma^k} |cover(w)|\mathcal{H}_0(cover(w))$ where $\mathcal{H}_0(cover(w))$ is the $0$-th order empirical entropy of the string $cover(w)$.
\end{definition}

It easy to observe that, due to the log-sum inequality also $\mathcal H_k^{label}(\mathcal T)$ is non-increasing when $k$ increases.
Furthermore, we recall that for $k=0$ the label entropy $H_0^{label}(\mathcal{T})$ coincides with the (unnormalised) $0$-th order empirical entropy $H_0(S)$ of a string $S$ obtained by concatenating in any order the labels appearing in $\mathcal{T}$.
We note that $\mathcal{H}^{label}_k(\mathcal{T})$ and $\mathcal{H}_k(\mathcal{T})$ have different normalization factors which are the number $n-1$ of labels and the number $n$ of nodes, respectively.
We observe the following main differences between the above two entropies.
While $\mathcal{H}_{k}^{label}$ was originally defined for labeled ordered trees and considers only the labels appearing in the tree, our entropy $\mathcal{H}_k$ of Definition~\ref{def: k-entropy} is specifically designed for the subclass of tries and differently from $\mathcal{H}^{label}_k$ it encodes also the topology of the trie, indeed, as shown in Theorem~\ref{theorem: reachability} every trie $\mathcal{T}$ can be represented within $n\mathcal{H}_k(\mathcal{T})+2$ bits of space if we assume that the symbols statistics are already stored. 
In the following, we compare the two measures $\mathcal H_k(\mathcal T)$ and $\mathcal H_k^{label}(\mathcal T)$ when both are applied to a trie $\mathcal T$.
Indeed, although $\mathcal H_k^{label}(\mathcal T)$ was originally introduced for trees, different works in the literature (see for instance~\cite{HonTrieLabelEntropy, Kosolobov2019}) used this measure to analyse the space occupation of compressed representations for tries.
We start this comparison by recalling a result already proved in~\cite[Section 4]{Kosolobov2019}; for the sake of completeness and uniformity, below we report their proof rewritten using our notation.

\begin{lemma}\cite[Section 4]{Kosolobov2019}\label{lemma: Kosolobov2019}
$\sum_{w \in  \Sigma^k}\sum_{c \in \Sigma} n_{w,c}\log ( \frac{n_w}{n_{w,c}})\leq (n-1)\mathcal{H}^{label}_k(\mathcal{T})+ \log e$ holds for every trie $\mathcal{T}$
\end{lemma}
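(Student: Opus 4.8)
The statement compares $\sum_{w}\sum_{c} n_{w,c}\log(n_w/n_{w,c})$ against $(n-1)\mathcal{H}^{label}_k(\mathcal{T})$. The natural approach is to work context by context: fix a string $w\in\Sigma^k$ with $n_w>0$ and compare the single-context contribution $\sum_{c\in\Sigma} n_{w,c}\log(n_w/n_{w,c})$ against $|cover(w)|\,\mathcal{H}_0(cover(w))$. Note that $|cover(w)| = \sum_{c\in\Sigma} n_{w,c}$ (the total number of outgoing edges from nodes with context $w$), whereas $n_w$ is the number of such nodes. So the left side uses $n_w$ in the logarithm, while $\mathcal{H}_0(cover(w))$ uses $|cover(w)| = \sum_c n_{w,c}$; writing $m_w := \sum_c n_{w,c}$, the per-context target becomes
\[
\sum_{c\in\Sigma} n_{w,c}\log\frac{n_w}{n_{w,c}} \;\le\; \sum_{c\in\Sigma} n_{w,c}\log\frac{m_w}{n_{w,c}} \;+\; (\text{slack}).
\]
Since $m_w \le n_w$ always (each node contributes at most $\sigma$ but $m_w$ counts edges and $n_w$ counts nodes — wait, actually $m_w$ can exceed $n_w$ when nodes have high out-degree), I need to be careful. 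The key inequality to isolate is $\sum_c n_{w,c}\log(n_w/m_w) = m_w\log(n_w/m_w)$, so the per-context gap is exactly $m_w\log(n_w/m_w)$. When $m_w \ge n_w$ this is $\le 0$ and we are done for that context; the trouble is contexts where $m_w < n_w$, i.e., where the average out-degree among nodes with context $w$ is less than $1$.

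First I would reduce to bounding $\sum_{w : m_w < n_w} m_w\log(n_w/m_w)$ by $\log e$. The plan is to observe that $x\log(N/x)$ as a function of $x$ is maximized, and more usefully, use the elementary bound $m_w\log(n_w/m_w) \le n_w \cdot \frac{m_w}{n_w}\log\frac{n_w}{m_w} \le n_w \cdot \frac{1}{e}\log e$ via the inequality $t\ln(1/t)\le 1/e$ for $t\in(0,1]$. But summing $n_w/e \cdot \log e$ over all contexts gives roughly $(n/e)\log e$, far too large. So this crude per-context bound is insufficient — the real argument must exploit that $\sum_w m_w = n-1$ (total edges) and there are at most $\sigma^k$ contexts, or more cleverly that contexts with $m_w < n_w$ are "rare" in aggregate. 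The cleaner route: $m_w\log(n_w/m_w)$, summed over contexts with $m_w\ge 1$, and using concavity of $\log$ (Jensen) across contexts, should collapse to a single $\log e$ term. I expect the intended argument applies the bound $\sum_i a_i \log(b_i/a_i) \le (\sum_i a_i)\log(\sum_i b_i / \sum_i a_i)$ — the log-sum inequality in the "reverse" direction — to the pairs $(m_w, n_w)$, but one must check the sign; actually the right tool is that for the function $g(x)=x\log(1/x)$ one has $\sum g(x_i) \le g(\sum x_i) + (\text{something})$, and since $\sum_w$ of the relevant quantities telescopes against $n$, the $\log e$ appears from $\max_{0<t\le 1} t\log(1/t)\cdot$(count) being controlled by $e^{-1}\log e$ and there being effectively "one unit" of deficit.

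The main obstacle, then, is precisely handling the contexts where the average out-degree is below $1$ and showing their total contribution is bounded by the single additive constant $\log e$ rather than something scaling with $n$ or $\sigma^k$. I would look to exploit a global constraint — most likely that $\sum_w(n_w - m_w)$ or $\sum_w m_w$ is tightly pinned ($\sum_w n_w = n$, $\sum_w m_w = n-1$, so $\sum_w(n_w - m_w) = 1$), which is exactly the kind of "total deficit equals one" fact that produces an additive $\log e$ after applying $t\log(1/t)\le e^{-1}\log e$ at the aggregate level. Once the per-context reduction and this aggregate bound are in place, summing over all $w\in\Sigma^k$ and recognizing $\sum_w m_w\,\mathcal{H}_0(cover(w)) = \sum_w |cover(w)|\,\mathcal{H}_0(cover(w)) = (n-1)\mathcal{H}^{label}_k(\mathcal{T})$ completes the proof. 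Since the excerpt says this proof is adapted from Kosolobov's paper, I would follow that source's bookkeeping for the constant, being careful that the $\log e$ (not $(\log e)\cdot$anything) is genuinely achievable from the $\sum_w(n_w-m_w)=1$ constraint.
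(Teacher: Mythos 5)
Your plan is correct and takes essentially the same route as the paper: both compare $n_w$ against $m_w=\lvert cover(w)\rvert=\sum_c n_{w,c}$ context by context, and both extract the additive $\log e$ from the global identity $\sum_w (n_w-m_w)=n-(n-1)=1$ via a first-order (concavity) bound on the logarithm. The closing step you leave slightly open does go through exactly as you suspect: $\sum_w m_w\log(n_w/m_w)\leq \sum_w (n_w-m_w)\log e=\log e$ by $\ln(1+x)\leq x$ (equivalently, your log-sum-inequality route gives $(n-1)\log\frac{n}{n-1}\leq\log e$), which is the same tangent-line bound $\log(x+y)\leq\log x+\frac{y}{x}\log e$ that the paper applies term by term over the pairs $(w,c)$.
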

\begin{proof} 
Let $m_w = \sum_{c \in \Sigma} n_{w,c}$ be the total number of edges outgoing from the nodes with context $w$. 
We observe that $m_w=\lvert cover(w)\rvert $ and $n_{w,c}$ is the number of times the symbol $c$ occurs in $cover(w)$. 
Therefore the (unnormalized) label entropy $(n-1)\mathcal{H}_{k}^{label}(\mathcal{T})$ can be rewritten as $\sum_{w \in  \Sigma^k}\sum_{c \in \Sigma} n_{w,c}\log \frac{m_w}{n_{w,c}}$. 
Now we compute how much it can increase if we replace $m_w$ with $n_w$ in its definition. 
We start from $\sum_{w \in  \Sigma^k}\sum_{c \in \Sigma} n_{w,c}\log \frac{n_w}{n_{w,c}}$, by rewriting $\frac{n_w}{n_{w,c}}$ as $\frac{m_{w}}{n_{w,c}}+\frac{n_w-m_w}{n_{w,c}}$ we obtain $\sum_{w \in  \Sigma^k}\sum_{c \in \Sigma} n_{w,c}\log (\frac{m_w}{n_{w,c}}+\frac{n_w-m_w}{n_{w,c}})$. 
Now since $\log x$ is a concave and derivable function it holds that $\log(x+y)\leq \log x + \frac{y}{x}\log e$ for any real numbers $x,y$ such that $x>0$ and $x+y>0$, since $\frac{1}{x}\log e$ is the derivative of $\log x$.
Therefore the previous formula is upper-bounded by $\sum_{w \in  \Sigma^k}\sum_{c \in \Sigma} n_{w,c}\log (\frac{m_w}{n_{w,c}})+(\log e \sum_{w \in  \Sigma^k}\sum_{c \in \Sigma}n_{w,c}\frac{n_w-m_w}{m_w})$. The claimed result follows because the left summation is $(n-1)\mathcal{H}_{k}^{label}(\mathcal{T})$ while for the second one it holds that $\sum_{w \in  \Sigma^k}\sum_{c \in \Sigma}n_{w,c}\frac{n_w-m_w}{m_w} = 1$ because $\sum_{c \in \Sigma}n_{w,c}=m_w$ while $\sum_{w \in  \Sigma^k} n_w=n$ and $\sum_{w \in  \Sigma^k} m_w=n-1$.
\end{proof}

The next result easily follows from the previous Lemma, and prove that our trie entropy can be larger than the label entropy (both unnormalised) by at most an $1.443n$ additive factor.
%\sout{We recall that within $(n-1)\mathcal{H}^{label}_k(\mathcal{T})$ space it is possible to encode only the labels of the trie.}
We recall that $(n-1)\mathcal{H}^{label}_k(\mathcal{T})$ considers only the information about the labels and not the topology of the trie.

\begin{corollary}\label{cor: comparison label entropy}
For every trie $\mathcal{T}$ it holds that $n\mathcal{H}_k(\mathcal{T})\leq (n-1)\mathcal{H}^{label}_k(\mathcal{T})+1.443n$
\end{corollary}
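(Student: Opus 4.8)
The plan is to start from Definition~\ref{def: k-entropy}, multiply through by $n$, and split $n\mathcal{H}_k(\mathcal{T})$ into two double sums: a ``positive part'' $P = \sum_{w \in \Sigma^k}\sum_{c \in \Sigma} n_{w,c}\log(n_w/n_{w,c})$ and a ``complement part'' $N = \sum_{w \in \Sigma^k}\sum_{c \in \Sigma} (n_w - n_{w,c})\log(n_w/(n_w - n_{w,c}))$. The first part is already under control: Lemma~\ref{lemma: Kosolobov2019} gives $P \leq (n-1)\mathcal{H}^{label}_k(\mathcal{T}) + \log e$. Hence everything reduces to proving the single bound $N \leq (n-1)\log e$; adding the two estimates then yields $n\mathcal{H}_k(\mathcal{T}) \leq (n-1)\mathcal{H}^{label}_k(\mathcal{T}) + n\log e$, and since $\log e = 1/\ln 2 = 1.4426\ldots < 1.443$ the corollary follows because $\log e + (n-1)\log e = n\log e \leq 1.443\,n$.

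To bound $N$ termwise I would reuse exactly the concavity inequality already employed inside the proof of Lemma~\ref{lemma: Kosolobov2019}: for reals $x>0$ with $x+y>0$ one has $\log(x+y) \leq \log x + \frac{y}{x}\log e$, since $\frac{1}{x}\log e$ is the derivative of the concave function $\log x$. Fix a context $w$ with $n_w>0$ and a character $c$; if $n_{w,c}=n_w$ the corresponding term of $N$ is $0$ by the convention $0\log(x/0)=0$, so assume $n_{w,c}<n_w$. Writing $n_w = (n_w-n_{w,c}) + n_{w,c}$ and applying the inequality with $x = n_w - n_{w,c}$ and $y = n_{w,c}$ gives $\log\frac{n_w}{n_w - n_{w,c}} \leq \frac{n_{w,c}}{n_w - n_{w,c}}\log e$, hence $(n_w - n_{w,c})\log\frac{n_w}{n_w - n_{w,c}} \leq n_{w,c}\log e$. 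Summing over all $w \in \Sigma^k$ and $c \in \Sigma$ and using $\sum_{w}\sum_{c} n_{w,c} = \sum_{c \in \Sigma} n_c = n-1$ (the number of edges of $\mathcal{T}$) yields $N \leq (n-1)\log e$, completing the argument. An equivalent route is the binary-entropy inequality $-(1-p)\log(1-p) \leq p\log e$ for $p\in[0,1)$ applied with $p = p_w(c) = n_{w,c}/n_w$ and then multiplied by $n_w$; I would present whichever is cleaner, probably the concavity version for consistency with Lemma~\ref{lemma: Kosolobov2019}.

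I do not expect a serious obstacle: the statement is essentially a one-line consequence of Lemma~\ref{lemma: Kosolobov2019} together with the observation that the complement contribution $N$ is controlled by the very same trick. The only points needing a little care are the boundary cases $n_{w,c}\in\{0,n_w\}$ (handled by the $0\log(x/0)=0$ convention) and the numerical check $\log e < 1.443$. In the write-up I would also echo the sentence preceding the corollary, stressing that the additive $1.443\,n$ is precisely the overhead for also encoding the trie topology inside $n\mathcal{H}_k$, and that it is strictly below the $2n-\Theta(\log n)$ bits one would otherwise spend storing an $n$-node ordered tree separately.
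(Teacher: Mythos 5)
Your proposal is correct and follows essentially the same route as the paper: bound the positive part via Lemma~\ref{lemma: Kosolobov2019} and bound the complement part by $(n-1)\log e$, then combine. The only cosmetic difference is that the paper cites the known bitvector inequality $n\mathcal{H}_0(B)\leq m\log\frac{n}{m}+m\log e$ for the complement part, whereas you rederive it from the same concavity inequality used in Lemma~\ref{lemma: Kosolobov2019}; both are the identical estimate $(n_w-n_{w,c})\log\frac{n_w}{n_w-n_{w,c}}\leq n_{w,c}\log e$.
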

\begin{proof}
It is known that given a bitvector $B$ of size $n$ with $m$ ones it holds that $n\mathcal{H}_0(B) \leq m \log \frac{n}{m}+m\log e$~\cite[Sect. 2.3.1]{compactDataStructures}.
We note that each additive term $n_{w,c}\log \left(\frac{n_w}{n_{w,c}} \right) + (n_w - n_{w,c})\log \left( \frac{n_w}{n_w-n_{w,c}}\right )$ appearing in the definition of $n\mathcal{H}_k(\mathcal{T})$ corresponds to the (unnormalised) $0$-th order entropy of a bitvector $B$ of length $n_w$ with $n_{w,c}$ ones. 
Therefore it holds that $n\mathcal{H}_k(\mathcal{T}) \leq \sum_{w \in  \Sigma^k}\sum_{c \in \Sigma} n_{w,c}\log \frac{n_w}{n_{w,c}}+(n-1)\log e$ where the rightmost term derives from $\sum_{w \in  \Sigma^k}\sum_{c \in \Sigma} n_{w,c} = n-1$. 
Now by upper-bounding the left summation using Lemma~\ref{lemma: Kosolobov2019} and $\log e$ with $1.443$ we obtain the claimed result.
\end{proof}

Consider the worst-case entropy of ordered unlabeled trees, which is given by the formula $2n - \Theta(\log n)$~\cite[Section 2.1]{compactDataStructures}. 
Due to this latter result, we observe that the quantity $(n-1)\mathcal H_k^{label}(\mathcal T) + 2n - \Theta(\log n)$, which corresponds to compressing the labels of the trie and storing its tree topology separately within its worst-case space, is asymptotically lower-bounded by our $k$-th order empirical entropy $n\mathcal H_k(\mathcal T)$, for every possible trie $\mathcal T$.
In addition to that, in the next proposition we exhibit an infinite family of tries for which $(n-1)\mathcal H_k^{label}(\mathcal T) = \Omega(n)$ holds for every arbitrary large integer $k \geq 0$, while $n\mathcal H_k(\mathcal T) =0$, for every $k$ strictly larger than $0$.
See Figure~\ref{fig: trie confronto entropie} for an example.

\begin{proposition}\label{prop: infinite family of tries}
    For every arbitrarily large integers $n$ and $y$ satisfying $n = \sum_{i=0}^{h} y^i$, for some integer $h\geq 0$, there exists a trie $\mathcal T$ of $n$ nodes and height $h$ satisfying the following properties.
    \begin{enumerate}
        \item $n\mathcal H_k(\mathcal T) = 0$ for every $k$ with $k \geq 1$.
        \item $(n-1)\mathcal H_k^{label}(\mathcal T) \geq (n-1)\log y$ for every non-negative integer $k$.
    \end{enumerate}
\end{proposition}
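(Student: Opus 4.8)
The plan is to take $\mathcal{T}$ to be the complete $y$-ary tree of height $h$ whose edge labels are \emph{stratified by level}: fix pairwise disjoint blocks $\Sigma_1,\ldots,\Sigma_h\subseteq\Sigma$ with $|\Sigma_i|=y$, and for every node $u$ at depth $i-1<h$ label the $y$ edges leaving $u$ bijectively with the symbols of $\Sigma_i$. Equivalently, the nodes at depth $d$ are identified with the strings $\sigma_1\cdots\sigma_d$ with $\sigma_j\in\Sigma_j$; such a node has incoming label $\sigma_d$ and outgoing set $out(u)=\Sigma_{d+1}$ if $d<h$ and $out(u)=\emptyset$ if $d=h$. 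By construction $\mathcal{T}$ has $\sum_{i=0}^{h}y^i=n$ nodes and height $h$, as required (for $y=1$ this is just a path of $h+1$ nodes).

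The single structural observation behind both statements is that in $\mathcal{T}$ the outgoing set of a node depends only on its incoming label: $out(u)=\Sigma_{i+1}$ when $\lambda(u)\in\Sigma_i$ with $1\le i<h$, $out(u)=\emptyset$ when $\lambda(u)\in\Sigma_h$, and $out(u)=\Sigma_1$ when $u$ is the root (recall $\lambda(\text{root})=\#$). Since $\lambda_k(u)$ always ends with $\lambda(u)$, any two nodes lying in the same class $V_w=\{u:\lambda_k(u)=w\}$ share the same incoming label — the last symbol of $w$ — and hence, by the structural observation, the same outgoing set (when that last symbol is $\#$ the class is just $\{\text{root}\}$). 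Therefore $n_{w,c}\in\{0,n_w\}$ for every $c\in\Sigma$, so every summand of $\mathcal{H}_k$ in Definition~\ref{def: k-entropy} vanishes; this gives $\mathcal{H}_k(\mathcal{T})=0$, i.e.\ $n\mathcal{H}_k(\mathcal{T})=0$, for all $k\ge 1$, which is claim~1.

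For claim~2, fix $k\ge 0$ and note that the classes $V_w$, over all (possibly $\#$-padded) length-$k$ contexts $w$, partition $V$. By the structural observation each $V_w$ consists either entirely of leaves, which contribute the empty string to $cover(w)$, or entirely of internal nodes, which all have the \emph{same} outgoing set, equal to a full block $\Sigma_i$ of $y$ distinct symbols; in the latter case $cover(w)$ is made of exactly $n_w$ copies of each of those $y$ symbols, so $\mathcal{H}_0(cover(w))=\log y$ and $|cover(w)|=y\,n_w$. Hence $(n-1)\mathcal{H}^{label}_k(\mathcal{T})=\sum_{w}|cover(w)|\,\mathcal{H}_0(cover(w))=\log y\sum_{w\ \text{internal}}y\,n_w$, and since the internal classes partition the set of internal nodes, $\sum_{w\ \text{internal}}n_w=\sum_{d=0}^{h-1}y^d=(n-1)/y$, so $\sum_{w\ \text{internal}}y\,n_w=n-1$. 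Therefore $(n-1)\mathcal{H}^{label}_k(\mathcal{T})=(n-1)\log y$, which proves claim~2 (in fact with equality), for every $k\ge 0$.

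The genuinely creative step is realizing the labels must be stratified level by level: reusing the same symbols across levels would put a cycle in the ``label graph'' (arc $c\to c'$ whenever $c'$ may follow $c$ on a root path) and hence give an infinite unfolding, whereas disjoint blocks make that graph a layered DAG and are exactly what forces the outgoing set to be a function of the incoming label — the property that kills $\mathcal{H}_k$. The only bookkeeping point to be careful about is the treatment of \emph{short} contexts: a node at depth $d<k$ carries a $\#$-padded context, and such nodes must be kept inside the partition of $V$ used for claim~2, since this is precisely what makes the identity $\sum_{w\ \text{internal}}y\,n_w=n-1$ hold (under the literal reading ``$w\in\Sigma^k$'' one instead gets only $\sum_{i=k+1}^{h}y^i<n-1$, and the claim would fail for $k\ge h$). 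The corner cases $y=1$ (both sides of claim~2 equal $0$) and $h\le 1$ are immediate from the same computation.
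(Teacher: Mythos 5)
Your construction and the key structural observation (because the blocks $\Sigma_i$ are pairwise disjoint, the outgoing set of a node is a function of its incoming label) are exactly those of the paper's proof. Your argument for claim~1 is correct; you treat all $k\ge 1$ directly where the paper does $k=1$ and then invokes monotonicity of $\mathcal H_k$, an immaterial difference. Your remark that the $\#$-padded contexts must be included in the partition also matches the paper's (implicit) convention.

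For claim~2, however, your direct computation fails at $k=0$. The homogeneity step --- ``each class $V_w$ consists entirely of leaves or entirely of internal nodes with the same outgoing set'' --- rests on all nodes of $V_w$ sharing the last symbol of $w$, which presupposes $k\ge 1$. For $k=0$ there is a single class $V_\epsilon=V$, mixing leaves and internal nodes of all depths with different outgoing sets, so the step does not apply; and the equality you assert, $(n-1)\mathcal{H}^{label}_0(\mathcal{T})=(n-1)\log y$, is actually false for $h\ge 2$: here $cover(\epsilon)$ contains each symbol of $\Sigma_i$ exactly $y^{i-1}$ times, so by the grouping property of entropy $\mathcal{H}_0(cover(\epsilon))$ equals $\log y$ \emph{plus} the (strictly positive, for $h\ge 2$) entropy of the depth distribution. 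The inequality of claim~2 does still hold at $k=0$, and the fix is one line: your computation is valid (and gives equality) for every $k\ge 1$, and $\mathcal{H}^{label}_0(\mathcal{T})\ge\mathcal{H}^{label}_1(\mathcal{T})$ by the monotonicity of the label entropy in $k$. This is essentially the route the paper takes --- it establishes equality only for $k\ge h$ and then propagates the lower bound to all smaller $k$ via monotonicity --- so you should either restrict your exact computation to $k\ge 1$ and close $k=0$ by monotonicity, or drop the parenthetical ``in fact with equality'' and the claim that the argument covers $k=0$ as stated.
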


\begin{proof}
Consider the (unlabeled) complete and balanced $y$-ary tree $\mathcal T'$ of $n$ nodes, height $h=\log (n+1)-1$ and an alphabet $\Sigma$ of size $\sigma=hy$, not including the special character $\#$.
We build the trie $\mathcal T$ by labeling the edges of $\mathcal{T'}$ as follows.
We partition $\Sigma$ into $h$ disjoint subsets $\Sigma_d$ with $0\leq d < h$ each of size $y$.
Then for every internal node $u$ at depth $d$ we set $out(u)=\Sigma_d$.
In the following we show that $n\mathcal H_1(\mathcal T) = 0$, this implies the claimed result for the higher orders since $n\mathcal H_{k+1}(\mathcal T) \leq n\mathcal H_k(\mathcal T)$ and $n\mathcal H_k(\mathcal T)\geq 0$ hold for every $k\geq 0$.
The possible contexts of length $k=1$ are $w=\#$ and $w\in \Sigma=\bigcup \Sigma_d$ and in particular by the way $\mathcal{T}$ is built the following observations hold: 1) the only node reached by $w=\#$ is the root of the trie; and 2) if instead $w\in \Sigma_d$ for some $d$ then all the nodes with context $w$ are at depth $d+1$. 
These observations imply that for every context $w$ of length $k=1$, all the nodes $u$ with $\lambda_1(u)=w$ share the same set of outgoing labels and thus for every $c\in \Sigma$ it holds that $n_{w,c}=n_w$ or $n_{w,c}=0$.
As a consequence the contribution of each 
$\frac{n_{w,c}}{n}\log\left(\frac{n_w}{n_{w,c}}\right) + \frac{n_w - n_{w,c}}{n}\log( \frac{n_w}{n_w-n_{w,c}})$ term to $\mathcal H_1(\mathcal T)$ is $0$ for every $w,c \in \Sigma$ and thus $n\mathcal H_1(\mathcal T)=0$.

Now we prove that $(n-1)\mathcal H_k^{label}(\mathcal T) \geq (n-1)\log y$ for every $k \geq 0$.
Since in a trie every node is reached by a distinct string from the root, we know that for every integer $k$, with $k \geq h$, and for every string $w \in \Sigma^k$, there exists at most a node $u$ in $\mathcal T$, such that $\lambda_k(u) = w$.
Therefore, due to the construction of $\mathcal T$, we know that $cover(w)$ is either the empty string $\epsilon$ or is a string of length $y$ formed by all the characters of $\Sigma_d$, where $d$ is the depth of the unique internal node $u$ such that $\lambda_k(u)  = w$.
Consequently, we can observe that $\mathcal{H}_0(cover(w)) = 0$ if $cover(w) = \epsilon$ and $\mathcal{H}_0(cover(w)) = \log y$ otherwise, since in this latter case we have that all the $y$ characters of $cover(w)$ are distinct.
Therefore, for every $k \geq h$ arbitrarily large, it follows that $(n-1)\mathcal H_k^{label}(\mathcal T) = \sum_{w \in \Sigma^k}\lvert cover(w) \rvert \mathcal H_0(cover(w)) = \sum_{u\in I} \lvert out(u)\rvert \log y = (n-1)\log y$, where $I$ is the set of internal nodes in $\mathcal{T}$.
Finally, since $\mathcal H_{k+1}^{label}(\mathcal T) \leq \mathcal H_{k}^{label}(\mathcal T)$ for every integer $k \geq 0$, the final result follows.
\end{proof}

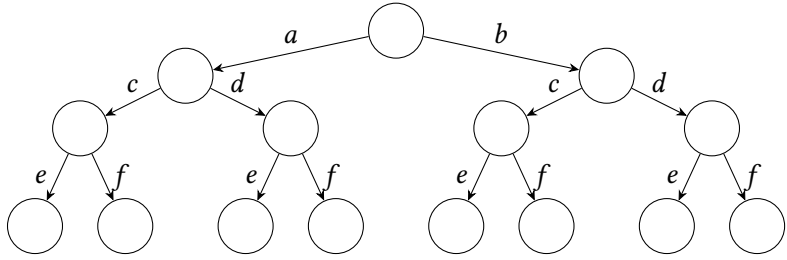
\begin{figure}
    \centering
\begin{tikzpicture}[
        >={Stealth[length=1.4mm, width=1.2mm]},
        dim/.style={minimum size=1.9em, font={\small}},
    ]

        \node[state, dim] (1) at (4.9,2.8) {};

        \node[state, dim] (2) at (2.1,2.2) {};
        \node[state, dim] (3) at (7.7,2.2) {};

        \node[state, dim] (4) at (0.7,1.5) {};
        \node[state, dim] (5) at (3.5,1.5) {};
        \node[state, dim] (6) at (6.3,1.5) {};
        \node[state, dim] (7) at (9.1,1.5) {};
        
		\node[state, dim] (8) at (0.1,0.2) {};
        \node[state, dim] (9) at (1.3,0.2) {};
        \node[state, dim] (10) at (2.9,0.2) {};
        \node[state, dim] (11) at (4.1,0.2) {};
        \node[state, dim] (12) at (5.7,0.2) {};
        \node[state, dim] (13) at (6.9,0.2) {};
        \node[state, dim] (14) at (8.5,0.2) {};
        \node[state, dim] (15) at (9.7,0.2) {};
		
		\draw[->] (1) to node [above] {$a$} (2);
		\draw[->] (1) to node [above] {$b$} (3);
        
		\draw[->] (2) to node [above] {$c$} (4);
        \draw[->] (2) to node [above] {$d$} (5);
		\draw[->] (3) to node [above] {$c$} (6);
		\draw[->] (3) to node [above] {$d$} (7);

        \draw[->] (4) to node [left] {$e$} (8);
        \draw[->] (4) to node [right] {$f$} (9);
		\draw[->] (5) to node [left] {$e$} (10);
		\draw[->] (5) to node [right] {$f$} (11);
        \draw[->] (6) to node [left] {$e$} (12);
        \draw[->] (6) to node [right] {$f$} (13);
		\draw[->] (7) to node [left] {$e$} (14);
		\draw[->] (7) to node [right] {$f$} (15);
\end{tikzpicture}
    \caption{The figure shows a trie $\mathcal T$ belonging to the family of tries described in Proposition~\ref{prop: infinite family of tries}. In this example, we have that $n = 15$, $y = 2$, and $h = 3$, moreover the alphabet partition is $\Sigma_0 = \{a,b\}$, $\Sigma_1 = \{c,d\}$, and $\Sigma_2 = \{e,f\}$.
    For this trie $(n-1)\mathcal H_k^{label}(\mathcal T) \geq (n - 1)\log y = 14$ holds for every $k \geq 0$, while $n\mathcal H_k(\mathcal T) = 0$, for every $k \geq 1$.}
    \label{fig: trie confronto entropie}
\end{figure}

\section{The XBWT of a trie}\label{section: XBWT}

The XBWT~\cite{xBWTjournal} introduced by Ferragina et al., is an extension of the original Burrows-Wheeler transform for strings~\cite{BWT} to ordered node-labeled trees.
Analogously to its string counterpart, the XBWT can be employed to simultaneously compress and index the input tree~\cite{xBWTjournal}.
In this section, we show the existence of a representation for the XBWT of a trie $\mathcal T$ using a number of bits upper-bounded by $n\mathcal H_k(\mathcal T) + o(n)$ for every sufficiently small integer $k$ \emph{simultaneously}.
In the following, we consider the definition of the XBWT, denoted hereafter by $\BWT(\mathcal T)$, proposed by Prezza for the more specific case of tries~\cite[Definition 3.1]{rindexTrie}.
Let $r$ be the root of the trie $\mathcal{T}$, and let $\lambda(r \rightsquigarrow u)$ be the string labeling the downward path from $r$ to $u$.
Formally, if $u$ has depth $d$ then $\lambda(r\rightsquigarrow u)=\lambda_d(u)$.
Moreover, let $u_1,u_2,\ldots, u_n$ be the sequence of $\mathcal T$ nodes sorted \emph{co-lexicographically}, i.e., the nodes are sorted according to the co-lexicographic order of the string labeling their incoming path from the root.
Formally, for any two arbitrary nodes $u_i$ and $u_j$, we have that $i \leq j$ holds if and only if $\lambda(r\rightsquigarrow u_i)\preceq \lambda(r\rightsquigarrow u_j)$.
The XBWT of a trie $\BWT(\mathcal T)$~\cite{rindexTrie}, is informally defined as the sequence of $n$ sets, where the $i$-th set contains the labels outgoing from the $i$-th node $u_i$ according to the co-lexicographic order.

\begin{definition}[XBWT of a trie]\cite[Definition 3.1]{rindexTrie}\label{definition: Burrows-Wheeler}
Given a trie $\mathcal{T}$ with co-lexicographic sorted nodes $u_1,\ldots,u_n$, then $
\BWT(\mathcal{T})=out(u_1),\ldots,out(u_n)$
\end{definition}

Figure~\ref{fig: example XBWT} shows an example of the XBWT of a trie.
The XBWT is an invertible transformation and it can be computed in $O(n)$ time using the algorithm proposed by Ferragina et al.~\cite[Theorem 2]{xBWTjournal}.
It is easy to see that this transformation can be represented using $\sigma$ bitvectors $B_c$ of length $n$ each.
Indeed, to store the $\BWT(\mathcal T)$ we can simply set the $i$-th bit of $B_c$ to $1$ if the $i$-th node in co-lexicographic order has an outgoing edge labeled by the character $c$.
Formally, for every character $c \in \Sigma$ and integer $i \in [n]$, we set $B_c[i] = 1$ if and only if $c \in out(u_i)$.
This representation of the XBWT of a trie based on the aforementioned $\sigma$ bitvectors $B_c$ was (implicitly) proposed by Belazzougui~\cite[Theorem 2]{BelazzouguiAhoCorasick} to compress the trie corresponding to the \texttt{next} transition function of the Aho-Corasick automaton~\cite{AC75} for a set of strings. 
In~\cite[Theorem 2]{BelazzouguiAhoCorasick}, Belazzougui showed that, when $\sigma\leq n^{\varepsilon}$ for some constant $0 < \varepsilon < 1$, by compressing each bitvector $B_c$ with the ID representation of Lemma~\ref{thm:RRR}, a trie $\mathcal{T}$ can be represented within $(n-1)H_0^{label}(\mathcal{T})+1.443n+o(n)$ bits of space.
Consider now the array $C$ storing for every character $c \in \Sigma$ the number of nodes reached by a symbol strictly smaller than $c$. Formally, $C[c] = (\sum_{c' < c}n_{c'}) + 1$, where the plus $1$ term stems from the root.
As already noted in~\cite{BelazzouguiAhoCorasick}, if in addition to these IDs we store this array $C$, then within the same compressed space we can support in $O(1)$ time the $child(i,c)$ operation defined as follows. 
The operation $child(i,c)$ returns $-1$ if the node $u_i$ has no outgoing edge labeled $c$, otherwise it returns the co-lexicographic rank $j$ of the child node $u_j$ having incoming label $c$.
Indeed partial rank operations are supported in $O(1)$ time by the indexable dictionaries and, due to the properties of the XBWT~\cite{xBWTjournal}, it holds that $child(i,c) = -1$ iff $\prank(i,B_c)= -1$ and $child(i,c)=C[c]+\prank(i,B_c)$ otherwise~\cite{BelazzouguiAhoCorasick}. 
Later, Hon et al.~\cite{HonTrieLabelEntropy} made clear the connection between Belazzougui's representation and the XBWT and showed that the above representation of the $\BWT(\mathcal T)$ can be further compressed to $(n-1)H_k^{label}(\mathcal{T})+O(n)$ bits of space for any sufficiently small fixed $k$.
They obtained this result by partitioning every bitvector $B_c$ into sub-intervals obtained by grouping together the (contiguous) positions corresponding to the nodes reached by the same context of length $k$ and by individually compressing the corresponding bitvector portion.
Also this representation supports $child(i,c)$ in $O(1)$ time~\cite{HonTrieLabelEntropy}.
The next existing compressed representation of the $\BWT(\mathcal T)$ that we present is that of Kosolobov and Sivukhin~\cite[Section 4]{Kosolobov2019}.
In their solution~\cite[Lemma 6]{Kosolobov2019}, the authors show that by compressing every bitvector individually using the fixed block compression boosting technique~\cite{FixBlockCompressionBoostConf,FixBlockCompressionBoostJournal} one can reach $(n-1)H_k^{label}(\mathcal{T})+1.443n+o(n)$ bits of space for every sufficiently small $k$ \emph{simultaneously}. Later in this section we prove that the space occupation of their solution can be analysed in terms of our trie entropy and in particular it can be represented in at most $nH_k(\mathcal T)+o(n)$ bits of space, for every integer $k$ sufficiently small simultaneously.
Before diving into the details of the representation, we recall a result from~\cite{FixBlockCompressionBoostJournal}, which is of central importance to the fixed block compression boosting technique for strings~\cite{FixBlockCompressionBoostConf,FixBlockCompressionBoostJournal}, to the space analysis in~\cite[Lemma 6]{Kosolobov2019}, and to the proofs of Theorems~\ref{theorem: trieHk id} and \ref{theorem: trieHk fid}.
We recall that the measure $\mathcal{H}_0$ considered in the next result is the $0$-th order empirical entropy for strings.

\begin{lemma}\cite[Lemma 4]{FixBlockCompressionBoostJournal}\label{lemma: partitionedH0}
    Let $X_1\cdots X_\ell$ be an arbitrary partition of a string $X$ into $\ell$ blocks and let $X_1^b\cdots X^b_m$ be a partition of $X$ into $m$ blocks of size at most $b$, then $\sum_{i=1}^m\lvert X_i^b \rvert \mathcal{H}_0(X_i^b) \leq \sum_{i=1}^\ell\lvert X_i \rvert \mathcal{H}_0(X_i)+(\ell-1)b$
\end{lemma}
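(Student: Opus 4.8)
The plan is to route both partitions through their common refinement. For a partition $Q = Y_1\cdots Y_q$ of $X$, write $\Phi(Q) = \sum_{i=1}^{q} |Y_i|\,\mathcal{H}_0(Y_i)$ for the total (unnormalized) zero-th order entropy of its blocks. I would establish two facts. \emph{(a) Refining a partition never increases $\Phi$}: by an immediate induction it suffices to treat a single merge, i.e. $|Y_1|\mathcal{H}_0(Y_1) + |Y_2|\mathcal{H}_0(Y_2) \le |Y_1Y_2|\,\mathcal{H}_0(Y_1Y_2)$, which is exactly the log-sum inequality already used in this paper to prove $\mathcal{H}_{k+1}\le\mathcal{H}_k$. \emph{(b) Cutting one block $Y$ in two decreases $\Phi$ by at most $|Y|$}: this is the quantitative heart of the argument and is where I would spend the effort.

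For (b), using $|Y|\mathcal{H}_0(Y) = \sum_{c} n_c\log\frac{|Y|}{n_c} = |Y|\log|Y| - \sum_c n_c\log n_c$ and writing $n=|Y|$, $n_1=|Y_1|$, $n_2=|Y_2|$ (so $n=n_1+n_2$) and $m_c, a_c, b_c$ for the number of occurrences of $c$ in $Y, Y_1, Y_2$ (so $m_c=a_c+b_c$), a direct manipulation yields
\[
|Y|\mathcal{H}_0(Y) - |Y_1|\mathcal{H}_0(Y_1) - |Y_2|\mathcal{H}_0(Y_2) = \Big(n\log n - n_1\log n_1 - n_2\log n_2\Big) - \sum_{c\in\Sigma}\big(m_c\log m_c - a_c\log a_c - b_c\log b_c\big).
\]
Now $t\mapsto t\log t$ is convex and vanishes at $0$, hence superadditive; therefore every summand on the right is nonnegative, and, for the same reason, the first bracket is nonnegative as well. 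Moreover the first bracket equals $n$ times the binary entropy of $n_1/n$, hence is at most $n$. Combining, the left-hand side lies in $[0, n] = [0, |Y|]$; the lower bound re-proves (a) for this split, and the upper bound is (b). (If one of the two pieces is empty the left-hand side is $0$ and there is nothing to prove.)

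With (a) and (b) in hand the lemma follows quickly. Let $P$ be the partition of $X$ whose set of cut points is the union of the cut points of the arbitrary partition $X_1\cdots X_\ell$ and those of the fixed-block partition $X_1^b\cdots X_m^b$; thus $P$ refines both. Since $P$ refines $X_1\cdots X_\ell$, fact (a) gives $\Phi(P)\le \sum_{i=1}^{\ell} |X_i|\,\mathcal{H}_0(X_i)$. On the other hand $P$ is obtained from the fixed-block partition by successively inserting the at most $\ell-1$ cut points of the arbitrary partition that are not already present; each insertion splits a single block of the current (still fixed-block-refining) partition, and such a block is contained in one fixed block of length at most $b$, so by fact (b) it decreases $\Phi$ by at most $b$. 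Hence $\sum_{i=1}^{m} |X_i^b|\,\mathcal{H}_0(X_i^b) \le \Phi(P) + (\ell-1)b$, and chaining the two displayed inequalities gives the claim.

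I expect the only genuine obstacle to be getting the \emph{sharp} constant in (b): proving that a cut costs something finite is easy, but pinning the loss down to exactly $|Y|$ relies on recognizing the residual term as $n$ times a binary entropy (hence $\le n$) and on the superadditivity of $t\log t$. Everything else — the monotonicity in (a), and the bookkeeping that at most $\ell-1$ new cut points are inserted, each hitting a block of length at most $b$ — is routine.
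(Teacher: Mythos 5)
Your proof is correct, but note that the paper does not prove this statement at all: it is imported verbatim as Lemma~4 of Gog, K\"arkk\"ainen, Kempa, Petri and Puglisi (the fixed block compression boosting paper), so there is no in-paper argument to compare against. Your self-contained derivation is sound and follows the natural route (which is also essentially how the cited source argues): pass to the common refinement $P$, use refinement-monotonicity of $\Phi$ (log-sum inequality) to get $\Phi(P)\leq\sum_i\lvert X_i\rvert\mathcal{H}_0(X_i)$, and then bound the cost of the at most $\ell-1$ extra cuts needed to turn the fixed-block partition into $P$, each cut landing inside a block of length at most $b$ and costing at most the length of the block it splits. The quantitative step (b) is verified correctly: the identity
$\lvert Y\rvert\mathcal{H}_0(Y)-\lvert Y_1\rvert\mathcal{H}_0(Y_1)-\lvert Y_2\rvert\mathcal{H}_0(Y_2)=\bigl(n\log n-n_1\log n_1-n_2\log n_2\bigr)-\sum_{c}\bigl(m_c\log m_c-a_c\log a_c-b_c\log b_c\bigr)$
together with superadditivity of $t\mapsto t\log t$ and the observation that the first bracket is $n$ times a binary entropy gives the sharp bound $\leq\lvert Y\rvert$, which is tight (e.g.\ $Y=ab$ split into $a$ and $b$). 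One cosmetic quibble: your parenthetical claim that this computation also ``re-proves (a)'' is not quite substantiated as written, since knowing that the bracket and the subtracted sum are each nonnegative does not show the bracket dominates the sum; that lower bound genuinely needs the concavity/log-sum argument you already invoked for (a), so nothing is lost, but the aside should be dropped or justified via $\sum_c m_cH(a_c/m_c)\leq nH(n_1/n)$.
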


The compressed solution given by Kosolobov and Sivukhin~\cite[Section 4]{Kosolobov2019} is the following.
We partition every $B_c$ into $t$ blocks $B_c^{i}$ each of fixed size $b = \lceil \sigma \log^2 n\rceil$.  
Then we apply the ID representation of Lemma~\ref{thm:RRR} on every bitvector $B_c^i$ containing at least an entry equal to $1$, and we store the pointers to these IDs in a $\sigma \times \lceil n/b\rceil$ matrix.
For answering partial rank queries on every $B_c$ we also store the array $R[1..\sigma][1..\lceil n/b\rceil]$ which memorizes the precomputed rank values preceding every block $B_c^{i}$ for every $c$, that is $R[c][i]=rank(b(i-1),B_c)$.
Clearly, by using the arrays $R$ and $C$ it is possible to reconstruct whether a bitvector $B_c^i$ is empty (i.e., it contains only zeros) or not.
Moreover, through $R$ we can compute in $O(1)$ time $\prank(j, B_c)$ when $B_c[j]=1$ as $R[c][i] + \prank(j - (i-1)b, B_c^{i})$, where $i=\lceil j / b \rceil$ is the index of the interval where the position $j$ falls. Since this representation supports partial rank queries on every $B_c$ in $O(1)$ time, it can also answer $child(i,c)$ queries in constant time~\cite{Kosolobov2019}.
Note that both $R$ and the IDs pointers take $O(\sigma n/b \log n)$ bits of space which is $o(n)$.
Before proceeding further, we prove a preliminary result showing how this data structure can be used to compute full rank queries on the $\sigma$ bitvectors $B_c$.

\begin{proposition}\label{prop: full rank queries on id}
    The representation of $\BWT(\mathcal T)$ of Kosolobov and Sivukhin~\cite[Lemma 6]{Kosolobov2019} supports full rank queries $\rank(i, B_c)$ on the $\sigma$ bitvectors $B_c$, for every $i \in [n]$, in $O(\log \sigma  + \log\log n)$ time.
\end{proposition}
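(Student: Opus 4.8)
The plan is to reduce a full rank query on $B_c$ to a single \emph{in-block} full rank query, which we then answer with the logarithmic-time procedure of Remark~\ref{remark:IDrank}. First I would locate the block containing position $i$, namely $q = \lceil i/b \rceil$, computable in $O(1)$ time. By definition of the precomputed array $R$ we have $\rank(i, B_c) = R[c][q] + \rank\big(i - (q-1)b,\, B_c^{q}\big)$, where the first summand is an $O(1)$ table lookup; hence it only remains to evaluate a full rank query inside the single block $B_c^{q}$, a bitvector of size at most $b = \lceil \sigma \log^2 n\rceil$.

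Next I would dispose of the case in which $B_c^{q}$ is empty: as already observed in the description of the structure, emptiness of $B_c^{q}$ can be decided in $O(1)$ time from the arrays $R$ and $C$, and in that case we simply return $\rank(i, B_c) = R[c][q]$. When $B_c^{q}$ is non-empty, its ID representation from Lemma~\ref{thm:RRR} is stored and reachable in $O(1)$ time via the pointer matrix; since this ID supports $\select$ and $\prank$ in constant time, Remark~\ref{remark:IDrank} answers the full rank query $\rank\big(i - (q-1)b,\, B_c^{q}\big)$ in $O(\log z)$ time, where $z \le b$ is the number of ones in $B_c^{q}$.

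Finally I would bound the total time: every step above except the in-block rank is $O(1)$, while $O(\log z) = O(\log b) = O\big(\log(\sigma \log^2 n)\big) = O(\log \sigma + \log\log n)$, which gives the claimed bound. I do not expect any genuine obstacle here; the only point that needs a moment's care is that the in-block rank, unlike the $\prank$ and $\select$ operations natively supported by the ID, is \emph{not} constant-time but logarithmic in the block's population, and that this is precisely what Remark~\ref{remark:IDrank} delivers. Adding the $O(1)$ prefix contribution $R[c][q]$ to this in-block rank recovers $\rank(i,B_c)$.
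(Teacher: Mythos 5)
Your proof is correct and follows essentially the same route as the paper: decompose $\rank(i,B_c)$ into the precomputed prefix value $R[c][q]$ plus an in-block rank on $B_c^q$, answered via the binary-search-over-select procedure of Remark~\ref{remark:IDrank} in $O(\log z)$ time with $z\leq b=\lceil\sigma\log^2 n\rceil$. Your additional handling of the empty-block case is a harmless (and valid) extra detail not spelled out in the paper's proof.
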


\begin{proof}
    We know that $\rank(j,B_c) = R[c][i] + \rank(j - (i-1)b, B_c^i)$, where $i = \lceil j/b\rceil$.
    The value $R[c][i]$ is stored explicitly and can be retrieved in constant time.
    On the other hand, by Remark~\ref{remark:IDrank} we know that the integer $\rank(j - (i-1)b, B_c^i)$ can be computed in $O(\log x_c^i)$, where $x_c^i$ is the number of ones in $B_c^i$, through a binary search on $B_c^i$.
    Clearly, it holds that $x_c^i \leq b = \lceil \sigma \log^2 n \rceil$, consequently $O(\log x_c^i) = O(\log \sigma + \log \log n)$.
\end{proof}

Thus, by assuming a uniform distribution over the $\sigma \times n$ positions of the bitvectors, if $t>1$, then we observe that since $b/n\sigma \leq 1/(t-1)\sigma$, the expected time for a full rank query is upper-bounded by $\sum_{(i,c) \in [t] \times \Sigma} \frac{1}{(t-1)\sigma} O(\log (x_c^i))$, which is equal to $\frac{t}{t-1}\sum_{(i,c) \in [t] \times \Sigma} \frac{1}{t\sigma} O(\log (x_c^i))$. 
Due to the Jensen's inequality, this is in turn upper-bounded by $O(\log(\frac{1}{t \sigma}\sum_{(i,c) \in [t] \times \Sigma} x_c^i)) = O(\log(\frac{n-1}{t\sigma}))  = O(\log \log n)$.
In the same paper\cite[Lemma 6]{Kosolobov2019}, Kosolobov and Sivukhin described how this representation can support select queries on the $\sigma$ bitvectors in $O(1)$ time.
This can be done through a bitvector $S$ formed by concatenating the unary encodings of the number of ones in each of the $\sigma t$ bitvectors $B_c^i$.
Formally, let $S_c$ be the bitvector formed by the bits $1^{x_{c}^1}01^{x_c^2}0\ldots1^{x_c^t}0$, where $x_c^i$ denotes the number of ones in the bitvector $B_c^i$.
Then $S = S_{c_1}S_{c_2}\ldots S_{c_\sigma}$, where $c_1, c_2, \ldots, c_\sigma$ are the $\sigma$ characters of $\Sigma$ in the order $c_1 \preceq c_2 \preceq \ldots \preceq c_\sigma$.
Kosolobov and Sivukhin showed that by applying the ID representation of Raman et al.~\cite[Theorem 4.6]{RRR} we can compress $S$ in $o(n)$ bits and support select queries, by using the array $C$, in $O(1)$ time~\cite[Lemma 6]{Kosolobov2019}.
Moreover, in~\cite[Lemma 6]{Kosolobov2019} the authors show that if $\sigma\leq n^{\varepsilon}$ for some constant $\varepsilon<1$, this compressed representation of the $\BWT(\mathcal T)$ takes at most $(n-1)H_k^{label}(\mathcal{T})+1.443n+o(n)$ bits of space for every non-negative $k\leq max\{0,\alpha \log_\sigma n-2\}$ \emph{simultaneously}, where $\alpha\in (0,1)$ is an arbitrary real valued constant.
In the following theorem, we show that it is possible to refine the space complexity of this data structure in terms of our empirical entropy $n\mathcal H_k(\mathcal T)$.

\begin{theorem}\label{theorem: trieHk id}
    Let $\mathcal{T}$ be a trie and $\varepsilon$ be an arbitrary constant with $0 \leq \varepsilon <1$.
    If $\sigma \leq n^\varepsilon$, then there exists an encoding of $\BWT(\mathcal T)$ occupying at most $n\mathcal H_k(\mathcal T) + o(n)$ bits of space simultaneously for every integer $k \leq \max\{0, \alpha \log _\sigma n - 2 \}$, where $\alpha$ is a constant satisfying $0 \leq \alpha < 1$.
    This data structure supports the operation $child(i,c)$ in $O(1)$ time for every integer $i \in [n]$ and character $c \in \Sigma$.
\end{theorem}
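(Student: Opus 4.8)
The plan is to keep the data structure of Kosolobov and Sivukhin~\cite[Lemma 6]{Kosolobov2019} exactly as described above and only replace the argument that charges the cost of its indexable dictionaries; this suffices because the structure --- whose block length is $b=\lceil\sigma\log^2 n\rceil$ --- does not depend on $k$, so the $child(i,c)$ operation is still supported in $O(1)$ time via partial rank queries, and its total size is a fixed quantity that I will bound by $n\mathcal H_k(\mathcal T)+o(n)$ for each admissible $k$ separately (simultaneity then follows since the $o(n)$ term will be uniform in $k$). As in~\cite{Kosolobov2019}, the array $R$, the pointers to the dictionaries, the bitvector $S$ supporting $select$, and the array $C$ occupy $O(\sigma(n/b)\log n)=o(n)$ bits, and the per-block overheads of the ID representation of Lemma~\ref{thm:RRR} sum to $o(n)$ as well; hence the whole problem reduces to bounding $\sum_{c\in\Sigma}\sum_i\log\binom{b}{x_c^i}$, where $x_c^i=\ones(B_c^i)$ is the number of ones of the $i$-th block of $B_c$. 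Note that Corollary~\ref{cor: comparison label entropy} cannot be used here, since it bounds $n\mathcal H_k$ from above by the label entropy, i.e.\ in the wrong direction.

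First I would bound each binomial by a $0$-th order string entropy, $\log\binom{b}{x_c^i}\le |B_c^i|\,\mathcal H_0(B_c^i)$, using the inequality $\log\binom{m}{\ell}\le m\mathcal H_0$ recalled after Definition~\ref{def: 0-entropy}. Then, for every fixed $c$, I would invoke the fixed-block boosting Lemma~\ref{lemma: partitionedH0} on the string $B_c$: its blocks $B_c^i$ form a partition into parts of size at most $b$, while grouping the consecutive positions of the nodes that share the same $k$-length context $w$ gives a second partition of $B_c$ into $\ell$ parts, where $\ell$ is the number of distinct $k$-contexts occurring in $\mathcal T$. The key point is that this second partition is \emph{contiguous}: in the co-lexicographic order the nodes with a fixed context $w=\lambda_k(u)$ form an interval, because they share the suffix $w$ --- the standard clustering property of the XBWT. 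The part of $B_c$ associated with $w$ has length $n_w$ and exactly $n_{w,c}$ ones, so its contribution is $n_{w,c}\log\frac{n_w}{n_{w,c}}+(n_w-n_{w,c})\log\frac{n_w}{n_w-n_{w,c}}$, and Lemma~\ref{lemma: partitionedH0} yields $\sum_i|B_c^i|\mathcal H_0(B_c^i)\le\sum_w\!\left(n_{w,c}\log\frac{n_w}{n_{w,c}}+(n_w-n_{w,c})\log\frac{n_w}{n_w-n_{w,c}}\right)+(\ell-1)b$. Summing over all $c\in\Sigma$, the resulting double sum $\sum_{c\in\Sigma}\sum_w(\cdots)$ collapses exactly to $n\mathcal H_k(\mathcal T)$ by Definition~\ref{def: k-entropy} (extending the range of $w$ to the $\#$-padded contexts of the depth-$<k$ nodes, consistently with Section~\ref{sec:entropy}), leaving the error $\sigma(\ell-1)b$.

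It then remains to show $\sigma(\ell-1)b=o(n)$. Since $\ell\le\sum_{i=0}^k\sigma^i=O(\sigma^k)$ and $b\le\sigma\log^2 n+1$, we get $\sigma(\ell-1)b=O(\sigma^{k+2}\log^2 n)$; and from $k\le\alpha\log_\sigma n-2$ we get $\sigma^{k+2}\le n^\alpha$, so $\sigma(\ell-1)b=O(n^\alpha\log^2 n)=o(n)$ because $\alpha<1$, while $\sigma\le n^\varepsilon$ with $\varepsilon<1$ keeps the other $o(n)$ terms small. Combining everything gives total size at most $n\mathcal H_k(\mathcal T)+o(n)$, with a $k$-uniform $o(n)$, for every $k\le\max\{0,\alpha\log_\sigma n-2\}$ simultaneously, and the $child$ query is inherited. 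I expect the main obstacle to be the bookkeeping of this matching: checking rigorously that the $k$-context partition is contiguous in co-lexicographic order and that summing the per-context $0$-th order entropies of all the $B_c$'s reconstructs $n\mathcal H_k(\mathcal T)$ term by term (including the negligible $O(\sigma^k\log n)=o(n)$ mass carried by $\#$-padded contexts); the boosting lemma and the binomial-to-entropy bound are used as black boxes, and the remaining $o(n)$ estimates are a direct computation given the constraints on $k$, $\sigma$, and $\varepsilon$.
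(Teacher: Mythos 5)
Your proposal is correct and follows essentially the same route as the paper's proof: it reduces the space bound to $\sum_{c}\sum_{i}\lvert B_c^i\rvert\mathcal H_0(B_c^i)$ via the binomial-to-entropy inequality, then applies Lemma~\ref{lemma: partitionedH0} with the (contiguous, co-lexicographic) context-induced partition against the fixed-size blocks, obtaining the same $n\mathcal H_k(\mathcal T)+\sigma(\ell-1)b$ bound and the same $o(n)$ estimates, with simultaneity following because the block partition is independent of $k$. The only cosmetic difference is that the paper treats the case $k=0$ (where $\ell=1$ makes the $\sigma(\ell-1)b$ term vanish) explicitly, which your bound $\ell\le\sum_{i=0}^k\sigma^i$ also yields implicitly.
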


\begin{proof}
    As we have already discussed in this section how to perform the query $child(i,c)$ in $O(1)$ time, to prove the correctness of this theorem it remains to show the claimed spatial upper-bound.
    Consider the partition $\mathcal{V}_k$ of $V$ which groups together the nodes of $\mathcal{T}$ reached by the same length-$k$ context $w\in \Sigma^k$, that is $\mathcal{V}_k = \{P_w\neq \emptyset \mid w\in \Sigma^k\}$ where $P_w = \{ v\in V \mid \lambda_k(v)=w \}$.
    By Definition~\ref{definition: Burrows-Wheeler}, every part in $\mathcal{V}_k$ corresponds to a range of nodes in the co-lexicographic ordering.
    Given $\mathcal{V}_k=V_1,\ldots ,V_\ell$, we call $B_{w,c}$ the portion of $B_c$ whose positions correspond to the interval $V_i$ of nodes reached by the context $w \in \Sigma^k$.
    We observe that every $B_{w,c}$ has length $n_{w}$ and contains exactly $n_{w,c}$ bits equal to $1$, where $n_{w}$ and $n_{w,c}$ are the integers of Definition~\ref{def: integers nw and nwc}.
    As a consequence $\mathcal{H}_k(\mathcal{T})= \frac{1}{n}\sum_{w \in \Sigma^k} n_w\sum_{c \in \Sigma}\frac{n_{w,c}}{n_w}\log (\frac{n_w}{n_{w,c}}) + \frac{n_w - n_{w,c}}{n_w}\log(\frac{n_w}{n_w-n_{w,c}})$ can be rewritten as $\frac{1}{n} \sum_{c\in \Sigma} \sum_{w\in \Sigma^k} \lvert B_{w,c}\rvert \mathcal{H}_0(B_{w,c})$. 
    Therefore, analogously to the case of strings, we can compress every $B_{w,c}$ individually using a $0$-th order binary compressor to achieve exactly $n\mathcal{H}_k(\mathcal{T})$ bits of space.
    Next, we prove that, by using Lemma~\ref{lemma: partitionedH0}, we can approximate this space through the fixed-size bitvectors $B_{c}^i$ described in this section.
    
    Therefore, for every $c \in \Sigma$, consider the $t$ bitvectors $B_c^i$ of fixed size $b = \lceil \sigma \log^2 n \rceil$.
    By applying the ID representation of Lemma~\ref{thm:RRR} on every bitvector $B_c^i$ containing at least an entry equal to $1$, we obtain a total space in bits which is at most $\sum_{c\in \Sigma}\sum_{i=1}^t \lceil \log \binom{ \lvert B_c^i \rvert}{x^i_c} \rceil+ o(x^i_c) + O(\log \log \lvert B_c^i \rvert )$, where $x^i_c$ is the number of $1$s in $B_c^i$ and $t=\lceil n/b \rceil$.
    We also observe that $\sum_{c\in \Sigma}\sum_{i=1}^t \lceil \log \binom{\lvert B_c^i \rvert }{x^i_c} \rceil < \sigma \lceil\frac{n}{b}\rceil + \sum_{c \in \Sigma}\sum_{i=1}^t \log \binom{\lvert B_c^i \rvert }{x^i_c} = \sum_{c \in \Sigma}\sum_{i=1}^t \log \binom{\lvert B_c^i \rvert}{x^i_c} + o(n)$.
    Moreover, due to~\cite[Equation 11.40]{elementsOfInformationTheory}, the term $\sum_{c \in \Sigma}\sum_{i=1}^t \log \binom{\lvert B_c^i \rvert}{x^i_c}$ is in turn upper-bounded by $\sum_{c\in \Sigma}\sum_{i=1}^t \lvert B_{c}^i\rvert \mathcal{H}_0(B^i_c)$, thus by now we have proved that the total space is at most $(\sum_{c\in \Sigma}\sum_{i=1}^t \lvert B_{c}^i\rvert \mathcal{H}_0(B^i_c)+ o(x^i_c) + O(\log \log \lvert B_{c}^i\rvert)) + o(n)$ bits.
    At this point, we note that $\sum_{c\in \Sigma}\sum_{i=1}^t o(x^i_c)$ is $o(n)$ since summing over all the $1$s we are counting the number of edges in $\mathcal{T}$.
    Furthermore, the last term $\sum_{c\in \Sigma}\sum_{i=1}^t O(\log \log \lvert B_{c}^i\rvert)$ is at most $O(\sigma \frac{n}{b} \log \log n)=O(\frac{n\log \log n}{\log^2n})=o(n)$.
    Consequently, so far we proved that the space occupation is at most $\sum_{c\in \Sigma}\sum_{i=1}^t \lvert B_{c}^i\rvert \mathcal{H}_0(B^i_c) + o(n)$.
    Next, by applying Lemma~\ref{lemma: partitionedH0} to the partitioning corresponding to $\mathcal{V}_k$, we can upper-bound the left summation with $n\mathcal{H}_k(\mathcal{T}) + \sigma(\ell-1)b$.
    Now consider the last term $\sigma (\ell-1)b$, if $k=0$ then the optimal partitioning $\mathcal{V}_k$ consists of a single set containing all the nodes and therefore $\ell=1$ and $\sigma (\ell-1)b=0$, otherwise when $k\leq \alpha\log_\sigma n -2$ it holds that $\sigma (\ell-1)b = O(\sigma^{k+2}\log^2 n) =o(n)$.
    Let us now consider the table $R[1..\sigma][1..\lceil n/b\rceil]$, which memorizes the precomputed rank values preceding every block $B_c^{i}$ for every $B_c$.
    As already noted, both the table $R[1..\sigma][1..\lceil n/b\rceil]$, storing the precomputed rank values, and the $O(\sigma \frac{n}{b})$ pointers to the ID representations of the bitvectors $B_{c}^i$ use $o(n)$ additional bits.
    We finally observe that the partitions are chosen independently of $k$, therefore the theorem holds for every $k \leq \max\{0,\alpha\log_\sigma n - 2\}$ simultaneously.
\end{proof}

Note that, due to Corollary~\ref{cor: comparison label entropy}, it follows that the space usage $n \mathcal{H}_k(\mathcal T) + o(n)$ we proved, asymptotically lower-bounds the space $(n-1)H_k^{label}(\mathcal{T})+1.443(n-1)+o(n)$ shown by Kosolobov and Sivukhin~\cite[Lemma 6]{Kosolobov2019}.
In addition to that, Proposition~\ref{prop: infinite family of tries} shows that our formula can be \emph{asymptotically} smaller in some cases.
Next, we prove that the data structure of Theorem~\ref{theorem: trieHk id} can be computed in $O(n)$ expected time.

\begin{proposition}
    Let $\mathcal{T}$ be an arbitrary trie.
    Then the compressed representation of $\BWT(\mathcal T)$ analysed in Theorem~\ref{theorem: trieHk id} can be computed in $O(n)$ expected time.
\end{proposition}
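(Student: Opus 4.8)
The plan is to build the structure component by component, following its description in Section~\ref{section: XBWT} (note that the data structure, and hence its construction, does not depend on $k$), and to check that each component costs $O(n)$ time in total. First I would invoke the algorithm of Ferragina et al.~\cite[Theorem~2]{xBWTjournal} to compute $\BWT(\mathcal T)$ in $O(n)$ time; this yields, for each node in co-lexicographic order, its set of outgoing labels, equivalently the $n-1$ pairs $(j,c)$ with $B_c[j]=1$. Stably radix-sorting these pairs by the label $c$ (after, if necessary, remapping the alphabet onto $[\sigma]$) produces, for every $c\in\Sigma$, the list $L_c$ of positions of the $1$s of $B_c$ \emph{in increasing order} in $O(n+\sigma)=O(n)$ time, and the counts $n_c=|L_c|$ immediately give the array $C$.

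Next I would assemble the block-level structures. For each $c$, a single left-to-right scan of $L_c$ splits it into the portions falling in the $t=\lceil n/b\rceil$ blocks $B_c^i$, where $b=\lceil\sigma\log^2 n\rceil$, while filling the precomputed-rank row $R[c][\cdot]$ and recording the popcounts $x_c^i=\ones(B_c^i)$. For every \emph{non-empty} block I would build an indexable dictionary as in Lemma~\ref{thm:RRR} directly from the (shifted) sorted list of its $x_c^i$ one-positions, storing a pointer to it in the $\sigma\times t$ matrix (pre-initialised to null). Finally I would form $S=S_{c_1}\cdots S_{c_\sigma}$ from the recorded counts $x_c^i$ and equip it with an indexable dictionary. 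The key combinatorial observation — already implicit in the space analysis of Theorem~\ref{theorem: trieHk id} — is that, since $\sigma\le n^\varepsilon$, we have $\sigma t=\sigma\lceil n/b\rceil=O(n/\log^2 n)$; hence there are $O(n/\log^2 n)$ blocks in total (a fortiori only that many non-empty ones), the matrix and $R$ have $\sigma t=O(n)$ entries, and $S$ has length $(n-1)+\sigma t=O(n)$ with $n-1$ ones. Thus $C$, $R$, the pointer matrix and $S$ (together with its dictionary) cost $O(n)$, and the per-block dictionaries cost $O(n)$ \emph{provided} each non-empty block is processed in $O\big(x_c^i+\polylog(n)\big)$ time, because $\sum_{c,i}x_c^i=n-1$ and there are only $O(n/\log^2 n)$ of them.

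The crux is precisely that per-block bound, and it is the only step requiring more than bookkeeping. Materialising a block and scanning it is too slow: a block spans $b=\Theta(\sigma\log^2 n)$ positions, so even $\Theta(b/\log n)$ word operations per non-empty block would amount to $\Theta(\sigma n)=\Theta(n^{1+\varepsilon})$ overall. Instead I would construct the dictionary of Lemma~\ref{thm:RRR} straight from the sorted list of one-positions of the block: each position fits in $O(\log b)=O(\log n)$ bits, so the compressed payload of a block with $x_c^i$ ones occupies $O(x_c^i)$ machine words and is emitted in $O(x_c^i)$ time, and the accompanying select/partial-rank directory — whose redundancy is $o(x_c^i)+O(\log\log b)$ bits — is built with $O(x_c^i)$ work plus an $O(\polylog(n))$ set-up that uses a global lookup table constructed once in $o(n)$ time. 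This sparsity-aware construction of the Raman et al.\ dictionary is where the randomization enters (for instance through hashing used in the universe reduction or in the shared table), which is why the bound is stated in expectation; summing $O\big(x_c^i+\polylog(n)\big)$ over the $O(n/\log^2 n)$ non-empty blocks gives $O(n)+O(n/\log^2 n)\cdot\polylog(n)=O(n)$. The boundary case $b>n$ (then $t=1$ and each $B_c$ is a single block) is only easier, as there are at most $\sigma\le n^\varepsilon$ dictionaries, built in $O(n+\sigma\,\polylog(n))=O(n)$ time. Since $child(i,c)$ support needs no further construction, the whole representation is built in $O(n)$ expected time.
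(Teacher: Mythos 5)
Your proposal is correct and follows essentially the same route as the paper: compute the XBWT with the linear-time algorithm of Ferragina et al., extract the sorted one-position lists $L_c$, split them into the fixed-size blocks while filling $R$ and the pointer matrix, build the Raman et al.\ indexable dictionaries on the non-empty blocks in expected time proportional to their number of ones, and handle $S$ the same way. The only difference is that you spell out in more detail why each block's dictionary can be built in time depending on $x_c^i$ rather than on the block length $b$ — a point the paper dispatches by citing the construction in~\cite[Section 8]{RRR} directly — but this is elaboration, not a different argument.
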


\begin{proof}
    As a first step, we sort the nodes of $\mathcal T$ co-lexicographically in $O(n)$ time by using the algorithm of Ferragina et al.~\cite[Theorem 2]{xBWTjournal}.
    Thus, at the end of this step we obtain the co-lexicographically sorted sequence $u_1, u_2, \ldots, u_n$ of nodes.
    Next, we represent the $\sigma$ bitvectors $B_c$ encoding the XBWT of a trie as follows.
    For each character $c \in \Sigma$ we create a sorted sequence of integers $L_c$, initially empty.
    Then we scan the $n$ nodes $u_1, u_2, \ldots, u_n$ from left to right, and for every node $u_i$, we append the integer $i$ to the list $L_c$ if and only if $c \in out(u_i)$.
    This process trivially requires $O(n)$ time, since there are $n$ nodes and $n -1$ edges in $\mathcal T$.
    It is easy to observe that these sorted lists $L_c$ are a univocal representation of the $\sigma$ bitvectors $B_c$, and consequently also of $\BWT(\mathcal T)$.
    At this point, to apply the fixed block compression technique, we scan every $L_c$ and we split it into $t = \lceil n / b \rceil$ sorted sublists $L_c^i$ such that $L_c^i$ contains all and only the integers $x$, with $x \in [b]$, for which $x + b(i-1) \in L_c$ holds, where $b = \lceil \sigma \log^2 n \rceil$. 
    During the same scan we also fill the matrix $R$ and on every non-empty sublist $L_c^i$ we build the ID representation proposed by Raman et al.~\cite[Theorem 4.6]{RRR} and we store its corresponding pointer.
    Every ID representation can be built in $O(x_c^i)$ expected time~\cite[Section 8]{RRR}, where $x_c^i = \lvert L_c^i \rvert$, thus, it takes $O(n)$ total expected time to build all these ID representations, since $\sum_{c \in \Sigma}\sum_{i\in [t]} x_c^i = n - 1$.
    Due to an analogous reasoning, also the ID representation of the $S$ bitvector can be computed in $O(n)$ expected time, since $S$ contains exactly $n-1$ bits equal to $1$. 
    Finally, since $R$ has exactly $\sigma \frac{n}{b} = o(n)$ entries and also the pointers are $o(n)$, it follows that the total expected time complexity is $O(n)$.
\end{proof}

\paragraph{Prefix-matching.} In this paragraph we recall a result already shown in~\cite{BelazzouguiAhoCorasick, HonTrieLabelEntropy}, namely that the XBWT of a trie can be used to perform prefix-matching queries in optimal constant time per symbol.
Consider a dictionary $S$ formed by a set of strings, a well-known data structure to perform prefix matching queries on $S$ is the trie $\mathcal{T} = (V,E)$ corresponding to $S$.
Every distinct prefix of $S$, including the empty one $\epsilon$ and all the strings of $S$, uniquely corresponds to a node of $\mathcal T$. 
Given a node $u \in V$ and its corresponding prefix $p$, there exists an edge $(u,v,c) \in E$, for some $v \in V$ and $c \in \Sigma$, if and only if, (i)~the string $pc$ is a prefix of some string in $S$ and (ii)~$v$ is the node representing this prefix $pc$.
Let us now consider the nodes $u_1, u_2, \ldots, u_n$ of $\mathcal{T}$ sorted in co-lexicographic order (see Definition~\ref{definition: Burrows-Wheeler}).
By the properties of the XBWT, we know that if $child(i,c) = -1$, with $u_i = u$, then $\mathcal T$ does not have a node associated to the prefix $pc$, namely $pc$ is not a prefix of any string in $S$.
On the other hand, if $child(i,c) = j$, with $j \neq -1$, then we know that the $j$-th node in co-lexicographic order is the one corresponding to the prefix $pc$.
As a consequence, it follows that the next corollary is directly implied by Theorem~\ref{theorem: trieHk id}.

\begin{corollary}\label{cor: prefix matching}
    Let $\mathcal T$ be a trie representing a dictionary $S$ and $u_i$ be the node of $\mathcal T$ corresponding to a prefix $p$ of some string in $S$.
    If $\sigma \leq n^\varepsilon$ for some constant $\varepsilon$, with $0 \leq \varepsilon < 1$, then there exists a data structure taking at most $n\mathcal H_k(\mathcal T) + o(n)$ bits of space for every $k \leq \max\{0,\alpha\log_\sigma n -2 \}$ simultaneously, where $\alpha$ is a constant satisfying $0 \leq \alpha < 1$, such that given the co-lexicographic rank $i$, it supports for every $c \in \Sigma$ the following operations in $O(1)$ time.
    \begin{itemize}
        \item Determine if $pc$ is a prefix of some string in $S$.
        \item Return the co-lexicographic rank of the node associated to the prefix $pc$, if $pc$ is a prefix of some string in $S$.
    \end{itemize}
\end{corollary}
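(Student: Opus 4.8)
The plan is to invoke Theorem~\ref{theorem: trieHk id} as a black box and to observe that its $child$ operation already encodes exactly the two prefix-matching queries in the statement. First I would instantiate the compressed encoding of $\BWT(\mathcal T)$ guaranteed by Theorem~\ref{theorem: trieHk id} for the same constants $\varepsilon$ and $\alpha$; this immediately yields the claimed space bound of $n\mathcal H_k(\mathcal T) + o(n)$ bits holding simultaneously for every $k \le \max\{0,\alpha\log_\sigma n - 2\}$, together with $O(1)$-time support for $child(i,c)$ for all $i \in [n]$ and $c \in \Sigma$.

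Next I would make the identification between the combinatorial query and the $child$ operation explicit. Recall that the nodes of a trie $\mathcal T$ representing a dictionary $S$ are in bijection with the distinct prefixes of strings in $S$, and that $u_i$ having an outgoing edge labeled $c$ is equivalent to $pc$ being such a prefix, in which case the endpoint of that edge is precisely the node associated to $pc$. Since, by the properties of the XBWT used to define $child$, we have $child(i,c) = -1$ exactly when $u_i$ has no outgoing edge labeled $c$, and $child(i,c) = j$ otherwise with $u_j$ the child reached along that edge, a single call to $child(i,c)$ answers the first query --- $pc$ is a prefix of some string in $S$ iff $child(i,c) \neq -1$ --- and, in the affirmative case, its return value $j$ is already the co-lexicographic rank of the node associated to $pc$, answering the second query. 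Both run in $O(1)$ time, inheriting the time bound of $child$.

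The only point requiring care --- and the one I would spell out, rather than the space analysis, which is entirely inherited from Theorem~\ref{theorem: trieHk id} --- is this semantic matching: that the ``$-1$'' branch of $child$ coincides with ``$pc$ is not a prefix of any string of $S$'', and that the non-negative branch returns the co-lexicographic rank (and not, say, a pre-order index) of the correct node. This is exactly the content of the prefix-matching discussion preceding the statement, so no new work is needed and the corollary follows directly from Theorem~\ref{theorem: trieHk id}.
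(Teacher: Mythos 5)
Your proposal is correct and matches the paper's own argument: the corollary is obtained exactly by invoking Theorem~\ref{theorem: trieHk id} for the space and the $O(1)$-time $child(i,c)$ operation, and then using the standard XBWT correspondence between outgoing edges of $u_i$ and extensions $pc$ of the prefix $p$, with $child(i,c)=-1$ signalling that $pc$ is not a prefix and $child(i,c)=j$ returning the co-lexicographic rank of the node for $pc$. No further comment is needed.
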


Therefore, it follows that in $n\mathcal H_k(\mathcal T) + o(n)$ bits of space, we can check whether there exists a string in $S$ prefixed by a given input pattern in $O(m)$ optimal time, where $m$ is the length of the pattern.

\begin{figure}
\renewcommand{\arraystretch}{0.9}
\resizebox{0.95\textwidth}{!}{
    \centering
\begin{tikzpicture}[
        >={Stealth[length=2mm, width=1.4mm]},
        arr/.style={line width = 0.21mm},
        vertex/.style={minimum size=2.2em, font={\small}, draw = black, line width = 0.2mm, circle},
        largenodes/.style={font={\large}}
    ]

        \node[vertex] (1) at (-0.2,0) {1};

        \node[vertex] (2) at (-1.5,-1) {2};
        \node[vertex] (13) at (1.1,-1) {13};

        \node[vertex] (3) at (-4,-1.8) {3};
        \node[vertex] (21) at (-2,-2.4) {21};
        \node[vertex] (4) at (1.8,-2.4) {4};
        \node[vertex] (25) at (3.6,-1.8) {25};

        \node[vertex] (22) at (-5.25,-3) {22};
        \node[vertex] (9) at (-3.5,-3.3) {9};
        \node[vertex] (17) at (-1.4,-3.8) {17};
        \node[vertex] (23) at (0,-3.3) {23};
        \node[vertex] (20) at (4.2,-3.6) {20};
        \node[vertex] (12) at (2.5,-3.7) {12};

        \node[vertex] (10) at (-5.9,-4.4) {10};
        \node[vertex] (18) at (-4.7,-4.65) {18};
        \node[vertex] (24) at (-3.2,-5.1) {24};
        \node[vertex] (5) at (-2.1,-5.8) {5};
        \node[vertex] (26) at (-1,-5.3) {26};
        \node[vertex] (11) at (0.3,-4.8) {11};
        \node[vertex] (8) at (4.7,-5.4) {8};
        \node[vertex] (19) at (1.6,-5) {19};
        \node[vertex] (16) at (3.2,-5.2) {16};

        \node[vertex] (14) at (-6.85,-5.4) {14};
        \node[vertex] (6) at (-5.7,-6) {6};
        \node[vertex] (27) at (-4.4,-6.4) {27};
        \node[vertex] (28) at (2.8,-6.4) {28};
        \node[vertex] (7) at (1.5,-6.5) {7};
        \node[vertex] (15) at (-0.2, -6.4) {15};
        
		\draw[->, arr] (1) edge [bend right = 30] node [above, largenodes] {$a$} (2);
		\draw[->, arr] (1) edge [bend left = 30] node [above, largenodes] {$b$} (13);

        \draw[->, arr] (2) edge [bend right = 20] node [above, largenodes] {$a$} (3);
        \draw[->, arr] (2) edge [bend right = 20] node [left, largenodes] {$c$} (21);
        \draw[->, arr] (13) edge [bend left = 20] node [right, largenodes] {$a$} (4);
        \draw[->, arr] (13) edge [bend left = 20] node [above, largenodes] {$c$} (25);

        \draw[->, arr] (3) edge [bend right = 25] node [above, largenodes] {$c$} (22);
        \draw[->, arr] (21) edge [bend right = 30] node [above, largenodes] {$a$} (9);
        \draw[->, arr] (21) edge [bend left = 20] node [right, largenodes] {$b$} (17);
        \draw[->, arr] (4) edge [bend right = 15] node [above, largenodes] {$c$} (23);
        \draw[->, arr] (25) edge [bend left = 15] node [right, largenodes] {$b$} (20);
        \draw[->, arr] (25) edge [bend left = 20] node [right, largenodes] {$a$} (12);

        \draw[->, arr] (22) edge [bend right = 20] node [left, largenodes] {$a$} (10);
        \draw[->, arr] (22) edge [bend left = 15] node [right, largenodes] {$b$} (18);
        \draw[->, arr] (9) edge [bend left = 10] node [right, largenodes] {$c$} (24);
        \draw[->, arr] (17) edge [bend right = 15] node [left, largenodes] {$a$} (5);
        \draw[->, arr] (17) edge [bend left = 10] node [right, largenodes] {$c$} (26);
        \draw[->, arr] (23) edge [bend left = 10] node [right, largenodes] {$a$} (11);
        \draw[->, arr] (20) edge [bend left = 15] node [right, largenodes] {$a$} (8);
        \draw[->, arr] (23) edge [bend left = 20] node [above, largenodes] {$b$} (19);
        \draw[->, arr] (12) edge [bend left = 20] node [right, largenodes] {$b$} (16);

        \draw[->, arr] (10) edge [bend right = 30] node [above, largenodes] {$b$} (14);
        \draw[->, arr] (18) edge [bend right = 20] node [left, largenodes] {$a$} (6);
        \draw[->, arr] (18) edge [bend left = 10] node [right, largenodes] {$c$} (27);
        \draw[->, arr] (19) edge [bend left = 10] node [right, largenodes] {$c$} (28);
        \draw[->, arr] (19) edge [bend left = 10] node [left, largenodes] {$a$} (7);
        \draw[->, arr] (11) edge [bend left = 15] node [right, largenodes] {$b$} (15);

\end{tikzpicture}
}
\vspace{2
em}

\setlength{\tabcolsep}{0.35em}
\resizebox{1\textwidth}{!}{
\begin{tabular}{ | C{2.5em} | C{0.3em} | C{0.3em} | C{0.3em} | C{0.3em} | C{0.3em} | C{0.3em} | C{0.3em} | C{0.3em} | C{0.3em} | C{0.3em} | C{0.3em} | C{0.3em} | C{0.3em} |C{0.3em} |C{0.3em} |C{0.3em} | C{0.3em} | C{0.3em} | C{0.3em} | C{0.3em} | C{0.3em} | C{0.3em} | C{0.3em} | C{0.3em} | C{0.3em} | C{0.3em} | C{0.3em} | C{0.3em} | }

    \multicolumn{1}{C{2.5em}}{} & \multicolumn{1}{C{0.3em}}{} & \multicolumn{1}{C{0.3em}}{} & \multicolumn{1}{C{0.3em}}{} & \multicolumn{1}{C{0.3em}}{} & \multicolumn{1}{C{0.3em}}{} & \multicolumn{1}{C{0.3em}}{a} & \multicolumn{1}{C{0.3em}}{b} & \multicolumn{1}{C{0.3em}}{} & \multicolumn{1}{C{0.3em}}{} & \multicolumn{1}{C{0.3em}}{} & \multicolumn{1}{C{0.3em}}{} & \multicolumn{1}{C{0.3em}}{} & \multicolumn{1}{C{0.3em}}{} & \multicolumn{1}{C{0.3em}}{a} & \multicolumn{1}{C{0.3em}}{b} & \multicolumn{1}{C{0.3em}}{} & \multicolumn{1}{C{0.3em}}{} & \multicolumn{1}{C{0.3em}}{} & \multicolumn{1}{C{0.3em}}{} & \multicolumn{1}{C{0.3em}}{} & \multicolumn{1}{C{0.3em}}{} & \multicolumn{1}{C{0.3em}}{} & \multicolumn{1}{C{0.3em}}{} & \multicolumn{1}{C{0.3em}}{} & \multicolumn{1}{C{0.3em}}{} & \multicolumn{1}{C{0.3em}}{} & \multicolumn{1}{C{0.3em}}{a} & \multicolumn{1}{C{0.3em}}{b} \\ [-3px]    
    
    \multicolumn{1}{C{2.5em}}{} & \multicolumn{1}{C{0.3em}}{} & \multicolumn{1}{C{0.3em}}{} & \multicolumn{1}{C{0.3em}}{} & \multicolumn{1}{C{0.3em}}{} & \multicolumn{1}{C{0.3em}}{a} & \multicolumn{1}{C{0.3em}}{a} & \multicolumn{1}{C{0.3em}}{a} & \multicolumn{1}{C{0.3em}}{b} & \multicolumn{1}{C{0.3em}}{} & \multicolumn{1}{C{0.3em}}{a} & \multicolumn{1}{C{0.3em}}{b} & \multicolumn{1}{C{0.3em}}{} & \multicolumn{1}{C{0.3em}}{} & \multicolumn{1}{C{0.3em}}{a} & \multicolumn{1}{C{0.3em}}{a} & \multicolumn{1}{C{0.3em}}{b} & \multicolumn{1}{C{0.3em}}{} & \multicolumn{1}{C{0.3em}}{a} & \multicolumn{1}{C{0.3em}}{b} & \multicolumn{1}{C{0.3em}}{} & \multicolumn{1}{C{0.3em}}{} & \multicolumn{1}{C{0.3em}}{} & \multicolumn{1}{C{0.3em}}{} & \multicolumn{1}{C{0.3em}}{a} & \multicolumn{1}{C{0.3em}}{} & \multicolumn{1}{C{0.3em}}{a} & \multicolumn{1}{C{0.3em}}{a} & \multicolumn{1}{C{0.3em}}{a} \\ [-3px]

    \multicolumn{1}{C{2.5em}}{} & \multicolumn{1}{C{0.3em}}{} & \multicolumn{1}{C{0.3em}}{} & \multicolumn{1}{C{0.3em}}{} & \multicolumn{1}{C{0.3em}}{} & \multicolumn{1}{C{0.3em}}{c} & \multicolumn{1}{C{0.3em}}{c} & \multicolumn{1}{C{0.3em}}{c} & \multicolumn{1}{C{0.3em}}{c} & \multicolumn{1}{C{0.3em}}{a} & \multicolumn{1}{C{0.3em}}{a} & \multicolumn{1}{C{0.3em}}{a} & \multicolumn{1}{C{0.3em}}{b} & \multicolumn{1}{C{0.3em}}{} & \multicolumn{1}{C{0.3em}}{c} & \multicolumn{1}{C{0.3em}}{c} & \multicolumn{1}{C{0.3em}}{c} & \multicolumn{1}{C{0.3em}}{a} & \multicolumn{1}{C{0.3em}}{a} & \multicolumn{1}{C{0.3em}}{a} & \multicolumn{1}{C{0.3em}}{b} & \multicolumn{1}{C{0.3em}}{} & \multicolumn{1}{C{0.3em}}{a} & \multicolumn{1}{C{0.3em}}{b} & \multicolumn{1}{C{0.3em}}{c} & \multicolumn{1}{C{0.3em}}{} & \multicolumn{1}{C{0.3em}}{c} & \multicolumn{1}{C{0.3em}}{c} & \multicolumn{1}{C{0.3em}}{c} \\ [-3px]

 %$\lambda(r\rightsquigarrow u)$ \multicolumn{1}{C{4em}}{ciao} \multicolumn{3}{|l|}{\multirow{3}{*}{char literals}} 

    \multicolumn{1}{C{2.5em}}{}  & \multicolumn{1}{C{0.3em}}{} & \multicolumn{1}{C{0.3em}}{} & \multicolumn{1}{C{0.3em}}{a} & \multicolumn{1}{C{0.3em}}{b} & \multicolumn{1}{C{0.3em}}{b} & \multicolumn{1}{C{0.3em}}{b} & \multicolumn{1}{C{0.3em}}{b} & \multicolumn{1}{C{0.3em}}{b} & \multicolumn{1}{C{0.3em}}{c} & \multicolumn{1}{C{0.3em}}{c} & \multicolumn{1}{C{0.3em}}{c} & \multicolumn{1}{C{0.3em}}{c} & \multicolumn{1}{C{0.3em}}{} & \multicolumn{1}{C{0.3em}}{a} & \multicolumn{1}{C{0.3em}}{a} & \multicolumn{1}{C{0.3em}}{a} & \multicolumn{1}{C{0.3em}}{c} & \multicolumn{1}{C{0.3em}}{c} & \multicolumn{1}{C{0.3em}}{c} & \multicolumn{1}{C{0.3em}}{c} & \multicolumn{1}{C{0.3em}}{a} & \multicolumn{1}{C{0.3em}}{a} & \multicolumn{1}{C{0.3em}}{a} & \multicolumn{1}{C{0.3em}}{a} & \multicolumn{1}{C{0.3em}}{b} & \multicolumn{1}{C{0.3em}}{b} & \multicolumn{1}{C{0.3em}}{b} & \multicolumn{1}{C{0.3em}}{b} \\ [-3px]
    
    \multicolumn{1}{C{0.3em}}{} & \multicolumn{1}{C{0.3em}}{$\epsilon$} & \multicolumn{1}{C{0.3em}}{a} & \multicolumn{1}{C{0.3em}}{a} & \multicolumn{1}{C{0.3em}}{a} & \multicolumn{1}{C{0.3em}}{a} & \multicolumn{1}{C{0.3em}}{a} & \multicolumn{1}{C{0.3em}}{a} & \multicolumn{1}{C{0.3em}}{a} & \multicolumn{1}{C{0.3em}}{a} & \multicolumn{1}{C{0.3em}}{a} & \multicolumn{1}{C{0.3em}}{a} & \multicolumn{1}{C{0.3em}}{a} & \multicolumn{1}{C{0.3em}}{b} & \multicolumn{1}{C{0.3em}}{b} & \multicolumn{1}{C{0.3em}}{b} & \multicolumn{1}{C{0.3em}}{b} & \multicolumn{1}{C{0.3em}}{b} & \multicolumn{1}{C{0.3em}}{b} & \multicolumn{1}{C{0.3em}}{b} & \multicolumn{1}{C{0.3em}}{b} & \multicolumn{1}{C{0.3em}}{c} & \multicolumn{1}{C{0.3em}}{c} & \multicolumn{1}{C{0.3em}}{c} & \multicolumn{1}{C{0.3em}}{c} & \multicolumn{1}{C{0.3em}}{c} & \multicolumn{1}{C{0.3em}}{c} & \multicolumn{1}{C{0.3em}}{c} & \multicolumn{1}{C{0.3em}}{c} \\ 
               
    \hline
    co-lex & \tiny{1} & \tiny{2} & \tiny{3} & \tiny{4} & \tiny{5} & \tiny{6} & \tiny{7} & \tiny{8} & \tiny{9} & \tiny{10} & \tiny{11} & \tiny{12} & \tiny{13} & \tiny{14} & \tiny{15} & \tiny{16} & \tiny{17} & \tiny{18} & \tiny{19} & \tiny{20} & \tiny{21} & \tiny{22} & \tiny{23} & \tiny{24} & \tiny{25} & \tiny{26} & \tiny{27} & \tiny{28} \\
    \hline
                       & a & \textcolor{red}{a} &   &   &   &   &   &   &   &   &   &   & \textcolor{red}{a} &   &   &   & a & a & a & a & a & a & \textcolor{red}{a} &   & \textcolor{red}{a} &   &   &   \\
    $\BWT$ & \textcolor{red}{b} &   &   &   &   &   &   &   &   & b & b & \textcolor{red}{b} &   &   &   &   &   &   &   &   & b & b & \textcolor{red}{b} &   & \textcolor{red}{b} &   &   &   \\
                       &   & c & c & \textcolor{red}{c} &   &   &   &   & \textcolor{red}{c} &   &   &   & \textcolor{red}{c} &   &   &   & c & c & \textcolor{red}{c} &   &   &   &   &   &   &   &   &   \\
    \hline
\end{tabular}
} % closing the parethesis for the resizebox

\caption{In the upper part, the figure shows a trie $\mathcal T$ of $28$ nodes over an alphabet of size $3$.
The integer associated to every node represents its position in the co-lexicographic order.
The table at the bottom shows the incoming string $\lambda(r \rightsquigarrow u)$ connecting the node $u$ to the root $r$, as well as the XBWT transform of the entire trie $\mathcal T$.
In this case, the number of XBWT runs $r$ is equal to $12$, where every $c$-run break in the XBWT has been highlighted in red.
}
\label{fig: example XBWT}
\end{figure}

\subsection{The Aho-Corasick automaton}

In this subsection, we recall a result presented in the article of Kosolobov and Sivukhin~\cite{Kosolobov2019}, namely that their data structure for representing the XBWT of a trie, previously described in this section, can also be applied to represent the next transitions in the Aho-Corasick automaton~\cite{AC75}.
In fact, all the solutions presented in this section were originally proposed specifically to represent the trie $\mathcal T$ underlying the Aho-Corasick automaton, while in this article we presented them in a more general setting as representations of the XBWT.
As a direct consequence of the analysis of Theorem~\ref{theorem: trieHk id}, the space usage of a compressed representation of the Aho-Corasick automaton can be written in terms of our empirical entropy plus other terms (see Corollary~\ref{corollary: AC}).
We now start by briefly describing the Aho-Corasick automaton.
Given a text $T$ and a dictionary $S$ of $d$ strings, the Aho-Corasick automaton is designed to recognize which strings of $S$ appear in $T$, as well as their locations in the text.
Informally, the Aho-Corasick automaton can be described as the trie encoding the set of strings $S$, endowed with additional backward edges used to efficiently find all the occurrences of $S$ in $T$.
In this article, we only provide a high-level description of the algorithm execution, while a more detailed description can be found in the original article of Aho and Corasick~\cite{AC75}.
Consider the set $P$ formed by all the distinct prefixes in $S$, including the empty string $\epsilon$ and all the strings of $S$, and let $n$ be the cardinality of $P$.
Then the total number of states in the Aho-Corasick automaton is equal to $n$, and every state corresponds to a specific element of the set $P$.
In the following, as done by Belazzougui~\cite{BelazzouguiAhoCorasick}, we consider the variant of the Aho-Corasick automaton where every state $u$ has three types of transitions called \emph{next}, \emph{failure}, and \emph{report}.
Let $p \in P$ be the prefix associated to the state $u$, then these three types of transitions of $u$ are defined as follows:

\begin{itemize}
    \item For every character $c \in \Sigma$, if $pc \in P$, then the next transition \texttt{next}($u$,$c$) returns the state of the automaton associated to the prefix $pc$.
    \item Let $\bar p$ be the longest proper suffix of $p$ such that $\bar p \in P$ (possibly $p = \epsilon$), then the failure transition of $u$ \texttt{failure}($u$) returns the state associated to $\bar p$.
    \item Let $\bar p$ be the longest proper suffix of $p$ such that $\bar p \in S$.
    If $\bar p$ exists, then the state $u$ has a report transition \texttt{report}($u$) returning the accepting state associated to the prefix $\bar p$.
\end{itemize}

In addition to these transitions, the Aho-Corasick automaton requires additional data structures to recognise the accepting states, that is the states corresponding to the strings of $S$.
The initial state of the automaton is the one corresponding to $\epsilon$, and during the algorithm execution we scan the text from left to right.
Suppose that at some point of the algorithm execution we have reached the character $T[i]=c$ of the text and a state $u$ of the automaton corresponding to the prefix $p \in P$.
Then at the next iteration, the algorithm executes the following operations: (i)~If $pc$ is not a prefix of $P$, we follow the failure transition \texttt{failure}($u$).
(ii)~If $pc$ is a prefix of $P$, we follow the next transition \texttt{next}($u$, $c$), we check if we entered into an accepting state, and we advance one position in the text.
(iii)~If we have followed a next transition, we use the report transitions to find all the occurrences of $S$ in $T$ that are proper suffixes of the string $pc$ (if any).
It is known that the total number failure and next transitions we have to follow during the algorithm execution is bounded by $2 \lvert T \rvert$~\cite[Theorem 2]{AC75}.
Moreover, the number of times a report transition is followed is trivially bounded by the number of occurrences of the strings in $S$ within the text $T$, denoted hereafter by $occ$.
Thus, if we assume we can follow a transition of any type in $O(1)$ time, the total complexity of the algorithm becomes $O(\lvert T \rvert + occ)$.
It is easy to observe that these next transitions form a trie $\mathcal T$ of $n$ nodes rooted at the initial state of the automaton.
As a consequence, by Corollary~\ref{cor: prefix matching} it follows that the trie $\mathcal T$, and consequently the next transitions, can be encoded in at most $n\mathcal H_k (\mathcal  T) + o(n)$ bits, while supporting \texttt{next}($u$,$c$) in $O(1)$ time for every $c \in \Sigma$ and node $u$.

In order to represent the remaining parts of the Aho-Corasick automaton, that is, the failure transitions, the report transitions, and the accepting states, we consider the solutions proposed in the articles of Belazzougui~\cite{BelazzouguiAhoCorasick}, and Kosolobov and Sivukhin~\cite{Kosolobov2019}.
Specifically, to distinguish between non-accepting and accepting states, Belazzougui proposed to use the ID representation of Raman et al.~\cite{RRR}, which can be stored in at most $d(\log (n/d) + O(1))$ bits of space, where $d = \lvert S \rvert$~\cite[Section 3.1]{BelazzouguiAhoCorasick}.
The identifier of a string $s \in S$ and its accepting state, returned by this ID representation, corresponds to the co-lexicographic rank of $s$ among the prefixes in $P$.
Moreover, given the co-lexicographic rank of the strings $s$ in the set $P$, we can store the lengths of the strings in $S$ in an array of $d$ positions in $d(\log (m/d) + O(1) )$~\cite[Section 3.5]{BelazzouguiAhoCorasick}, where $m$ is the total length of the strings in $S$, formally $m = \sum_{s \in S} \lvert s \rvert$.
This latter result is achieved by using the solution of Grossi and Vitter~\cite{EFGrossiVitter}, which supports random accesses in $O(1)$ time.
For the report transitions, Belazzougui proposed a representation taking $d(\log(n/d) + O(1))$~\cite[Lemma 7]{BelazzouguiAhoCorasick} bits of space by using the compressed tree representation proposed by Jansson et al.~\cite{JanssonDegreeEntropy}, so that \texttt{report}($u$) can be executed in $O(1)$ time.
Finally, to encode the failure transitions we use the solution of Kosolobov and Sivukhin~\cite[Section 5]{Kosolobov2019} which allows us to save the failure transitions within $\beta n + o(n)$ bits, where $\beta$ is an arbitrary constant satisfying $0 < \beta < 2$.
Their solution works through a sampling scheme saving the failure transitions only for some nodes in the automaton.
This sampling technique introduces a slowdown in the Aho-Corasick algorithm, where the smaller is the constant $\beta$, the slower becomes the algorithm execution.
However, this slowdown is hidden by the big-O notation and consequently the time complexity remains $O(\lvert T \rvert + occ)$~\cite[Theorem 9]{Kosolobov2019}.
This machinery allows us to find all the occurrences of the strings in $S$ within the text $T$ through triples (\texttt{occ\_bgn}, \texttt{occ\_end}, \texttt{str\_id})~\cite[Section 3.5]{BelazzouguiAhoCorasick}, where \texttt{occ\_bgn} and \texttt{occ\_end} are the starting and ending positions of the occurrence in $T$, respectively, and \texttt{str\_id} is the unique identifier of the string $s \in S$ occurring in $T$, namely the co-lexicographic rank of $s$ in the set $P$.
By observing that $n \leq m+1$, since the number of distinct prefixes is upper-bounded by the total length of the strings in $S$, due to Corollary~\ref{cor: prefix matching}, we can deduce the following result.

\begin{corollary}\label{corollary: AC}
    Let $\mathcal T$ be the trie representing a set of strings $S$, and $\varepsilon$ be an arbitrary constant with $0 \leq \varepsilon <1$.
    If $\sigma \leq n^\varepsilon$, then we can represent the Aho-Corasick automaton corresponding to $S$ in at most $n\mathcal H_k(\mathcal T) + \beta n+  o(n) + d( 3\log (m/d) + O(1))$ bits of space, for every $k \leq \max\{0,\alpha\log_\sigma n -2 \}$ simultaneously, where $\alpha$ and $\beta$ are two constants satisfying $0 \leq \alpha < 1$ and $0 < \beta < 2$, respectively.
    This representation allows us to find all the occurrences of the strings in $S$ within a given text $T$ through triples (\texttt{occ\_bgn}, \texttt{occ\_end}, \texttt{str\_id}) in $O(\lvert T \rvert + occ)$ time.
\end{corollary}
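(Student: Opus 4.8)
The plan is to prove Corollary~\ref{corollary: AC} purely by assembling the component data structures already described above and adding up their space costs, then checking the claimed running time; no new combinatorial idea is needed beyond Corollary~\ref{cor: prefix matching}.

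First I would recall that the \texttt{next} transitions of the Aho-Corasick automaton form a trie $\mathcal{T}$ of $n$ nodes rooted at the initial state, so Corollary~\ref{cor: prefix matching} applies under the stated hypothesis $\sigma \le n^{\varepsilon}$: it encodes all \texttt{next} transitions in $n\mathcal{H}_k(\mathcal{T}) + o(n)$ bits, simultaneously for every $k \le \max\{0,\alpha\log_\sigma n - 2\}$, while answering $\texttt{next}(u,c)$ in $O(1)$ time (this is because the partition underlying the entropy bound in Theorem~\ref{theorem: trieHk id} is chosen independently of $k$). On top of this I would stack: the ID representation of Raman et al.~\cite{RRR} separating accepting from non-accepting states, taking $d(\log(n/d)+O(1))$ bits~\cite[Section 3.1]{BelazzouguiAhoCorasick}; the array of the lengths of the strings in $S$, realised with the solution of Grossi and Vitter~\cite{EFGrossiVitter} in $d(\log(m/d)+O(1))$ bits with $O(1)$-time access~\cite[Section 3.5]{BelazzouguiAhoCorasick}; the compressed-tree encoding of the \texttt{report} transitions of Jansson et al.~\cite{JanssonDegreeEntropy}, taking $d(\log(n/d)+O(1))$ bits and supporting $\texttt{report}(u)$ in $O(1)$ time~\cite[Lemma 7]{BelazzouguiAhoCorasick}; and the sampled encoding of the \texttt{failure} transitions of Kosolobov and Sivukhin~\cite[Section 5]{Kosolobov2019}, taking $\beta n + o(n)$ bits for any constant $0 < \beta < 2$.

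To merge the three $d$-dependent terms into $d(3\log(m/d)+O(1))$ I would use $n \le m+1$, which holds because the number of distinct prefixes of $S$ is at most the total length of the strings in $S$ plus one (for $\epsilon$); hence $\log(n/d) \le \log((m+1)/d) = \log(m/d) + O(1)$, so each $\log(n/d)$ term is absorbed into $\log(m/d)+O(1)$. Summing all contributions yields the stated bound $n\mathcal{H}_k(\mathcal{T}) + \beta n + o(n) + d(3\log(m/d)+O(1))$, simultaneously for all admissible $k$. For the running time, I would recall that scanning $T$ triggers $O(|T|)$ \texttt{next} and \texttt{failure} transitions~\cite[Theorem 2]{AC75} and exactly $occ$ \texttt{report} transitions; \texttt{next} and \texttt{report} are $O(1)$ by the structures above, and although the sampled \texttt{failure} structure of~\cite{Kosolobov2019} incurs a $\beta$-dependent slowdown, this is hidden by the big-$O$ (cf.~\cite[Theorem 9]{Kosolobov2019}), so the overall time is $O(|T| + occ)$, with occurrences reported as triples $(\texttt{occ\_bgn}, \texttt{occ\_end}, \texttt{str\_id})$ exactly as in~\cite[Section 3.5]{BelazzouguiAhoCorasick}. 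I do not expect a genuine obstacle here: the only points requiring minor care are the $n \le m+1$ arithmetic that collapses the $d$-terms and confirming that the $k$-independence of Corollary~\ref{cor: prefix matching} (inherited from Theorem~\ref{theorem: trieHk id}) carries through, plus checking that \texttt{str\_id} is consistently the co-lexicographic rank of the matched string among the prefixes of $S$, which is guaranteed by the cited constructions.
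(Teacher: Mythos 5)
Your proposal is correct and matches the paper's argument essentially verbatim: the paper also obtains the corollary by combining Corollary~\ref{cor: prefix matching} for the \texttt{next} transitions with Belazzougui's encodings of the accepting states, string lengths, and \texttt{report} transitions, Kosolobov--Sivukhin's sampled \texttt{failure} transitions, and the observation $n \leq m+1$ to collapse the three $d$-dependent terms. The time analysis via the $2\lvert T\rvert$ bound on \texttt{next}/\texttt{failure} transitions and the $occ$ bound on \texttt{report} transitions is likewise identical.
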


\section{FM-index for tries}\label{sec: fmindex}

In this section, we aim to enhance the representation of the $\BWT(\mathcal T)$~\cite{Kosolobov2019} discussed in Section~\ref{section: XBWT} to efficiently support \emph{subpath queries}~\cite{xBWTjournal, rindexTrie}, also called count queries and denoted by $\countop(\mathcal T, p)$ in this paper.
This query consists in counting the number of nodes reached by path labeled by a string $p$, where $p \in \Sigma^m$ for some $m \geq 0$, that is $\countop(\mathcal{T},p)=\lvert\{u\in V \mid \lambda_m(u)=p \}\rvert$.
Note that the considered paths can start from any node of $\mathcal{T}$.
In this section we show that, the representation discussed in Theorem~\ref{theorem: trieHk id} supports $\countop(\mathcal{T},p)$ queries in $O(m(\log \sigma + \log \log n))$ time, where $m = \lvert p \rvert$.
Furthermore, we prove that if $\sigma = O(\log^c n)$ for some arbitrary constant $c > 0$, by applying the FID representation of Lemma~\ref{thm:RRR}, instead of the ID, we can still represent the $\BWT(\mathcal T)$ in at most $n \mathcal H_k(\mathcal T) + o(n)$ bits of space, for every $k$ sufficiently small simultaneously, and reduce the count query time to $O(m)$.
We recall that both solutions also support $child(i,c)$ in $O(1)$ time.
Moreover, we prove that if for every character $c \in \Sigma$ it holds that $n_c \leq n / 2$, both representations are \emph{succinct}, that is, the number of bits they use is upper-bounded by $\mathcal{H}^{wc}(\mathcal T) + o(\mathcal H^{wc}(\mathcal T ))$, where $\mathcal H^{wc}(\mathcal T)$ is the worst-case entropy for tries of Definition~\ref{def: worst-case}.
These aforementioned representations of the $\BWT(\mathcal T)$ represent natural extensions of the renowned FM-index~\cite{FMindex,FMI-Journal} from strings to tries, since they use an analogous strategy to simultaneously compress and index the input trie.

We now recall how the XBWT of Ferragina et al.~\cite{xBWTjournal} can be employed to perform count queries.
It is known that for every integer $m \geq 0$, and string $p \in \Sigma^m$, the nodes $u$ satisfying $\lambda_m(u) = p$ form an interval in the co-lexicographically sorted sequence $u_1, u_2, \ldots , u_n$ of the nodes~\cite{xBWTjournal}.
In other words, for every string $p$ there exists an interval $[i,j]$, with $1 \leq i, j \leq n$, such that $u_i, u_{i+1}, \ldots, u_j$ are those and only nodes satisfying $\lambda_m(u) = p$.
If no node $u$ of $\mathcal T$ satisfies $\lambda_m (u) = p$, then $i > j$ holds true.
The interval $[i,j]$ corresponding to $p$ can be computed recursively by means of {\em forward search} as follows.
Consider the string $p'$ and the symbol $c$ such that $p = p'c$ and let $[i',j']$ be the interval of the nodes reached by $p'$. It is known that $i = C[c] +  \rank(i' - 1, B_c) + 1$ and $j = C[c] + \rank(j',B_c)$~\cite{xBWTjournal}, where $B_c$ and $C$ are the bitvector and the array defined in Section~\ref{section: XBWT}.
The empty string $\epsilon$ represents the base case of this recursion, and its corresponding interval is $[1,n]$.
When the forward search has terminated, the integer $\countop(\mathcal T,p)$ can be trivially computed as $j - i + 1$ if $i \leq j$, and $0$ otherwise.
In the following theorem we show that the representation of $\BWT(\mathcal T)$ proposed by Kosolobov and Sivukhin~\cite{Kosolobov2019} and analysed in Theorem~\ref{theorem: trieHk id} supports $\countop(\mathcal T,p)$ queries in $O(m(\log \sigma+ \log \log n))$ time where $m=\lvert p\lvert$. Furthermore we show that if the alphabet is sufficiently small, this time complexity can be reduced to $O(m)$.
To prove this second result, we use the solution of Pătrașcu to perform rank and select queries on a given bitvector~\cite[Theorem 2]{succincter}.
Indeed, we show that by applying this solution to the $\sigma$ bitvectors $B_c$, without any explicit partitioning, we can automatically reach the desired upper-bound of $n\mathcal H_k(\mathcal T) + o(n)$ bits, which holds for every $k \geq 0$ sufficiently small simultaneously.

\begin{theorem}\label{theorem: trieHk fid}
    Let $\mathcal{T}$ be a trie and $\varepsilon,\alpha$ be arbitrary real constants with $0\leq \alpha,\varepsilon <1$.
    Then there exist encodings of $\BWT(\mathcal T)$ occupying at most $n\mathcal H_k(\mathcal T) + o(n)$ bits of space for every integer $k \leq \max\{0, \alpha \log _\sigma n - 2 \}$ simultaneously that given a pattern $p \in \Sigma^m$ support $\countop(\mathcal T, p)$ queries in 1)~$O(m(\log \sigma+ \log \log n))$ time if $\sigma \leq n^\varepsilon$ and 2)~$O(m)$ time when $\sigma = O(polylog(n))$.
\end{theorem}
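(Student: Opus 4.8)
The plan is to produce two encodings of $\BWT(\mathcal{T})$, one per alphabet regime, both built on the $\sigma$ bitvectors $B_c$ and the array $C$ of Section~\ref{section: XBWT} and both answering $\countop(\mathcal{T},p)$ by forward search. Recall that forward search maintains, for each prefix $p'$ of $p$, the co-lexicographic interval $[i',j']$ of the nodes reached by $p'$, starting from $[1,n]$ when $p'=\epsilon$; for the one-symbol extension $p'c$ it sets $i=C[c]+\rank(i'-1,B_c)+1$ and $j=C[c]+\rank(j',B_c)$ (with the convention $\rank(0,B_c)=0$), and after the $m$ extensions it returns $j-i+1$ if $i\le j$ and $0$ otherwise. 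Hence a count query costs $O(m)$ word operations plus $2m$ full rank queries on the bitvectors $B_c$, and in both encodings the running time is governed by the cost of a single full rank query.

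For claim 1) I would reuse \emph{verbatim} the encoding analysed in Theorem~\ref{theorem: trieHk id} (the representation of Kosolobov and Sivukhin, i.e.\ the ID of Lemma~\ref{thm:RRR} applied to the fixed-size blocks $B_c^i$, together with the tables $R$ and $C$ and the bitvector $S$), which already occupies $n\mathcal{H}_k(\mathcal{T})+o(n)$ bits simultaneously for every $k\le\max\{0,\alpha\log_\sigma n-2\}$ and supports $child(i,c)$ in $O(1)$ time. The only additional observation needed is that, by Proposition~\ref{prop: full rank queries on id}, this encoding answers $\rank(i,B_c)$ in $O(\log\sigma+\log\log n)$ time; substituting this into the forward-search cost above yields the claimed $O(m(\log\sigma+\log\log n))$ time for $\countop(\mathcal{T},p)$, while the space bound and its simultaneity over $k$ are inherited unchanged.

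For claim 2) I would keep the same skeleton --- the $\sigma$ bitvectors $B_c$ plus the array $C$ --- but encode each \emph{whole} bitvector $B_c$ with the FID of Lemma~\ref{thm:RRR} (with $t$ a suitably large constant), dropping the explicit block partitioning; the bitvector $S$ is then unnecessary, since a FID already answers $\select$ on each $B_c$ in $O(1)$ time. With this layout a full or partial rank query on any $B_c$ costs $O(t)=O(1)$, so forward search runs in $O(m)$ time and $child(i,c)$ is still supported in $O(1)$ time. For the space, $C$ and the $\sigma$ FID pointers take $O(\sigma\log n)=o(n)$ bits, so it remains to bound the total size of the $\sigma$ FIDs. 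The key point is that Pătrașcu's construction internally splits $B_c$ into blocks of $O(\log n)$ bits and is blockwise optimal up to its stated redundancy, hence the FID of $B_c$ occupies at most $\sum_i|B_c^i|\mathcal{H}_0(B_c^i)+n/(\tfrac{\log n}{t})^t+\tilde{O}(n^{3/4})$ bits, where $B_c^1,B_c^2,\dots$ is this internal partition and $|B_c^i|=O(\log n)$ for every $i$. Summing over $c\in\Sigma$, the two redundancy terms contribute $\sigma\bigl(n/(\tfrac{\log n}{t})^t+\tilde{O}(n^{3/4})\bigr)$, which is $o(n)$ once $t$ is chosen large enough in terms of the constant hidden in $\sigma=O(\polylog(n))$; moreover, taking $\mathcal{V}_k$ to be the partition of the co-lexicographic ordering by length-$k$ context (as in the proof of Theorem~\ref{theorem: trieHk id}), whose parts $B_{w,c}$ satisfy $\sum_{c\in\Sigma}\sum_{w\in\Sigma^k}|B_{w,c}|\mathcal{H}_0(B_{w,c})=n\mathcal{H}_k(\mathcal{T})$, an application of Lemma~\ref{lemma: partitionedH0} to the two partitions of each $B_c$ gives $\sum_{c\in\Sigma}\sum_i|B_c^i|\mathcal{H}_0(B_c^i)\le n\mathcal{H}_k(\mathcal{T})+\sigma(\ell-1)\cdot O(\log n)$, where $\ell=O(\sigma^k)$ is the number of parts of $\mathcal{V}_k$. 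For $k\le\alpha\log_\sigma n-2$ the last term is $O(\sigma^{k+1}\log n)=O(n^\alpha\log n)=o(n)$ since $\alpha<1$ (and it vanishes for $k=0$); as none of these choices depend on $k$, the bound $n\mathcal{H}_k(\mathcal{T})+o(n)$ holds simultaneously for all admissible $k$.

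The step I expect to be the main obstacle is precisely this space analysis of claim 2): the black-box statement of Lemma~\ref{thm:RRR} gives only the \emph{global} bound $\log\binom{n}{n_c}+\text{(redundancy)}$ for each $B_c$, and summing those merely reproves the case $k=0$, namely $n\mathcal{H}_0(\mathcal{T})+o(n)$, which is \emph{not} an upper bound on $n\mathcal{H}_k(\mathcal{T})+o(n)$ when $k>0$ (cf.\ Proposition~\ref{prop: infinite family of tries}). To reach $k$-th order one must open up Pătrașcu's spillover construction and verify (i)~that it is blockwise optimal with blocks of $O(\log n)$ bits and (ii)~that its non-main overhead is genuinely $o(n)$ per bitvector, after which the combination with Lemma~\ref{lemma: partitionedH0} carried out above is routine; a secondary, purely bookkeeping task is to check that all auxiliary structures are $o(n)$ under the respective hypotheses on $\sigma$ and that the forward-search degenerate cases ($i'=1$, giving $\rank(0,B_c)=0$, and $i>j$ signalling a nonexistent path) are handled correctly.
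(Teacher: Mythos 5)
Your proposal is correct and follows essentially the same route as the paper: claim 1 is Theorem~\ref{theorem: trieHk id} plus Proposition~\ref{prop: full rank queries on id}, and claim 2 applies P\u{a}tra\c{s}cu's FID to each whole $B_c$, opens up its internal block partition to get blockwise $0$-th order optimality, and then invokes Lemma~\ref{lemma: partitionedH0} against the context partition $\mathcal{V}_k$ exactly as the paper does. The only minor inaccuracy is that P\u{a}tra\c{s}cu's internal blocks have size $r=(\frac{\log n}{t})^{\Theta(t)}$ (polylogarithmic for constant $t$) rather than $O(\log n)$, but this does not affect the argument since $\sigma(\ell-1)r\leq\sigma^{k+1}r\leq n^{\alpha}\cdot\polylog(n)=o(n)$ still holds.
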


\begin{proof}
1)~For general alphabets of size $\sigma\leq n^\varepsilon$, consider the representation of $\BWT(\mathcal T)$ proposed by Kosolobov and Sivukhin~\cite{Kosolobov2019} and analysed in Theorem~\ref{theorem: trieHk id}.
In Proposition~\ref{prop: full rank queries on id}, we proved that this data structure can support full rank queries in $O(\log \sigma + \log \log n)$ time, therefore it supports $\countop(\mathcal T, p)$ queries in $O(m (\log \sigma + \log \log n))$ time.
The claimed space bound directly follows from Theorem~\ref{theorem: trieHk id}.

2)~Otherwise, if $\sigma=O(polylog(n))$ consider the solution of Pătrașcu~\cite[Theorem 2]{succincter} described in~\cite[Subsection 4.1]{succincter}. Pătrașcu's representation partitions the input bitvector $B$ of size $n$ into blocks each of size $r=(\frac{\log n}{t})^{\Theta(t)}$ bits, and stores each block as a succinct augmented B-tree (aB-tree) within $\log \binom{r}{x^i}+2$ bits of space, where $x^i$ is the number of ones in the $i$-th block $B^i$, with $i\in [h]$ and $h=\lceil n/r\rceil$. 
Accounting for all the $h$ aB-trees and the auxiliary data structures needed to efficiently support the queries, the total space of this solution can be rewritten as $\sum_{i=1}^h \log \binom{r}{x^i}+n/(\frac{\log n}{t})^t + \tilde{O}(n^{3/4})$. 
We observe that $\log \binom{r}{x^i} \leq r\mathcal{H}_0(\tilde B^i)$~\cite[Equation 11.40]{elementsOfInformationTheory},  where $\tilde{B}$ is the bitvector $B$ right-padded by $0$s to make its length a multiple of $r$.
Therefore, the total space is upper-bounded by $\sum_{i=1}^h r\mathcal{H}_0(\tilde{B}^i)+n/(\frac{\log n}{t})^t + \tilde{O}(n^{3/4})$.
We now apply this FID representation to every right-padded bitvector $\tilde{B}_c$. 
The total space occupation is at most $\sum_{c\in \Sigma}\sum_{i=1}^h r\mathcal{H}_0(\tilde{B}^i_c)+\sigma n/(\frac{\log n}{t})^t + \tilde{O}(n^{3/4})$ bits, where the $\tilde{O}(n^{3/4})=o(n)$ term is the space for the shared global look-up tables of Pătrașcu's solution.
We note that since $\sigma = O(\log^c n)$ for some constant $c$, by choosing $t$ as a constant strictly larger than $c$, the term $\sigma n/(\frac{\log n}{t})^t$ becomes $O(n/\log ^{t-c}n)=o(n)$.
To prove the claimed space bound we upper-bound the above left summation in terms of $\mathcal{H}_k(\mathcal{T})$.
We preliminary observe that $\sum_{c\in \Sigma}\sum_{i=1}^h r\mathcal{H}_0(\tilde{B}^i_c) < \sum_{c\in \Sigma}\sum_{i=1}^h \lvert B_c^i\rvert \mathcal{H}_0(B^i_c) + \sigma r$, and that $\sigma r = o(n)$ since both $\sigma$ and $r$ are poly-logarithmic functions on $n$.
Next, we apply Lemma~\ref{lemma: partitionedH0} to the above FID partitioning and the optimal one, similarly as we did in the proof of Theorem~\ref{theorem: trieHk id}.
In particular, we obtain that $\sum_{c\in \Sigma} \sum_{i=1}^h \lvert B_{c}^i\rvert \mathcal{H}_0(B^i_c) \leq n\mathcal{H}_k(\mathcal{T}) + \sigma(\ell-1)r$, where $\ell \leq \sigma^k$ is the size of the context induced partitioning. 
If $k=0$, then $\ell=1$ and the last term $\sigma (\ell-1)r$ is $0$, otherwise it is at most $\sigma^{k+1} r \leq n^\alpha r = o(n)$ since $\alpha<1$ and $r$ is a poly-log.
This proves the upper-bound $n\mathcal{H}_k(\mathcal{T})+o(n)$ for every $k \leq \max\{0, \alpha \log _\sigma n - 1 \}$ simultaneously.
The above representation supports full rank queries on the $\sigma$ bitvectors in $O(1)$ time, because $t$ is a constant (see Lemma~\ref{thm:RRR}).
Consequently, this solution can support $\countop(\mathcal T, p)$ queries in $O(m)$ time for every string pattern $p$ of length $m$.
\end{proof}

We recall that both solutions support the operation $child(i, c)$ in $O(1)$ time, since partial rank queries are supported in constant time. 
We now show that if $n_c \leq n/2$ holds for every $c \in \Sigma$, then the data structures analysed in Theorems~\ref{theorem: trieHk id} and~\ref{theorem: trieHk fid} are \emph{succinct} with respect to the worst-case entropy for tries $\mathcal{H}^{wc}$ of Definition~\ref{def: worst-case}.

\begin{theorem}
    If $n_c \leq n/2$ holds for every $c \in \Sigma$, then the representations of $\BWT(\mathcal T)$ analysed in Theorems~\ref{theorem: trieHk id} and~\ref{theorem: trieHk fid} are succinct, i.e., they occupy a number of bits upper-bounded by $\mathcal{H}^{wc}(\mathcal T) + o(\mathcal H^{wc}(\mathcal T))$.
\end{theorem}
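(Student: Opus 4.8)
The plan is to combine the space bounds of Theorems~\ref{theorem: trieHk id} and~\ref{theorem: trieHk fid} with Lemma~\ref{lemma:emp_wc}, and to observe that the hypothesis $n_c \le n/2$ forces the worst-case entropy to be as large as $\Theta(n)$, so that the $o(n)$ slack in those theorems is automatically $o(\mathcal{H}^{wc}(\mathcal T))$. First I would instantiate the space bound of either theorem with $k = 0$: each of the representations occupies at most $n\mathcal{H}_0(\mathcal T) + o(n)$ bits (this is the weakest such bound, so proving succinctness for it suffices for every admissible $k$). By Lemma~\ref{lemma:emp_wc} we have $n\mathcal{H}_0(\mathcal T) \le \mathcal{H}^{wc}(\mathcal T) + \sigma\log(n+1) + \log n$, hence the space is at most $\mathcal{H}^{wc}(\mathcal T) + \sigma\log(n+1) + \log n + o(n)$.

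The key step is the lower bound $\mathcal{H}^{wc}(\mathcal T) = \Omega(n)$ under the hypothesis. Using $\binom{n}{m} \ge (n/m)^m$ — each factor of $\binom{n}{m} = \prod_{i=0}^{m-1}\tfrac{n-i}{m-i}$ is at least $n/m$ — we get $\log\binom{n}{n_c} \ge n_c\log(n/n_c)$ for every $c \in \Sigma$; when $n_c \le n/2$ the factor $\log(n/n_c)$ is at least $1$, so $\log\binom{n}{n_c} \ge n_c$ (and this is trivially true when $n_c = 0$). Summing over $c \in \Sigma$ and using $\sum_{c \in \Sigma}n_c = n-1$ then gives $\mathcal{H}^{wc}(\mathcal T) = \sum_{c \in \Sigma}\log\binom{n}{n_c} - \log n \ge (n-1) - \log n$, which is $\Omega(n)$.

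Finally I would bound the additive overhead. In the setting of Theorems~\ref{theorem: trieHk id} and~\ref{theorem: trieHk fid} we always have $\sigma \le n^\varepsilon$ with $\varepsilon < 1$ (in the poly-logarithmic alphabet case this holds a fortiori), so $\sigma\log(n+1) = o(n)$; together with $\log n = o(n)$ and the remaining $o(n)$ term, the total additive overhead past $\mathcal{H}^{wc}(\mathcal T)$ is $o(n)$. Since $\mathcal{H}^{wc}(\mathcal T) = \Omega(n)$, any $o(n)$ quantity is also $o(\mathcal{H}^{wc}(\mathcal T))$, and therefore each representation occupies at most $\mathcal{H}^{wc}(\mathcal T) + o(\mathcal{H}^{wc}(\mathcal T))$ bits, which is precisely the claimed succinctness.

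The only non-routine ingredient is the lower bound $\mathcal{H}^{wc}(\mathcal T) = \Omega(n)$ deduced from $n_c \le n/2$; everything else is substitution into Lemma~\ref{lemma:emp_wc}. A minor point to state carefully is that one may replace $\sigma$ by the effective alphabet size in Lemma~\ref{lemma:emp_wc} (symbols with $n_c=0$ contribute nothing to $\mathcal{H}^{wc}$ or $n\mathcal{H}_0$), but since $\sigma\log(n+1) = o(n)$ regardless, this refinement is not even needed here.
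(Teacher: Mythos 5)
Your proposal is correct and follows essentially the same route as the paper's proof: bound the space by $n\mathcal{H}_0(\mathcal T)+o(n)$ via the monotonicity of $\mathcal H_k$, pass to $\mathcal H^{wc}(\mathcal T)+o(n)$ using Lemma~\ref{lemma:emp_wc} together with $\sigma\le n^{\varepsilon}$, and then show $\mathcal H^{wc}(\mathcal T)\ge n-1-\log n=\Omega(n)$ from $\log\binom{n}{n_c}\ge n_c$ when $n_c\le n/2$. The only cosmetic difference is that you derive this last binomial inequality directly (via $\binom{n}{m}\ge (n/m)^m$) where the paper cites it from Raman et al.
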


\begin{proof}
    To prove the claim, we need to show that $n\mathcal H_k(\mathcal T) + o(n)$ is upper-bounded by $\mathcal H^{wc}(\mathcal T) + o(\mathcal H^{wc}(\mathcal T))$.
    We preliminary show that $n\mathcal H_k(\mathcal T) \leq \mathcal H^{wc}(\mathcal T) + o(n)$ holds for every $k \geq 0$.
    By Corollary~\ref{corollary: relation worst-case empirical}, we know that $n\mathcal H_0(\mathcal T) = \mathcal H^{wc}(\mathcal T) + O(\sigma \log n)$, and consequently, due to the log-sum inequality~\cite[Eq. (2.99)]{elementsOfInformationTheory} we conclude that $n\mathcal H_k(\mathcal T) \leq \mathcal H^{wc}(\mathcal T) + O(\sigma \log n)$ for every $k \geq 0$.
    Moreover, in both theorems we assumed that $\sigma \leq n^{\varepsilon}$ for some constant $\varepsilon$, with $0 \leq \varepsilon < 1$.
    Consequently, it follows that $O(\sigma \log n) = o(n)$, which proves $n\mathcal H_k(\mathcal T) \leq \mathcal H^{wc}(\mathcal T) + o(n)$.
    Therefore, to complete the proof it remains to be shown that $o(n) = o(\mathcal H^{wc}(\mathcal T))$.
    To see this, we preliminary observe that the formula $\log \binom{m}{n}$ is lower-bounded by $n$ whenever $n \leq m/2$~\cite[Section 1.1.4]{RRR}.
    As a consequence of this, since $\mathcal{H}^{wc}(\mathcal T) =\sum_{c \in \Sigma} \log \binom{n}{n_c}  - \log n $ (see Definition~\ref{def: worst-case}), if $n_c \leq n/2$ holds for every $c \in \Sigma$, we can conclude that $\mathcal{H}^{\wc}(\mathcal{T})\geq n-1-\log n$.
    This implies that the $o(n)$ term is also $o(\mathcal{H}^{\wc}(\mathcal{T}))$, which completes the proof.
\end{proof}

On the other hand, if there exists a character $\hat c \in \Sigma$, such that $n_{\hat c} > n/2$, since $\sum_{c \in \Sigma} n_c = n-1$, then $\hat c$ is the only symbol of $\Sigma$ occurring more than $n/2$ times.
We also observe that $\mathcal{H}^{\wc}(\mathcal{T}) = \sum_{c \in \Sigma}\log \binom{n}{n_c} - \log n = \sum_{\substack{c \in \Sigma \setminus \{\hat c\}}} \log \binom{n}{n_c} + \log \binom{n}{n - n_{\hat c}} - \log n$ holds since $\binom{m}{n} = \binom{m}{m-n}$.
As $n-n_{\hat c}\leq n/2$, we observe that $\mathcal{H}^{\wc}(\mathcal{T})\geq n^*-\log n$ where $n^*=(n - n_{\hat c}) + \sum_{\substack{c \in \Sigma\setminus \{\hat c\}}} n_c$.
Now we replace the bitvector $B_{\hat c}$ with its complement $\bar{B}_{\hat c}$ obtained by replacing in $B_{\hat c}$ every $1$ with $0$ and vice versa.
Now we build the ID representation of Raman et al.~\cite[Theorem 4.6]{RRR} on every bitvector $B_c$ within $\sum_{\substack{c \in \Sigma \setminus \{\bar c\}}} \log \binom{n}{n_c} + \log \binom{n}{n - n_{\hat c}}+o(n^*)+O(\sigma\log\log n) = \mathcal H^{\wc}(\mathcal T) + \log n + o(n^*) + O(\sigma \log \log n)$ bits of space.
By adding the space usage for the array $C$ and the pointers to the $\sigma$ ID representations we obtain a total space usage of $\mathcal H^{\wc}(\mathcal T)  + o(n^*) + O(\sigma \log  n)$, that is $\mathcal H^{\wc}(\mathcal T)  + o(\mathcal H^{\wc}(\mathcal T)) + O(\sigma \log  n)$.
It follows that this representation is succinct up to an additive $O(\sigma \log n)$ number of bits.
Finally, assuming that alphabet is effective, i.e., that every $c\in \Sigma$ appears in $\mathcal{T}$, we observe that since $\log \binom{y}{x}\geq \log y$ if $x$ is not $0$ nor $y$, it holds that $\mathcal{H}^{\wc}(\mathcal{T}) =  \Omega( (\sigma-1) \log n)$, and therefore the previous space is also $O(\mathcal{H}^{\wc}(\mathcal{T}))$ assuming $\sigma > 1$.
By Remark~\ref{remark:IDrank}, we know that this representation can compute full rank queries on the $\sigma$ bitvectors in $O(\log n)$ time, and consequently $child(i,c)$ can be executed in $O(\log n)$ time. For the same reason $\countop(\mathcal T,p)$ can be computed in $O(m \log n)$ time, where $m = \lvert p \rvert$.

\subsection{Other operations}\label{subsec: other op}

We now describe the other operations supported by the data structures analysed in Theorems~\ref{theorem: trieHk id} and~\ref{theorem: trieHk fid}.
First of all, we consider the operation $\parentop(i)$.
Let $u_i$ be the $i$-th node in co-lexicographic order of $\mathcal T$.
The operation $\parentop (i)$ returns the co-lexicographic rank of the parent of $u$ in $\mathcal T$.
This operation is not defined if $u_i$ is the root of the trie, i.e., if it holds that $i = 1$.
For instance in Figure~\ref{fig: example XBWT}, we have that $\parentop(17) = 21$.
Due to the properties of the XBWT~\cite{xBWTjournal, xBWTconf}, we know that the integer $j$ satisfying $\parentop(i)$ can be computed as follows.
Consider the character $c = \lambda(u_i)$, then we have that $\parentop(i) = \select(i - C[c], B_c)$.
Consequently, the solutions discussed in Theorems~\ref{theorem: trieHk id} and~\ref{theorem: trieHk fid} can support this operation in constant time, since they both support select queries in constant time on the $\sigma $ bitvectors $B_c$.
Consider now another typical operation on ordered trees, which consists of finding the $j$-th child of an input node $u$ according to the total order of its children.
In the context of tries, this operation corresponds to finding the child $v$ of $u$ for which there exist exactly $j-1$ nodes $z$ such that (i)~$z$ is a child of $u$, and (ii)~$\lambda(z) < \lambda(v)$ holds.
Let $i$ be the co-lexicographic rank of $u$ and let $v$ be the $j$-th child of $u$.
If by hypothesis we know that $\bar c = \lambda(v)$, then the co-lexicographic rank of $v$ can be computed as $child(i,\bar c)$ in $O(1)$ time.
However, finding $\bar c$ appears to be the bottleneck of this operation.
We can retrieve this character by scanning the $\sigma$ bitvectors $B_c$ in order, returning the $j$-th character $c$ for which $B_{c}[i] = 1$ holds.
However, this solution implies a worst-case time complexity equal to $\Theta(\sigma)$.
We leave as an open problem how to improve the query time of this operation without exceeding the $n\mathcal H_k(\mathcal T) + o(n)$ spatial upper-bound.

\section{Comparison with the number r of XBWT runs}\label{sec: r index}

In this section we compare our empirical entropy measure $\mathcal H_k(\mathcal T)$ with the trie repetitiveness measure $r$ proposed by Prezza~\cite{rindexTrie} and defined as follows.
Consider the sequence $\BWT(\mathcal{T}) = out(u_1),\ldots,out(u_n)$ of Definition~\ref{definition: Burrows-Wheeler}.
An integer $i \in [n]$ is a $c$-run break if $c \in out(u_i)$ and either $i = n$ or $c\notin out(u_{i+1})$ holds. 
Then the number $r$ of \emph{XBWT runs}, is $r = \sum_{c \in \Sigma} r_c$, where $r_c$ is the total number of $c$-run breaks in $\BWT(\mathcal{T})$~\cite{rindexTrie} (see Figure~\ref{fig: example XBWT} for an example).
Prezza proved that every trie $\mathcal T$ can be represented in $O(r \log n)$ bits~\cite[Lemma 3.1]{rindexTrie}.
In the following, we observe an important relation between this number $r$ and $n\mathcal H_k(\mathcal T)$ of Definition~\ref{def: k-entropy}.
Note that by Corollary~\ref{corollary: relation worst-case empirical}, and due to the log-sum inequality, it follows that this result also implies a strong relation between the worst-case entropy $\mathcal H^{\wc}(\mathcal T)$ and $r$.
This result immediately follows from an upper-bound for $r$ proved by Prezza in his article~\cite[Theorem 3.1]{rindexTrie}.

\begin{corollary}\label{corollary:rHk}
    For every trie $\mathcal T$ and non-negative integer $k \geq 0$, the number $r$ of XBWT runs is upper-bounded by $n\mathcal H_k(\mathcal T) + \sigma^{k+1}$.
\end{corollary}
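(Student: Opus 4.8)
The plan is to combine Prezza's upper bound on the number of XBWT runs~\cite[Theorem 3.1]{rindexTrie} with the elementary inequality relating the worst-case and zeroth-order empirical entropies of a bitvector. Prezza's result bounds $r$ in terms of the $k$-th order ``worst-case'' quantity $\sum_{w\in\Sigma^k}\sum_{c\in\Sigma}\log\binom{n_w}{n_{w,c}}$ appearing in the footnote to Definition~\ref{def: k-entropy}, with an additive slack of one unit per pair $(w,c)$, i.e.\ $r\le \sum_{w\in\Sigma^k}\sum_{c\in\Sigma}\log\binom{n_w}{n_{w,c}}+\sigma^{k+1}$ for every $k\ge 0$. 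So it suffices to show that this double sum of binomials is at most $n\mathcal H_k(\mathcal T)$.

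For that step I would invoke the standard fact that for a bitvector $B_{N,M}$ of length $N$ with $M$ ones one has $\log\binom{N}{M}\le N\mathcal H_0(B_{N,M}) = M\log\frac{N}{M}+(N-M)\log\frac{N}{N-M}$~\cite[Equation 11.40]{elementsOfInformationTheory}, and apply it term by term with $N=n_w$ and $M=n_{w,c}$ for each context $w\in\Sigma^k$ and character $c\in\Sigma$; the degenerate cases $n_w=0$, $n_{w,c}=0$, and $n_{w,c}=n_w$ are harmless since then both sides vanish under the convention $0\log(x/0)=0$. Summing over all $w$ and $c$ then gives
\[
\sum_{w\in\Sigma^k}\sum_{c\in\Sigma}\log\binom{n_w}{n_{w,c}}
\;\le\;
\sum_{w\in\Sigma^k}\sum_{c\in\Sigma}\Bigl( n_{w,c}\log\tfrac{n_w}{n_{w,c}}+(n_w-n_{w,c})\log\tfrac{n_w}{n_w-n_{w,c}}\Bigr)
\;=\;
n\,\mathcal H_k(\mathcal T),
\]
where the last equality is exactly Definition~\ref{def: k-entropy} after multiplying through by $n$. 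Plugging this into Prezza's bound yields $r\le n\mathcal H_k(\mathcal T)+\sigma^{k+1}$, as desired; and via Corollary~\ref{corollary: relation worst-case empirical} together with the log-sum inequality one also obtains a corresponding bound against $\mathcal H^{\wc}(\mathcal T)$, as the surrounding text notes.

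The only real obstacle is bookkeeping at the interface with~\cite{rindexTrie}: one has to pin down that Prezza's Theorem~3.1 is being used in the right form, namely that the quantity on the right of his bound is precisely $\sum_{w}\sum_{c}\log\binom{n_w}{n_{w,c}}$ (his Section~3.1 ``$k$-th order worst-case entropy'') and that the rounding overhead really is $\sigma^{k+1}$ — one integer round-up per possible $(w,c)$ block — rather than some other low-order term. Everything past that point is the routine term-by-term application of the binomial/entropy inequality and matching the resulting sum against Definition~\ref{def: k-entropy}.
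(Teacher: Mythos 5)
Your proposal is correct and follows exactly the paper's argument: invoke Prezza's bound $r\le \sum_{w\in\Sigma^k}\sum_{c\in\Sigma}\log\binom{n_w}{n_{w,c}}+\sigma^{k+1}$ and then dominate each binomial term by the corresponding zeroth-order empirical entropy term via~\cite[Equation 11.40]{elementsOfInformationTheory}, recovering $n\mathcal H_k(\mathcal T)$ from Definition~\ref{def: k-entropy}. The paper's proof is just a more compressed version of the same two steps.
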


\begin{proof}
    In his article, Prezza proved that the number $r$ is upper-bounded by $\sum_{w \in \Sigma^{k}} \sum_{c \in \Sigma} \log \binom{n_{w}}{n_{w,c}} + \sigma^{k+1}$~\cite[Theorem 3.1]{rindexTrie}, where the integers $n_w$ and $n_{w,c}$ are those of Definition~\ref{def: integers nw and nwc}.
    However, by~\cite[Equation 11.40]{elementsOfInformationTheory} this in turn implies that $r \leq n\mathcal{H}_k(\mathcal{T}) +  \sigma^{k+1}$ for every $k \geq 0$.
\end{proof}

Note that a similar relation also holds for the case of strings.
Indeed, Mäkinen and Navarro proved that for every string $S$, it holds that its number of BWT runs $r$ is upper-bounded by $n\mathcal H_k(\mathcal S) + \sigma^{k}$, for every $k \geq 0$~\cite[Theorem 2]{makinenrupperboundJournal}, where $n\mathcal H_k(\mathcal S)$ is the $k$-th empirical order of the string $S$.
In the following proposition, we exhibit a family of trie for which $r = \Theta(n\mathcal{H}_0(\mathcal{T})) = \Theta(n)$ holds.

\begin{proposition}\label{prop: family of tries r versus entropy}
    There exists an infinite family of tries with $n$ nodes, for which $r=\Theta(n\mathcal{H}_0(\mathcal{T})) = \Theta(n)$.
\end{proposition}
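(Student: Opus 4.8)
The plan is to use, for each integer $h \ge 1$, the complete binary trie $\mathcal{T}_h$ over $\Sigma=\{a,b\}$: every node of depth $<h$ has exactly two children, reached by $a$ and by $b$, and every node of depth $h$ is a leaf. Then $n = 2^{h+1}-1$, and since each of the $2^h-1$ internal nodes contributes exactly one $a$-edge and one $b$-edge, $n_a = n_b = 2^h-1$. Hence $\mathcal{H}_0(\mathcal{T}_h) = 2H_b(\tfrac{2^h-1}{2^{h+1}-1})$, with $H_b$ the binary entropy function, so $\mathcal{H}_0(\mathcal{T}_h)\le\sigma=2$ always while $\mathcal{H}_0(\mathcal{T}_h)\to 2H_b(\tfrac12)=2$; in particular $n\mathcal{H}_0(\mathcal{T}_h)=\Theta(n)$. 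Since also $r\le\sum_{c\in\Sigma}n_c = n-1$ trivially, the proposition reduces to proving $r=\Omega(n)$ for this family, after which $r=\Theta(n)=\Theta(n\mathcal{H}_0(\mathcal{T}_h))$ follows.

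The main obstacle is to understand the co-lexicographic order of the nodes of $\mathcal{T}_h$ (equivalently, of all strings over $\{a,b\}$ of length at most $h$) precisely enough to count runs in the XBWT. The device I would use is to encode a node whose incoming string is $t$, of length $d$, by the padded reversal $\widehat t := \overline{t}\,\bot^{\,h-d}\in\{\bot,a,b\}^h$, where $\bot$ is a new symbol with $\bot\prec a$; then the co-lexicographic order of the nodes coincides with the lexicographic order of the strings $\widehat t$, which are exactly the length-$h$ strings over $\{\bot,a,b\}$ whose $\bot$'s (if any) form a suffix. Under this encoding the leaves of $\mathcal{T}_h$ are the nodes with $\widehat t\in\{a,b\}^h$ and the internal nodes are those with at least one trailing $\bot$. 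Since $\mathcal{T}_h$ is complete, every internal node has out-set $\{a,b\}$ and every leaf has empty out-set, so $B_a=B_b$ is the indicator of the event ``$u_i$ is internal'', and $r = 2r_a$, where $r_a$ is the number of maximal runs of $1$s in this bitvector.

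To count those runs I would determine, for each leaf $w\in\{a,b\}^h$, whether the valid string immediately following $w$ in lexicographic order is a leaf or an internal node. A short case analysis gives: if $w$ ends in $a$, the successor is $w$ with its last symbol flipped to $b$, again a leaf; if $w$ ends in $b$ with $w\ne b^h$, say $w = u\,a\,b^{j}$ with $j\ge 1$, the successor is $u\,b\,\bot^{\,j}$, which is an internal node; and $w=b^h$ is the last node overall. Consequently a maximal run of internal nodes begins exactly at the first node $\epsilon$ and immediately after each leaf that ends in $b$ and is not $b^h$, so $r_a = 1 + (2^{h-1}-1) = 2^{h-1}$ and $r = 2^h = (n+1)/2 = \Theta(n)$. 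Combined with $n\mathcal{H}_0(\mathcal{T}_h)=\Theta(n)$ from the first paragraph, this establishes $r=\Theta(n\mathcal{H}_0(\mathcal{T}_h))=\Theta(n)$; alternatively, once $r=\Theta(n)$ is known one may deduce $n\mathcal{H}_0(\mathcal{T}_h)=\Omega(n)$ directly from Corollary~\ref{corollary:rHk}. The only genuinely delicate point is the successor analysis in the padded-reversal lexicographic order; the rest is routine counting together with the elementary entropy estimate.
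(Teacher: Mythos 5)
Your proposal is correct and follows essentially the same route as the paper: the same family of complete balanced binary tries, the same zero-order entropy estimate giving $n\mathcal{H}_0(\mathcal{T})=\Theta(n)$, and the same observation that $B_a=B_b$ is the internal-node indicator so that $r$ is determined by the alternation of internal and leaf blocks in co-lexicographic order. The only difference is bookkeeping: you reduce co-lex order to lexicographic order via the padded-reversal encoding and count where internal runs begin by a successor analysis, while the paper reasons directly in co-lex order and shows each maximal leaf run has exactly two nodes; both computations yield $r=(n+1)/2$.
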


\begin{proof}
Consider the complete and balanced binary trie $\mathcal{T}$ of $n > 1$ nodes over the alphabet $\Sigma = \{a, b\}$.
We observe that if $h$ is the height of $\mathcal{T}$, then the set of strings spelled in $\mathcal{T}$, considering also those reaching the internal nodes, is $\bigcup_{i=0}^h \Sigma^i$ and each of these strings reaches a single node.
In this case, we have $n_a = n_b = (n - 1)/2$ since half edges are labeled by the character $a$ and the remaining by the other character $b$.
Consequently, by Definition~\ref{def: 0-entropy}, we have that $n\mathcal{H}_0(\mathcal{T}) = (n-1)\log(\frac{2n}{n-1}) + (n+1)\log(\frac{2n}{n+1}) \approx 2n = \Theta(n)$.
Therefore, to prove the proposition, we show that $r = \Theta(n)$.
Consider the nodes $u_1,\ldots, u_n$ in co-lexicographic order, we have that $out(u_i)=\{a,b\}$ if $u_i$ is an internal node, and obviously $out(u_i)=\emptyset$ if $u_i$ is a leaf.
Since $n > 1$, node $u_1$ is the root and internal, while $u_n$ is necessarily the leaf reached by the string $b^{h}$.
Consequently, we have that $r$ is exactly twice the numbers of maximal leaf runs in $\BWT(\mathcal{T})$.
We now show that every maximal run of leaves contains exactly two nodes.
Consider the leaf $u_i$ reached by the string $a\alpha$ with $\alpha\in\Sigma^{h-1}$, clearly the next node $u_{i+1}$ is the leaf reached by the string $b\alpha$.
We can also observe that the node $u_{i-1}$ is necessarily the one reached by $\alpha$, which is internal since $\alpha \in \Sigma^{h-1}$. 
The string reaching node $u_{i+2}$ is $b\alpha[j+1..h-1]$ where $j$ is the position of the leftmost $a$ in $\alpha$.
Note that if $j$ does not exist then $\alpha=b^{h-1}$ and therefore $u_{i+1}$ is $u_n$, otherwise $b\alpha[j+1..h-1]$ has length at most $h-1$ and therefore $u_{i+2}$ is an internal node.
Since the number of leaves is $(n + 1)/2$, the number of leaf runs is $(n + 1)/4$ and therefore $r=\Theta(n)$.
\end{proof}

We observe that also this result is consistent with a known result in the realm of strings.
Indeed, it is known that the binary De Bruijn sequences $S$ have a number of BWT runs equal to $\Theta(n)$, where $n$ is length of the string~\cite{NavarroRipetitivenessMeasures, CompositeRepetitionAware}.
Note that since both $r \leq n\mathcal H_k(S) + \sigma^k$ and $n\mathcal{H}_0(S) \leq n$ hold, this in turn implies that $r = \Theta(n\mathcal H_0(\mathcal T)) = \Theta(n)$ holds for this family of strings.

\subsection{Comparison with the trie r-index}

We now compare the $r$-index for tries proposed by Prezza~\cite{rindexTrie} with those discussed in Theorems~\ref{theorem: trieHk id} and~\ref{theorem: trieHk fid}.
In this subsection, we assume that the trie alphabet is effective, namely that every character of the alphabet labels at least an edge of the trie. 
This implies that both $\sigma-1 \leq r$ and $\sigma \leq n$ hold. Moreover, given a string $S\in \Sigma^*$ and a symbol $c\in \Sigma$ we consider the queries $rank_c(i,S)$ returning the number of occurrences of $c$ in $S$ up to position $i$ included, and $select_c(i,S)$ returning the position of the $i$-th left-to-right occurrence of $c$ in $S$.
The $r$-index for tries, is a generalization of the famous $r$-index for strings proposed by Gagie et al.~\cite{rindexString}. 
The trie $r$-index occupies $O(r \log n) + o(n)$\footnote{This is the space usage to represent the trie and support the operations considered in this section} bits, where $r$ is the number of \emph{XBWT runs} previously defined in this section.
Concerning the space usage, despite the upper-bound of Corollary~\ref{corollary:rHk}, we observe that due to Proposition~\ref{prop: family of tries r versus entropy}, the $r$-index of Prezza may use \emph{asymptotically} more space than the data structures of Theorems~\ref{theorem: trieHk id} and~\ref{theorem: trieHk fid} due to the $\log n$ multiplicative factor in the space occupation of the $r$-index.
We now recall some operations that are supported by the trie $r$-index in $O(r \log n ) + o(n)$ bits of space~\cite{rindexTrie}.
The trie $r$-index supports $\countop(\mathcal T, p)$ queries, by means of computing the co-lexicographic range $[i,j]$ of nodes reached by $p\in \Sigma^m$, in $O(m \log \log \sigma)$ time, and that this time can be reduced to $O(m)$ in the case of poly-logarithmic alphabets. 
Indeed, the operation described in~\cite[Lemma 4.5]{rindexTrie} also includes reporting the pre-order index of first node in the interval $[i,j]$.
However, if we are only interested in computing $i$ and $j$, then the additional succinct trie topology is not needed, and furthermore the $O(m\log \sigma)$ time complexity, reported in the article of Prezza~\cite{rindexTrie}, can be easily improved as follows.
Executing a forward step essentially boils down to count how many times a symbol $c\in \Sigma$ occurs in the sequence of sets $out(1),\ldots,out(i)$\footnote{The possibility of improving the running time of this operation was already mentioned by Prezza in the original article~\cite{rindexTrie}} for any $1\leq i \leq n$ (see Section~\ref{sec: fmindex}).
In~\cite[Lemma 4.5]{rindexTrie} Prezza showed how this operation can be reduced to answering a constant number of rank and select queries within $O(r\log n)+o(n)$ bits of space over 1)~a compressed bitvector $B$ marking the first node in every XBWT block, and 2)~a sequence $S'$ of length $n'=O(r)$, over a new alphabet $\Sigma'$ of size $\sigma'=O(\sigma)$. 
In his solution, $B$ is stored with a FID representation and thus supports both rank and select queries in $O(1)$ time.
On the other hand, the string $S'$ is represented as a wavelet tree~\cite{wavelettree,NavaWT}, which can answer these queries in $O(\log \sigma)$ time and represents the bottleneck of the $\countop(\mathcal T, p)$ operation. 
The time to execute $\countop(\mathcal T, p)$ can be easily improved to $O(m\log\log \sigma)$ time by representing $S'$ using the solution of~\cite[Theorem 5.2]{optSequenceRankSelect}. 
Indeed, in this way we can answer rank and select queries in $O(\log\log\sigma)$ and $O(1)$ time respectively, while using $n'\mathcal{H}_0(S')+o(n'\mathcal{H}_0(S'))+o(n')\subseteq O(r\log n)$ bits of space, since $n'=O(r)$ and $\mathcal{H}_0(S')\leq \log \sigma'=O(\log n)$.
We observe that this solution requires that $\sigma'\leq n'$.
To ensure this, we can easily make the considered alphabet effective by storing a bitvector $I$ of $\sigma'=O(r)$ positions such that $I[i] = 1$, if and only if the $i$-th character of $\Sigma'$ appears in $S'$. Using $I$ and the previous data structure we can trivially execute rank and select for all the symbols of $\Sigma'$ in the claimed time and space.
Instead, when $\sigma = O(polylog(n))$, using the solutions described in~\cite[Theorem 4.1]{optSequenceRankSelect} or the one in~\cite[Theorem 3.2]{FullTextIndexes}, we can represent $S'$ within $O(r\log n)$ bits of space and answer both rank and select queries in $O(1)$ time and therefore a $\countop(\mathcal T, p)$ query in $O(m)$ total time.
Using these representations, the $r$-index can also execute $child(i,c)$ operations in $O(\log\log \sigma)$ and $O(1)$ time, respectively, by executing a single forward step from position $i$ with symbol $c$ and then checking whether $j<i$ holds or not.
Moreover, the $r$-index can retrieve the character labeling the $k$-th edge outgoing from a node $u$, given its co-lexicographic rank $i$, in $O(\log^2 \sigma)$ time~\cite[Section 4]{rindexTrie}.
This, combined with the fact it can execute child queries in $O(\log \log \sigma)$, implies that it can also access the $k$-th child of any node in $O(\log^2 \sigma)$ time.
On the other hand, we recall that the solutions considered in Theorems~\ref{theorem: trieHk id} and~\ref{theorem: trieHk fid} support $child(i,c)$ operations in $O(1)$ time, $\countop(\mathcal T, p)$ operations in $O(m(\log \sigma+ \log \log n))$ time, for alphabets of size $\sigma \leq n^\varepsilon$ with $0\leq \epsilon <1$, and $O(m)$ time when $\sigma = O(polylog(n))$.
Furthermore, they can access the $k$-th child of a node in $O(\sigma)$ as discussed in Subsection~\ref{subsec: other op}.

\section{Conclusions}\label{sec: conclusions}

In this paper, we considered the problem of measuring the amount of information within tries.
We started by giving an alternative proof for the closed formula counting the number of tries with a given symbol distribution.
We used this formula to propose a new worst-case entropy $\mathcal H^{\wc}$ for tries, which, by exploiting the information concerning this label distribution, refines the standard worst-case entropy $\mathcal{C}(n, \sigma)$ which considers the tries with $n$ nodes over an alphabet of size $\sigma$.
After that, we introduced an empirical entropy $\mathcal H_k$, specifically designed for tries, which considers the symbols statistics, refined according to the context of the nodes.
Throughout the article we showed that many known results for strings, naturally extends to our trie entropies.
Indeed, we showed that the relation between $n\mathcal H_0$ and $\mathcal H^{\wc}$ is analogous to the one between the $0$-th order empirical entropy for strings and the worst-case string entropy.
Then, we showed that the renowned compression algorithm arithmetic coding can be extended to tries for the same purpose, that is to store the input trie within $n\mathcal H_k(\mathcal T) + 2$ bits of space, plus the space required to save the (conditional) probabilities.
We also compared our empirical trie entropy with the tree label entropy $\mathcal{H}^{label}_k$ of Ferragina et al.~\cite{xBWTconf} which measures the compressibility of the labels of an ordered tree.
We showed that, for the specific case of tries, the inequality $n\mathcal{H}_k(\mathcal{T})\leq (n-1)\mathcal{H}^{label}_k(\mathcal{T})+1.443n$ holds for every trie $\mathcal T$.
Interestingly, the $1.443n$ term is smaller than the $2n - \Theta(\log n)$ additional bits required in the worst case to store the trie topology, as an unlabeled ordered tree, separately from its labels. 
Furthermore, we showed that in some cases $n\mathcal{H}_k(\mathcal{T})$ can be asymptotically smaller than the label contribution $(n-1)\mathcal{H}^{label}_k(\mathcal{T})$ alone.
In the remaining part of the article we proved the existence of representations for the XBWT~\cite{xBWTjournal} of a trie, that can be stored within $n\mathcal H_k(\mathcal T) + o(n)$ bits of space, for every $k$ sufficiently small simultaneously.
Note that also in this case there exists a similar result for the case of strings, since the Burrows-Wheeler transform of a string $S$ can be compressed to $\lvert S \rvert \mathcal H_k(S)$ bits, plus other terms, of space~\cite{FullTextIndexes}, with $\mathcal H_k(S)$ being the empirical $k$-th order (string) entropy of $S$.
We also showed that these representations can be regarded as a natural extension of the FM-index~\cite{FMI-Journal} from strings to tries, as they support count queries and other operations directly on the entropy-compressed XBWT.
Finally, we compared the number $r$ of XBWT runs~\cite{rindexTrie} and our empirical entropy and we showed that their relation is similar to the one known for strings.
Indeed, $r$ is always upper-bounded by $n\mathcal H_k(\mathcal T) + \sigma^{k+1}$ and for some family of tries the relation $r = \Theta(\mathcal H_0(\mathcal T)) = \Theta(n)$ holds.

% the style for the references in the bibliography
\bibliographystyle{plainurl}
%\bibliography{./references}

\end{document}